\documentclass[manyauthors]{fundam}

\publyear{22}
\papernumber{2102}
\volume{185}
\issue{1}

\usepackage{url} 
\usepackage[ruled,lined]{algorithm2e}
\usepackage{graphicx}
\usepackage{tikz}

\usetikzlibrary{automata, positioning, arrows}

\usepackage{xspace}
\usepackage{hyperref}
\usepackage{mwe}
\usepackage{wrapfig}
\usepackage{paralist}

\usepackage{todonotes}

\newcommand{\nPN}{$\nu$PN\xspace}
\newcommand{\nPNs}{$\nu$PNs\xspace}

\usepackage{listings}
\usepackage{comment}

\todostyle{fdc}{color=green}
\todostyle{fdcin}{color=green,inline}
\todostyle{tp}{color=yellow}
\todostyle{tpin}{color=yellow,inline}

\usepackage{amsfonts}
\usepackage{amssymb}
\usepackage{caption}
\usepackage{subcaption}
\usepackage{multirow}

\usepackage{thm-restate}
\usepackage{thmtools}
\usepackage{empheq}
\usepackage[inline, shortlabels]{enumitem}
\usepackage{cleveref}

\newcommand{\multiset}[1]{\boldsymbol{#1}}

\newcommand{\os}{\ensuremath{\mathfrak{E}}}
\newcommand{\tup}[1]{\langle#1\rangle}

\newcommand{\supp}{\mathit{supp}}

\newcommand{\sqleq}{\sqsubseteq}

\newcommand{\support}[1]{\texttt{Supp}(#1)}

\newcommand{\prefun}{{\ensuremath{\tt pre}}}
\newcommand{\postfun}{{\ensuremath{\tt post}}}

\usepackage{tikz}
\usetikzlibrary{patterns,shapes,arrows,calc,fit,arrows.meta,decorations.pathmorphing,arrows.meta,decorations.pathreplacing}
\usetikzlibrary{arrows,petri,automata,positioning}
\tikzstyle{place}=[circle,minimum height=6mm,draw]
\tikzstyle{transvert}=[rectangle,minimum width=2mm, minimum height=8mm,draw]
\tikzstyle{transhor}=[rectangle,minimum width=8mm, minimum height=2mm,draw]
\tikzset{>={Stealth[scale=1.2]}}
\tikzstyle{empty}=[circle,minimum height=6mm]

\tikzset{Rightarrow/.style={double equal sign distance,>={Implies},->},
myDouble/.style={Rightarrow,double},
dashDouble/.style={Rightarrow,double,dashed},
triple/.style={-,preaction={draw,Rightarrow}},
quadruple/.style={preaction={draw,Rightarrow,shorten >=0pt},shorten >=1pt,-,double,double
distance=0.2pt}}

\newcommand{\C}{\mathcal{C}}
\newcommand{\D}{\mathcal{D}}
\newcommand{\E}{\mathfrak{E}}
\newcommand{\F}{\mathcal{F}}

\newcommand{\K}{\mathcal{K}}
\newcommand{\M}{\mathcal{M}}
\newcommand{\N}{\mathcal{N}}

\newcommand{\R}{\mathcal{R}}
\newcommand{\T}{\mathcal{T}}

\newcommand{\X}{\mathcal{X}}

\newcommand{\fmset}[1]{\{\!\{#1\}\!\}}
\newcommand{\nestTok}{\T}

\renewcommand{\blacksquare}{\blacktriangle}

\newcommand*{\defeq}{\stackrel{\text{def}}{=}}
\newcommand{\var}{\text{Var}}
\newcommand{\out}{out_{\Upsilon}(t)}
\newcommand{\abs}[1]{\lvert#1\rvert}
\newcommand{\GnPN}{c-\nPN}
\newcommand{\GnPNs}{c-\nPNs}
\newcommand{\rnPN}{r-\nPN}

\tikzstyle{triangle}=[draw, regular polygon, regular polygon sides=3]
\tikzstyle{pentagon}=[draw, regular polygon, regular polygon sides=5,minimum height=6mm,]
\tikzset{>={Stealth[scale=1.2]}}

\makeatletter
\def\dasharrowfill@#1#2#3#4{
        $\m@th
        \thickmuskip0mu
        \medmuskip\thickmuskip
        \thinmuskip\thickmuskip
        \relax
        #4#1\mkern2mu
        \xleaders\hbox{$#4\mkern2mu#2\mkern2mu$}\hfill
        \mkern2mu
        #3$
}

\def\dashleftarrowfill@{\dasharrowfill@\leftarrow\relbar\relbar}
\def\dashrightarrowfill@{\dasharrowfill@\relbar\relbar\rightarrow}
\def\dashleftrightarrowfill@{\dasharrowfill@\leftarrow\relbar\rightarrow}
\def\dashLeftarrowfill@{\dasharrowfill@\Leftarrow\Relbar\Relbar}
\def\dashRightarrowfill@{\dasharrowfill@\Relbar\Relbar\Rightarrow}
\def\dashLeftrightarrowfill@{\dasharrowfill@\Leftarrow\Relbar\Rightarrow}

\providecommand*\xdashleftarrow[2][]{%
  \ext@arrow 0055{\dashleftarrowfill@}{#1}{#2}}
\providecommand*\xdashrightarrow[2][]{%
  \ext@arrow 0055{\dashrightarrowfill@}{#1}{#2}}
\providecommand*\xdashleftrightarrow[2][]{%
  \ext@arrow 0055{\dashleftrightarrowfill@}{#1}{#2}}
\providecommand*\xdashLeftarrow[2][]{%
  \ext@arrow 0055{\dashLeftarrowfill@}{#1}{#2}}
\providecommand*\xdashRightarrow[2][]{%
  \ext@arrow 0055{\dashRightarrowfill@}{#1}{#2}}
\providecommand*\xdashLeftrightarrow[2][]{
  \ext@arrow 0055{\dashLeftrightarrowfill@}{#1}{#2}}
\makeatother

\newcommand{\Id}{\mathrm{Id}}

\newcommand{\pactive}{p_{\text{active}}}

\newcommand{\specialtrans}{T_{spl}}

\newcommand{\pinit}[1]{p^{fire}_{#1}}
\newcommand{\pfire}[1]{p^{rename}_{#1}}

\newcommand{\start}{\mathit{start}}
\newcommand{\finish}{\mathit{finish}}
\newcommand{\norm}[1]{|#1|}

\makeatletter
\newenvironment{lemmanum}[1]{
  \medskip
  \noindent
  {\bf Lemma~#1.} 
}{
  \rm
  \medskip
}
\makeatother

\begin{document}

\title{The Complexity of Coverability-Like Problems in Elementary Object Systems: Data-Nets to the Rescue}


\author{Francesco Di Cosmo\corresponding\\
Free University of Bozen-Bolzano, Italy\\
frdicosmo@unibz.it
\and Soumodev Mal\\
Chennai Mathematical Institute, India\\
soumodevmal@cmi.ac.in
\and Tephilla Prince\\
IIT Dharwad, India\\
tephilla.prince.18@iitdh.ac.in
} 

\maketitle

\runninghead{F. Di Cosmo, S. Mal, T. Prince}{The Complexity of Coverability-Like Problems in EOSs}

\begin{abstract}
  Elementary Object Systems (EOSs) are a model in the nets-within-nets (NWNs) paradigm, where tokens in turn can host standard Petri nets. We study the complexity of coverability-like problems, including termination and boundedness, over EOSs. Since coverability and boundedness are undecidable in general on EOSs, we focus on the relevant fragment of conservative EOSs (cEOSs). Our technique interprets cEOSs into the framework of data nets, whose tokens carry data from an infinite domain, thus bridging the nesting and the data-aware paradigms. Specifically, we show that cEOS coverability-like problems are equivalent to the coverability-like problems over an interesting fragment, called channel-\nPNs (\GnPNs), of data nets that extends \nPN (featuring globally fresh name creation) with restricted forms of transfers with renaming. \GnPNs remain less expressive than Unordered Data Nets, which feature lossy name creation as well as powerful forms of whole-place operations and broadcasts.  These reductions allow us to analyze cEOS coverability taking advantage of known results on data nets. We conclude that the complexity of cEOS coverability is double-Ackermanian, {$\F_{\omega 2}$-complete}, while termination and boundedness are non-primitive recursive.
 
\end{abstract}

\begin{keywords}
Data nets, Nets-within-Nets, Coverability, Hyper-Ackermannian problems, Fast-growing complexity classes
\end{keywords}

\section{Introduction}
Recent works have studied the Nets Within Nets (NWN) paradigm~\cite{DBLP:conf/apn/Kohler-Bussmeier23a,OurPNSE24,DBLP:conf/ac/Valk03}, i.e., Petri Nets (PNs) whose tokens in turn carry PNs, as a model for the robustness of multiagent systems against agent breakdowns and, more generally, agent imperfections modeled as token losses. These works focus on the reachability/coverability problems of Elementary Object Systems (EOS), i.e., NWNs where there is only one level of nesting. Other forms of NWNs can be found in~\cite{DBLP:conf/ac/Valk03,kohler-busmeier_survey_2014,DBLP:journals/topnoc/Kohler-BussmeierR23,DBLP:journals/fuin/Lomazova00,DBLP:conf/ershov/LomazovaS99}. Out of the several combinations of problem type (reachability and coverability), lossiness degree (none, finite, unbounded number of token losses), and level (at the outer, nested, or both levels), only reachability/coverability of EOSs under an unbounded amount of lossiness at both levels is decidable~\cite{OurPNSE24}. The picture is moderately more optimistic when the constraint of conservativity is applied. In EOSs, each place can host tokens of a fixed type; in a conservative EOS (cEOS), if a transition consumes an object of a given type, at least one object of the same type must be produced, i.e., the set of types available in the net is conserved. 
When subjected to lossiness at both levels, the cEOS reachability/coverability problem is equivalent to perfect cEOS coverability (see~\cite{OurArxiv2}), which is decidable. Instead, perfect cEOS reachability is known to be undecidable~\cite{kohler-busmeier_survey_2014}. Hence, the decidability boundary of reachability/coverability of lossy/perfect EOS/cEOS is fully charted~\cite{OurPNSE24}.
However, the precise complexity class of lossy EOS reachability and perfect cEOS coverability is unknown.

In this paper, we study the complexity of cEOS coverability-like problems, specifically coverability, termination (whether there is an infinte run from an initial configuration), and boundedness (whether the set of reachable configurations from a fixed initial one is finite). It is well known that these problems on PNs share the same complexity, while on some extension, like reset PN, they do not~\cite{SchnoeAckHard10}. It is, thus, unclear what the picture is on cEOSs.

Instead of directly attacking this problem, for example by exploiting techniques for complexity over Well-Structured Transition Systems~\cite{finkel_well-structured_2001,Rosa-Velardo17,MultiplyRecursive,schmitz2012algorithmic}, we bridge the nesting paradigm with the apparently orthogonal extension of Petri nets with data. In fact, while the complexity of verification problems for NWNs have not been well-studied, many results about the complexity of verification for several data extensions of PNs~\cite{LazicNORW08}, whose tokens carry data from an infinite domain, are available in the literature (see Fig.1 in~\cite{LazicS16}).
For \nPN~\cite{DBLP:journals/tcs/Rosa-VelardoF11}
the coverability problem is double-Ackermannian, \(F_{\omega 2}\)-complete~\cite{LazicS16}. 
For Unordered Data Nets (UDNs), the coverability is hyper-Ackermannian, \(F_{\omega^\omega}\)-complete~\cite{Rosa-Velardo17}. Ordered data nets and ordered data Petri nets have both \(F_{\omega^{\omega^\omega}}\)-complete coverability~\cite{Haddad2012}.

Our main result is that the complexity of cEOS coverability-like problems matches the complexity of the corresponding \nPN problems. To do so, we show that nesting can be naturally captured by whole place operations on top of \nPNs and that, as long as only coverability-like problems are of interest, such whole place operations can be abstracted away, resulting in a polynomial reduction from cEOS to \nPN. In turn, tokens labeled by data from an infinite domain, typical of \nPNs, can be regarded as some form of nested nets moving in a cEOS; this provides the opposite reduction from \nPN to cEOS. This paper is based on \cite{OurArxiv2,OurRP25}, preliminary works where we introduced the ideas behind our reductions for coverability only. Compared to those, this work additionally provides: 
\begin{enumerate}
\item A novel example about how the nesting-related semantic features of the cEOS model can be used to capture real scenarios,
\item A revised, streamlined, and systematic technical development, which also highlights the impact of whole place operations on top of \nPNs,
\item The complete proof of the $\F_{\omega2}$-completeness of cEOS coverability, which did not previously appear, and
\item Addressed the problems of termination and boundedness.
\end{enumerate}

The paper is structured as follows.
Sec.~\ref{sec:prelims} provides preliminaries on problems on Configuration Graphs, PNs, \nPNs, and cEOSs. Sec.~\ref{sec:extend} introduces \GnPNs, a form of \nPN with some whole place operations in the syle of UDNs. 
Sec.~\ref{sec:phaseencoding} formalizes the type of encoding we exploit.
Sec.~\ref{sec:fromNutoCEOS},  Sec.~\ref{sec:fromEOS}, and Sec.~\ref{extendedNuPntoNuPn} provide the encoding, and related problem reductions, respectively from \nPNs to cEOSs, from cEOSs to \GnPNs, and from \GnPNs to \nPNs (see Fig.~\ref{fig:reductions}).
 
Finally, Sec.~\ref{sec:conclusions}
discusses conclusions.
\begin{figure}[t]
    \centering
    \begin{tikzpicture}
        \node (n) at(-2,0) {\nPNs};
        \node (c) at(2,0){cEOSs};
        \node (g) at(0,-1){\GnPNs};
        \draw [->](n)-- node [above]{Th.\ref{thm:npnCEOS}} (c);
        \draw [->](c)-- node [below right]{Th.\ref{thm:ceosGnPN}}(g);
        \draw [->](g)-- node [below left]{Th.\ref{thm:brokenChanneltoNu}}(n);
    \end{tikzpicture}
    \caption{Overview of the reductions between \nPNs, cEOSs, and \GnPNs.}
    \label{fig:reductions}
\end{figure}

\section{Preliminaries}\label{sec:prelims}
\paragraph*{Configuration Graphs}
\begin{definition}
    A configuration graph (CG) $\C=(C,E)$ is a directed graph where $C$ is a possibly infinite set of configurations and $E$ is a step relation. The usual notion of step sequence (run) applies.
    A \textit{quasi-ordered normed CG} (qonCG) is a tuple $(CG,\leq,\norm{\bullet})$ where $CG=(C,E)$ is a CG, $\leq$ is a quasi-order on $C$, and $\norm{\bullet}:C\longrightarrow \mathbb{N}$ is a function, called norm, which gives the size of the configurations.
\end{definition}
We denote configuration graphs by $\C$, sets of configurations by $C$, and configurations by $K$ or $H$, possibly with indexes, superscripts, etc.
In general, given a CG, we can define the problems of coverability, termination, and boundedness, in an abstract way:
\begin{definition}
    Given a family $F$ of qonCGs, we define the following problems:
    \begin{description}
        \item[$F$-coverability] The inputs are tuples $(\C,K_0,K_1)$ where $\C=(C,E)\in F$ is a CG, $K_0\in C$ is an initial configuration, and $K_1\in C$ is a target configuration; the output is whether there is some configuration $K_2\in C$ such that $K_0\rightarrow^\ast K_2\geq K_1$.
        \item[$F$-termination] The inputs are tuples $(\C,K)$ where $\C=(C,E)\in F$ is a CG, $K\in C$ is an initial configuration; the output is whether there is an infinite run from $K$ in $\C$.
        \item[$F$-boundedness] The inputs are tuples $(\C,K)$ where $\C=(C,E)\in F$ is a CG, $K\in C$ is an initial configuration; the output is whether the set $\{K'\mid K\rightarrow^* K'\}$ of configurations reachable from $\K$ in $\C$ is finite.
    \end{description}
    Given two instances $I_1=(\C_1,K_0^1,\dots,K_n^1)$ of an $F_1$-problem and $I_2=(\C_2,K_0^2,\dots,K_n^2)$ of the corresponding $F_2$-problem, we say that $I_1$ and $I_2$ are \textit{equivalent} if $I_1$ is a yes-instance of the $F_1$-problem iff $I_2$ is a yes-instance of the $F_2$-problem.
\end{definition}

\paragraph*{Multisets.}
Given \(n \in \mathbb{N}\), \([n]=\{1,2,\cdots,n\}\).
A \emph{multiset} $\multiset{m}$ on a set $D$ is a mapping $\multiset{m}:D\rightarrow \mathbb{N}$. The \emph{support} of $\multiset{m}$ is the set $\support{m} = \{i \mid \multiset{m}(i) > 0\}$. The multiset $\multiset{m}$ is finite if $\support{\multiset{m}}$ is finite. The family of all multisets over $D$ is denoted by $D^\oplus$. We denote a finite multiset $\multiset{m}$ by enumerating the elements $d\in\support{\multiset{m}}$ exactly $\multiset{m}(d)$ times in between $\{\{$ and $\}\}$.
The empty multiset $\fmset{}$
is also denoted by $\emptyset$. The empty multiset on the empty domain is denoted by $\varepsilon$.
Given two multisets $\multiset{m_1}$ and $\multiset{m_2}$ on $D$, we define the multisets $\multiset{m_1} + \multiset{m_2}$ and $\multiset{m_1} - \multiset{m_2}$ on $D$ as follows:
$(\multiset{m_1} + \multiset{m_2})(d) = \multiset{m_1}(d)  + \multiset{m_2}(d)$ and
$(\multiset{m_1} - \multiset{m_2})(d) = \max(\multiset{m_1}(d)  - \multiset{m_2}(d),0)$, for each $d\in D$.
We write $\multiset{m_1} \sqleq \multiset{m_2}$ if, for each $d\in D$, we have $\multiset{m_1}(d)  \leq \multiset{m_2}(d)$. The cardinality $\norm{m}$ of a multiset $m$ is $\sum_{d\in\support{m}} m(d)$.

\paragraph*{Petri Nets.}
A PN~\cite{murata89} is a tuple $N=(P,T,F)$, where $P$ is a finite \textit{place set}, $T$ is a finite \textit{transition set}, and $F: (P\times T) \cup (T\times P) \longrightarrow \mathbb{N}$ is a \textit{flow function}. We set the functions of \textit{pre-} and \textit{post-conditions} ${\prefun}_N,\postfun_N : T\rightarrow ( P\rightarrow \mathbb{N})$ where $\prefun_N(t)(p)=F(p,t)$ and $\postfun_N(t)(p)=F(t,p)$. 
A \textit{marking} $\mu$ is a finite multiset on $P$.
A 
$t\in T$ is enabled on $\mu$ if, for each $p\in P$, we have $\prefun_N(t)(p)\leq \mu(p)$. Its firing results in the marking $\mu'$ such that $\mu'(p)=\mu(p)-\prefun_N(t)(p)+\postfun_N(t)(p)$, for each $p\in P$. 
We always assume that $P$ is ordered. Thus, we can denote a marking $m$ as a multiset or as a vector $\tup{\multiset{m}(p)}_{p\in P}$.
We also work with the special \emph{empty PN} $\blacksquare=(\emptyset,\emptyset,\emptyset)$, whose only marking is $\varepsilon$. Places are depicted by circles, transitions by rectangles, markings by multisets of black tokens $\bullet$ in the respective places, and the flow by labeled arrows (see Fig.~\ref{fig:PNInit} and Fig.~\ref{fig:PNPost}).

The qonCG of a $N$ has markings as configurations, transition firings as steps, component-wise $\leq$ as quasi-order, and the token count function as norm.
\begin{definition}[PN qonCG]
    Given a PN $N=(P,T,F)$, the \textit{qonCG of $N$} is the tuple $\C_N=((N^{|P|},E),\sqleq,\norm{\bullet}_N)$ where, for $\mu_1,\mu_2\in N^{|P|}$, $E(\mu_1,\mu_2)$ iff there is some transition firing such that $\mu_1\rightarrow \mu_2$, and $\norm{\mu_1}_N=\norm{\mu_1}$, the cardinality of $\mu$ as a multiset.
\end{definition}

\paragraph*{\texorpdfstring{\nPN}{nuPN}.}
A \nPN is a PN where each token is associated with a data value that comes from a countable domain. The flow function is extended to check equality/inequality of data values using a finite set of variables as labels.
We recall \nPN as in~\cite{LazicS16}. Let $\Upsilon$ and $\X$ be disjoint sets of \textit{fresh} and \textit{standard variables}, denoted by $x_i$ and $\nu_i$ for $i\in\mathbb{N}$, respectively. 
Let $Vars\defeq \X \bigcup \Upsilon$.
\begin{definition}\label{dfn:nuPN}
A \nPN is a PN $\D=\tup{P,T,F}$ with the provision that $F:(P\times T) \bigcup (T \times P) \to Vars^{\oplus}$ and, for each $t\in T$, $\Upsilon\cap\prefun(t)=\emptyset$ and $\postfun(t)\setminus\Upsilon\subseteq\prefun(t)$, where $\prefun(t)=\bigcup_{p\in P} \supp(F(p,t))$ and $\postfun(t)=\bigcup_{p\in P} \supp(F(t,p))$.
\end{definition}
For each $t\in T$, we set $\var(t)=\prefun(t)\cup\postfun(t)$. We require that, for each transition $t$, $\var(t)=\{x_1,\dots,x_{\norm{\var(t)\cap \chi}}\} \cup \{\nu_1,\dots,\nu_{\norm{\var(t)\cap \Upsilon}}\}$. In this section, we work with a fixed arbitrary \nPN $\D=\tup{P,T,F}$ where $P=\{p_1,\dots,p_\ell\}$.
The flow $F_x$ of a variable $x\in \var$ is $F_x:(P\times T) \bigcup (T \times P) \to \mathbb{N}$ where $F_x (p,t)\defeq F(p,t)(x)$ and $F_x(t,p)\defeq F(t,p)(x)$. We denote 
$\tup{F_x(p_1,t),\dots,F_x(p_\ell,t)}\in \mathbb{N}^\ell$ by $F_x(P,t)$ and 
$\tup{F_x(t,p_1),\dots,F_x(t,p_\ell)}\in \mathbb{N}^\ell$ by $F_x(t,P)$.
The set of \textit{configurations} of $\D$ is the set $(\mathbb{N}^P)^\oplus$. For each $t\in T$, let
$
    \out\defeq \sum_{\nu\in \Upsilon(t)} \fmset{F_\nu(t,P)}
$.
Given a configuration $M=\fmset{m_1,\dots,m_{\abs{M}}}$, a transition $t$ is fireable from $M$ if there is a function $e:\X(t)\longrightarrow\{1,\dots,\abs{M}\}$, called \textit{mode}, such that, for each $x\in\X(t)$, $F_x(P,t)\leq m_{e(x)}$. We write $M\rightarrow^{t,e} M'$ if, for some configuration $M''$, 
$M=M''+\sum_{x\in\X(t)}\fmset{m_{e(x)}}
$ and $
M'=M''+\out+\sum_{x\in\X(t)}\fmset{m_{e(x)}'}
$
where, for $x\in\X(t)$, $m'_{e(x)}=m_{e(x)}-F_x(P,t)+F_x(t,P)$. The firing of $t$ with mode $e$ over $M$ applies $F_x$, for each $x\in \X(t)$, to a distinct tuple $m\in M$ such that $m_{e(x)}\geq F_x(P,t)$ and replaces it with $m'_{e(x)}$. It also adds the new markings $F_{\nu}(t,P)$ for $\nu\in\Upsilon$. 
Each tuple $m$ in a configuration $M$ is depicted similarly to PN, but using a dedicated symbol in place of $\bullet$ (see Fig.~\ref{fig:NuPNInit} and Fig.~\ref{fig:NuPNPost}). 

The quasi order among \nPN configurations is a generalization of the order $\sqleq$ among PN markings. Specifically, given two configurations $M_1=\fmset{m_1,\dots,m_n}$ and $M_2=\fmset{k_1,\dots,k_\ell}$, we write $M_1\preceq M_2$
if there is an injective function $e:\{1,\dots,n\}\rightarrow \{1,\dots,\ell\}$ such that $m_i\sqleq k_{e(i)}$ for each $i\in\{1,\dots,n\}$. Then, the qonCG of a \nPN is defined analogously to PNs, where the norm is, again, the token count function.
\begin{definition}[\nPN qonCG]
    Given a \nPN $\D=\tup{P,T,F}$, the \textit{qonCG of $\D$} is the tuple $\C_{\D}=(((\mathbb{N}^P)^\oplus,E),\preceq,\norm{\bullet}_{\D})$ where, for $M_1,M_2\in (\mathbb{N}^P)^\oplus$, 
    \begin{enumerate}
        \item $E(M_1,M_2)$ iff there is some transition firing such that $M_1\rightarrow M_2$,    
    and 
    \item $\norm{M_1}_{\D}=\sum_{m\in\support{M_1}}(M(m)\norm{m})$,
    \end{enumerate}
\end{definition}
\nPN-coverability is is known to be $\F_{\omega2}$-complete\cite{LazicS16}, while \nPN-termination and \nPN-boundedness are known to be non-primitive recursive~\cite{DBLP:journals/tcs/Rosa-VelardoF11}.

\begin{figure}[t]
    \centering
    \begin{subfigure}[b]{.24\textwidth}\centering
    \scalebox{0.7}{
    \begin{tikzpicture}

\node[place,label={[name=p2Lab]left:\scriptsize $p_2$},tokens=1](p2)at (.2,-0.5){};
\node[place,label={[name=p1Lab]left:\scriptsize $p_1$},tokens=4](p1)at (.2,.5){};

\node[transvert] (t)at (1.5,0){};
\node at (t){\scriptsize $t$};

\node[place,label={[name=p3Lab]right:\scriptsize $p_3$},tokens=1](p3)at (2.8,.75){};
\node[place,label={[name=p4Lab]right:\scriptsize $p_4$},tokens=1](p4)at (2.8,0){};
\node[place,label={[name=p5Lab] right:\scriptsize $p_5$},tokens=1](p5)at (2.8,-.75){};

\draw [->] (p1) to node [above, sloped]  (TextNode1) {\scriptsize $2$} (t.north west);

\draw [->] (p2) -- node[above,midway,sloped]{}(t.south west);
\draw [->] (t.north east) -- node[above,midway,sloped] {}(p3);

\draw [->] (t) -- node[above,midway,sloped,pos=.75] {}(p4);
\draw [->,bend right] (t.south east) --node[above,midway,sloped] {\scriptsize 2} (p5);

\end{tikzpicture}
    }
\caption{}
\label{fig:PNInit}        
    \end{subfigure}
\hfill
    \begin{subfigure}[b]{.24\textwidth}\centering
       \scalebox{0.7}{   
   \begin{tikzpicture}

\node[place,label={[name=p2Lab]left:\scriptsize $p_2$}](p2)at (.2,-0.5){};
\node[place,label={[name=p1Lab]left:\scriptsize $p_1$},tokens=2](p1)at (.2,.5){};

\node[transvert] (t)at (1.5,0){};
\node at (t){\scriptsize $t$};

\node[place,label={[name=p3Lab]right:\scriptsize $p_3$},tokens=2](p3)at (2.8,.75){};
\node[place,label={[name=p4Lab]right:\scriptsize $p_4$},tokens=2](p4)at (2.8,0){};
\node[place,label={[name=p5Lab] right:\scriptsize $p_5$},tokens=3](p5)at (2.8,-.75){};

\draw [->] (p1) to node [above, sloped]  (TextNode1) {\scriptsize $2$} (t.north west);

\draw [->] (p2) -- node[above,midway,sloped]{}(t.south west);
\draw [->] (t.north east) -- node[above,midway,sloped] {}(p3);

\draw [->] (t) -- node[above,midway,sloped,pos=.75] {}(p4);
\draw [->,bend right] (t.south east) --node[above,midway,sloped] {\scriptsize 2} (p5);

\end{tikzpicture}
    }     
    \caption{}
\label{fig:PNPost}
    \end{subfigure}
\hfill
    \begin{subfigure}[b]{.24\textwidth}\centering
            \scalebox{0.7}{
    \begin{tikzpicture}

\node[place,label={[name=p2Lab]left:\scriptsize $p_2$}](p2)at (.2,-0.5){};
\node[place,label={[name=p1Lab]left:\scriptsize $p_1$}](p1)at (.2,.5){};

\node[transvert] (t)at (1.5,0){};
\node at (t){\scriptsize $t$};

\node[place,label={[name=p3Lab]right:\scriptsize $p_3$}](p3)at (2.8,.75){};
\node[place,label={[name=p4Lab]right:\scriptsize $p_4$}](p4)at (2.8,0){};
\node[place,label={[name=p5Lab] right:\scriptsize $p_5$}](p5)at (2.8,-.75){};

\draw [->] (p1) to node [above, sloped]  (TextNode1) {\scriptsize $x_1 x_2$} (t.north west);

\draw [->] (p2) -- node[above,midway,sloped] {\scriptsize $x_3$}(t.south west);
\draw [->] (t.north east) -- node[above,midway,sloped] {\scriptsize $x_2$}(p3);

\draw [->] (t) -- node[above,midway,sloped,pos=.75] {\scriptsize $x_3$}(p4);
\draw [->,bend right] (t.south east) --node[above,midway,sloped] {\scriptsize $2\nu_1\nu_2$} (p5);

\node at($(p1)+(0,.2)$){\tiny a a };
\node at($(p1)$){\tiny b b b};
\node at($(p1)-(0,.2)$){\tiny c c};

\node at($(p2)+(0,.2)$){\tiny a a};
\node at($(p2)$){\tiny b b};
\node at($(p2)-(0,.2)$){\tiny c c};

\node at($(p3)$){\tiny a};
\node at($(p4)$){\tiny b};
\node at($(p5)$){\tiny c};
\end{tikzpicture}
    }
    \caption{}
\label{fig:NuPNInit}
    \end{subfigure}
\hfill
    \begin{subfigure}[b]{.24\textwidth}\centering
        \scalebox{0.7}{   
   \begin{tikzpicture}

\node[place,label={[name=p2Lab]left:\scriptsize $p_2$}](p2)at (.2,-0.5){};
\node[place,label={[name=p1Lab]left:\scriptsize $p_1$}](p1)at (.2,.5){};

\node[transvert] (t)at (1.5,0){};
\node at (t){\scriptsize $t$};

\node[place,label={[name=p3Lab]right:\scriptsize $p_3$}](p3)at (2.8,.75){};
\node[place,label={[name=p4Lab]right:\scriptsize $p_4$}](p4)at (2.8,0){};
\node[place,label={[name=p5Lab] right:\scriptsize $p_5$}](p5)at (2.8,-.75){};

\draw [->] (p1) to node [above, sloped]  (TextNode1) {\scriptsize $x_1 x_2$} (t.north west);

\draw [->] (p2) -- node[above,midway,sloped] {\scriptsize $x_3$}(t.south west);
\draw [->] (t.north east) -- node[above,midway,sloped] {\scriptsize $x_2$}(p3);

\draw [->] (t) -- node[above,midway,sloped,pos=.75] {\scriptsize $x_3$}(p4);
\draw [->,bend right] (t.south east) --node[above,midway,sloped] {\scriptsize $2\nu_1\nu_2$} (p5);

\node at($(p1)+(0,.2)$){\tiny a };
\node at($(p1)$){\tiny b b };
\node at($(p1)-(0,.2)$){\tiny c c};

\node at($(p2)+(0,.2)$){\tiny a a};
\node at($(p2)$){\tiny b b};
\node at($(p2)-(0,.2)$){\tiny c };

\node at($(p3)+(0,.1)$){\tiny a};
\node at($(p3)-(0,.1)$){\tiny b};
\node at($(p4)+(0,.1)$){\tiny b};
\node at($(p4)-(0,.1)$){\tiny c};
\node at($(p5)+(0,.2)$){\tiny c};
\node at($(p5)$){\tiny dd};
\node at($(p5)-(0,.2)$){\tiny f};
\end{tikzpicture}
    }     
    \caption{}
\label{fig:NuPNPost}
    \end{subfigure}

    \caption{(\subref{fig:PNInit}) a simple Petri net and (\subref{fig:PNPost}) the resulting marking on firing t. (\subref{fig:NuPNInit}) A $\nu$-net and (\subref{fig:NuPNPost}) the resulting configuration on firing $t$ with mode $e$ instantiating $x_1$ to (the tuple identified by) $a$, $x_2$ to $b$, $x_3$ to $c$, $\nu_1$ to $d$, and $\nu_2$ to $f$.
}
    \label{fig:nupnex}
\end{figure}
\paragraph*{Elementary Object Systems.}\label{sec:eos}

An EOS~\cite{DBLP:conf/ac/Valk03,kohler-busmeier_survey_2014} is a PN where each token is, in turn, a PN. The underlying PN is known as \emph{system net}, and its tokens, which are PNs, are known as \emph{object net}. A firing can happen asynchronously in either the system net or the object net or synchronously in both.

\begin{definition}\label{def:bussy14_eos}
An \emph{EOS} $\os$ is a tuple $\os=\tup{\hat{N},\N,d,\Theta}$ where:
\begin{enumerate}
\item $\hat{N}=\tup{\hat{P},\hat{T},\hat{F}}$ is a PN called \emph{system net}; $\hat{T}$ contains a special set $ID_{\hat{P}}=\{id_p\mid p\in \hat{P}\}\subseteq \hat{T}$ of \emph{idle transitions} such that, for each distinct $p,q\in \hat{P}$, we have $\hat{F}(p,id_p)=\hat{F}(id_p,p)=1$ and $\hat{F}(q,id_p)=\hat{F}(id_p,q)=0$.
\item $\N$ is a finite set of PNs, called \emph{object PNs}, such that $\blacksquare\in\N$ and if $(P_1,T_1,F_1), (P_2,T_2,F_2)\in\N \cup \{\hat{N}\}$, then $P_1\cap P_2=\emptyset$ and $T_1 \cap T_2 = \emptyset$.
\item $d:\hat{P}\rightarrow \N$ is called the \emph{typing function}. 
\item $\Theta$ is a finite \emph{set of events} where each \emph{event} is a pair $\tup{\hat{\tau},\theta}$, where $\hat{\tau}\in \hat{T}$ and  $\theta:\N \rightarrow \bigcup_{(P,T,F)\in\N} T^\oplus$,
    such that $\theta((P,T,F))\in T^\oplus$ for each $(P,T,F)\in\N$ and, if $\hat{\tau}=id_p$, then $\theta(d(p)) \neq \emptyset$.
\end{enumerate}
\end{definition}

A \textit{nested token} is a system net token carrying an internal marking.
\begin{definition}
Let $\os=\tup{\hat{N},\N,d,\Theta}$ be an EOS. The set of \emph{nested tokens} $\nestTok(\os)$ of $\os$ is the set $\bigcup_{(P,T,F)\in\N} (d^{-1}{(P,T,F)}\times P^{\oplus})$. The set of \emph{nested markings} $\M(\E)$ of $\os$ is $\nestTok(\os)^{\oplus}$.
Given $\lambda,\rho\in \M(\E)$, we say that $\lambda$ is a \emph{sub-marking} of $\mu$ if $\lambda \sqleq \mu$.
\end{definition}

\begin{figure}[t]
    \centering
    \begin{subfigure}[b]{.49\textwidth}\centering
    \scalebox{0.7}{
    \begin{tikzpicture}

\node[place,dashed,label={[name=p2Lab]left:\scriptsize $p_2$},tokens=1](sp2)at (.2,-0.5){};
\node[place,label={[name=p1Lab]left:\scriptsize $p_1$},tokens=2](sp1)at (.2,.5){};
\node[transvert] (st)at (1.5,0){\scriptsize $\hat{t}$};
\node[](lab) at (1.5,-.9){\scriptsize $\{\{ t_1,t_2,t_2\}\}$};
\node[place,label={[name=p3Lab]right:\scriptsize $p_3$}](sp3)at (2.8,.75){};
\node[place,dashed,label={[name=p4Lab]right:\scriptsize $p_4$}](sp4)at (2.8,0){};
\node[place,dotted,label={[name=p5Lab] right:\scriptsize $p_5$}](sp5)at (2.8,-.75){};
\draw [->] (sp1) to node [above, sloped]  (TextNode1) {\scriptsize $2$} (st.north west);
\draw [<->] (sp2) -- node[above,midway,sloped]{}(st.south west);
\draw [->] (st.north east) -- node[above,midway,sloped] {}(sp3);
\draw [->] (st) -- node[above,midway,sloped,pos=.75] {}(sp4);
\draw [->,bend right] (st.south east) --node[above,midway,sloped] {} (sp5);

\begin{scope}[xshift=-3cm,yshift=2cm]
    \node[place,tokens=2,label={below right:\scriptsize $q_1$}](p0)at (0,0){};
    \node[transvert] (t)at (1,0){\scriptsize $t_1$};
    \node[place,label={below left:\scriptsize $q_2$}](p1)at (2,0){};
    \draw[->](p0) -- (t);
    \draw[->](t) -- (p1);
    \node[draw=black,fit={(p0)(t)(p1)},label={right:\scriptsize $N_1$}](obj0){};
    \draw[dashed] (obj0) -- (sp1.center);
\end{scope}

\begin{scope}[xshift=0.5cm,yshift=2cm]
    \node[place,tokens=1,label={below right:\scriptsize $q_1$}](p0)at (0,0){};
    \node[transvert] (t)at (1,0){\scriptsize $t_1$};
    \node[place,label={below left:\scriptsize $q_2$}](p1)at (2,0){};
    \draw[->](p0) -- (t);
    \draw[->](t) -- (p1);
    \node[draw=black,fit={(p0)(t)(p1)},label={right:\scriptsize $N_1$}](obj0){};
\draw[dashed] (obj0) -- (sp1.center);
\end{scope}

\begin{scope}[xshift=-3.5cm,yshift=0cm]
   \node[transvert] (t)at (1,0){\scriptsize $t_2$};
    \node[place,label={below left:\scriptsize $r_1$}](p0)at (2,0){};
    \draw[->](t) -- (p0);
    \node[draw=black,fit={(p0)(t)},label={above:\scriptsize $N_2$}](obj0){};
\draw[dashed] (obj0) -- (sp2.center);
\end{scope}
\end{tikzpicture}
    }
\caption{}
\label{fig:EOSInit}
    \end{subfigure}
    
    \hfill
    
    \begin{subfigure}[b]{.49\textwidth}\centering
       \scalebox{0.7}{   
   \begin{tikzpicture}

\node[place,dashed,label={[name=p2Lab]left:\scriptsize $p_2$},tokens=1](sp2)at (.2,-0.5){};
\node[place,label={[name=p1Lab]left:\scriptsize $p_1$}](sp1)at (.2,.5){};
\node[transvert] (st)at (1.5,0){\scriptsize $\hat{t}$};
\node[](lab) at (1.5,-.9){\scriptsize $\{\{ t_1,t_2,t_2\}\}$};
\node[place,label={[name=p3Lab]right:\scriptsize $p_3$},tokens=1](sp3)at (2.8,.75){};
\node[place,dashed,label={[name=p4Lab]right:\scriptsize $p_4$},tokens=1](sp4)at (2.8,0){};
\node[place,dotted,label={[name=p5Lab] right:\scriptsize $p_5$}](sp5)at (2.8,-.75){};
\node at(sp5){$\blacktriangle$};
\draw [->] (sp1) to node [above, sloped]  (TextNode1) {\scriptsize $2$} (st.north west);
\draw [<->] (sp2) -- node[above,midway,sloped]{}(st.south west);
\draw [->] (st.north east) -- node[above,midway,sloped] {}(sp3);
\draw [->] (st) -- node[above,midway,sloped,pos=.75] {}(sp4);
\draw [->,bend right] (st.south east) --node[above,midway,sloped] {} (sp5);

\begin{scope}[xshift=1.5cm,yshift=2cm]
    \node[place,tokens=2,label={below right:\scriptsize $q_1$}](p0)at (0,0){};
    \node[transvert] (t)at (1,0){\scriptsize $t_1$};
    \node[place,tokens=1,label={below left:\scriptsize $q_2$}](p1)at (2,0){};
    \draw[->](p0) -- (t);
    \draw[->](t) -- (p1);
    \node[draw=black,fit={(p0)(t)(p1)},label={right:\scriptsize $N_1$}](obj0){};
\draw[dashed] (obj0) -- (sp3.center);
\end{scope}

\begin{scope}[xshift=-3.5cm,yshift=0cm]
   \node[transvert] (t)at (1,0){\scriptsize $t_2$};
    \node[place,tokens=1,label={below left:\scriptsize $r_1$}](p0)at (2,0){};
    \draw[->](t) -- (p0);
    \node[draw=black,fit={(p0)(t)},label={above:\scriptsize $N_2$}](obj0){};
\draw[dashed] (obj0) -- (sp2.center);
\end{scope}

\begin{scope}[xshift=3.5cm,yshift=0.5cm]
   \node[transvert] (t)at (1,0){\scriptsize $t_2$};
    \node[place,tokens=1,label={below left:\scriptsize $r_1$}](p0)at (2,0){};
    \draw[->](t) -- (p0);
    \node[draw=black,fit={(p0)(t)},label={above:\scriptsize $N_2$}](obj0){};
\draw[dashed] (obj0) -- (sp4.center);
\end{scope}

\end{tikzpicture}
    }
\caption{}
\label{fig:EOSPost}
    \end{subfigure}
    \caption{An EOS depicting the firing of synchronized event $\langle \hat{t},\fmset{ t_1,t_2,t_2}\rangle$ for a given mode $(\lambda,\rho)$, where $\lambda=\langle p_1,\fmset{q_1,q_1}\rangle+\langle p_1,\fmset{q_1}\rangle+\langle p_2,\fmset{}\rangle$, 
    $\rho=\langle p_2,\fmset{r_1}\rangle+\langle p_3,\fmset{q_1,q_1,q_2}\rangle+\langle p_4,\fmset{r_1}\rangle+\langle p_5,\varepsilon\rangle$, $d(p_1)=d(p_3) = N_1$, 
    $d(p_2)=d(p_4) = N_2$,
    $d(p_5)=\blacktriangle$.}
    \label{fig:eosex}
\end{figure}

Nested tokens are depicted via a dashed line from the token in the system net place to an instance of the object type where the internal marking is represented in the standard PN way. However, if the nested token is $\tup{p,\varepsilon}$ where $p$ is of type $\blacksquare$, we represent it with a black-token $\blacksquare$ on $p$. 
Events $\tup{\hat{\tau},\theta}$ are depicted by labeling the system net transition $\hat{\tau}$ by the multiset $\theta$ of object net transitions (see Fig.~\ref{fig:eosex}). If there are several events involving $\hat{\tau}$, then $\hat{\tau}$ has several labels.

\begin{example}
    Fig.~\ref{fig:EOSInit} depicts the EOS \(\os = (\hat{N},\N, d, \Theta)\) where we have
    \begin{itemize}
        \item the system net \(\hat{N} = (\{p_1,p_2,p_3,p_4,p_5\},\{\hat{t}\},\hat{F})\) where \(\hat{F}(p_1,\hat{t}) = 2\) and \(\hat{F}(p_2,\hat{t}) = \hat{F}(\hat{t},p_3) = \hat{F}(\hat{t},p_4) = \hat{F}(\hat{t},p_5) = 1\). Otherwise, $\hat{F}$ is the constant function $\varepsilon$.
        \item the set of object nets \(\N = \{N_1,N_2,\blacktriangle\}\), where \(N_1 = (\{q_1,q_2\},\{t_1\},F_1)\), \(N_2 = (\{r_1\},\{t_2\},F_2)\) where \(F_1(q_1,t_1) = F_1(t_1,q_2) = 1\) and \(F_2(t_2,r_1) = 1\).
        \item the typing function \(d\), where \(d(p_1) = d(p_3) = N_1, d(p_2) = d(p_4) = N_2\), and \(d(p_5) = \blacktriangle\).
        \item the set of events \(\Theta = \{\tup{\hat{t},\fmset{t_1,t_2,t_2}}\}\).
    \end{itemize}
Fig.~\ref{fig:EOSInit} also depicts the nested marking $\tup{p_1,\fmset{q_1,q_1}}+\tup{p_1,\fmset{q_1}}+\tup{p_2,\fmset{}}$.
\end{example}

Intuitively, the firing of an event $e=\tup{\tau,\theta}$ can be characterized as follows: first, merge, type by type, all objects handled by the preconditions of $\tau$
(identified by a mode, as in \nPNs), obtaining merged tuples; second, fire, at the same time, all transitions in $\theta$ on the merged tuples; third, non-deterministically distribute the tokens in the updated tuples among new objects in the places indicated by the post-conditions of $\tau$. Possibly, some of the new objects have empty marking. This process is formalized by the next definitions.

\begin{definition}
Let $\os$ be an EOS $\tup{\hat{N},\N,d,\Theta}$. The \emph{projection operator $\Pi^1$} maps each nested marking $\mu=\sum_{i\in I}\tup{\hat{p}_i,M_i}$ for $\E$ to the PN marking $\sum_{i\in I}\hat{p}_i$ for $\hat{N}$. Given an object net $N\in\N$, the \emph{projection operator $\Pi^2_N$} maps each nested marking $\mu=\sum_{i\in I}\tup{\hat{p}_i,M_i}$ for $\E$ to the PN marking $\sum_{j\in J} M_j$ for ${N}$ where $J=\{i\in I\mid d(\hat{p}_i)=N\}$.
\end{definition}

\begin{example}
    Let $M=\tup{p_1,\fmset{q_1,q_1}}+\tup{p_1,\fmset{q_1}}+\tup{p_2,\varepsilon}$ be the nested marking in Fig.~\ref{fig:EOSInit}. We have:
         $\Pi^1(M)=\fmset{p_1,p_1,p_2}$;
         $\Pi^2_{N_1}(M)=\fmset{q_1,q_1}+\fmset{q_1}$;
         $\Pi^2_{N_2}=\fmset{}$;
         $\Pi^2_{\blacksquare}=\varepsilon$.
\end{example}

We now define the \textit{enabledness condition}.
Set $\prefun_{N}(\theta(N))=\sum_{i\in I}\prefun_N(t_i)$ and $\postfun_{N}(\theta(N))=\sum_{i\in I}\postfun_N(t_i)$ where $(t_i)_{i\in I}$ is an enumeration of $\theta(N)$.

\begin{definition}\label{def:bussy14_enable} 
Let $\os$ be an EOS $\tup{\hat{N},\N,d,\Theta}$. Given an event $e=\tup{\hat{\tau},\theta}\in \Theta$ and markings $\lambda,\rho\in\M(\os)$, the \emph{enabledness condition} $\Phi(\tup{\hat{\tau},\theta},\lambda,\rho)$ holds iff
\begin{align*}
\Pi^1(\lambda)=\prefun_{\hat{N}}(\hat{\tau})\ \land \Pi^1(\rho)=\postfun_{\hat{N}}(\hat{\tau})\ \land
\forall N\in \N,\ \Pi^2_N(\lambda)\geq \prefun_N(\theta(N))\ \land\\
\forall N\in\N,\ \Pi^2_N(\rho)=\Pi^2_N(\lambda)-\prefun_N(\theta(N))+\postfun_N(\theta(N))
\end{align*}
The event $e$ is \emph{enabled with mode $(\lambda,\rho)$ on a marking $\mu$} iff $\Phi(e,\lambda,\rho)$ holds and $\lambda\sqleq \mu$.
Its firing results in the step $\mu\xrightarrow{(e,\lambda,\rho)}\mu-\lambda+\rho$.
\end{definition}

\begin{example}
    Let $\lambda=\langle p_1,\fmset{q_1,q_1}\rangle+\langle p_1,\fmset{q_1}\rangle+\langle p_2,\fmset{}\rangle$ and $\rho=\langle p_2,\fmset{r_1}\rangle+\langle p_3,\fmset{q_1,q_1,q_2}\rangle+\langle p_4,\fmset{r_1}\rangle+\langle p_5,\varepsilon\rangle$. In Fig.~\ref{fig:eosex}, the event $e=\tup{\hat{t},\fmset{t_1,t_2,t_2}}$ is enabled under mode $(\lambda,\rho)$. Fig.~\ref{fig:EOSPost} depicts the nested marking reached after firing $e$ under mode $(\lambda,\rho)$. When compared to the intuitive presentation (provided above) of the semantics, we have, for each type $N\in \N$:
    \begin{itemize}
        \item $\Pi^1(\lambda)$ amounts to the consumed system net tokens.
        \item $\Pi^2_{N}(\lambda)$
        is the resultant marking obtained by merging the consumed system net tokens of type \(N\).
        \item $\Pi^2_{N}(\rho)$ amounts to the firing, on the merged marking above, of the transitions in $\fmset{t_1,t_2,t_2}$ from net $N$.
        \item $\Pi^1(\rho)$ amounts to the produced system net tokens.
        \item $\rho-\lambda$ amounts to the distributed marking.
    \end{itemize}
    Note that the non-determinism of the final distribution is captured by the choice of one of the many possible enabling modes.
\end{example}

An EOS is \emph{conservative} if, for each system net transition $t$, if $t$ consumes a nested token on a place of type $N$, then it produces at least one token on a place of that same type. 

\begin{definition}
A cEOS is an EOS $\os=\tup{\hat{N},\N,d,\Theta}$ with $\hat{N} = \tup{\hat{P},\hat{T},\hat{F}}$ where, for all $\hat{t} \in \hat{T}$, $d(\support{\prefun_{\hat{N}}(\hat{t})})\subseteq d(\support{\postfun_{\hat{N}}(\hat{t})})$.
\end{definition}

Let $\leq_f$ be the order among configurations such that $\mu\leq_f\mu'$ if $\mu'$ is obtained from $\mu$ by adding
\begin{enumerate*}
    \item tokens in the inner markings of objects in $\mu$ and/or
    \item nested tokens 
    at the system net places.
\end{enumerate*}
The \textit{EOS coverability problem}
asks, given an EOS $\os$ and configurations $\mu_f$ and $\mu_1$, whether there is a \textit{run} (sequence of event firings) from $\mu_0$ to a configuration $\mu_1\geq_f \mu_f$.
We may denote $\leq_f$ also by $\leq$.

The qonCG of a cEOS is defined analogously to PNs, where the norm is still, the token count function.
\begin{definition}[cEOS qonCG]
    Given a cEOS $\os=\tup{\hat{N},\N,d,\Theta}$, the \textit{qonCG of $\os$} is the tuple $\C_\os=((\M(\os),E),\leq_f,\norm{\bullet}_\os)$ where, for $M_1,M_2\in (\mathbb{N}^P)^\oplus$, 
    \begin{enumerate}
        \item $E(M_1,M_2)$ iff there is some transition firing such that $M_1\rightarrow M_2$, and 
    \item $\norm{M_1}_\os=
    \norm{\Pi^1(M_1)}+\sum_{N\in\N}\norm{\Pi^2_N(M_1)}$.
    \end{enumerate}
\end{definition}

EOS-coverability (respectively, cEOS-coverabilty) is known to be undecidable (decidable; Th. 4.3 and Th. 5.2 in~\cite{kohler-busmeier_survey_2014}).

While the purpose of this paper is not to motivate cEOS, which were introduced in previous works, we illustrate cEOS semantics via the following example.
\begin{figure}[t]
\centering
\begin{tikzpicture}
\node[place,tokens=1,label=right:\scriptsize $Region_1$] (reg1) at (0,0){};
\node[transvert,label={[name=t2Lab]below:\scriptsize $t_1\langle rem_{energy}\rangle$}] (t3)at (2,1){};
\node[transvert,label={[name=t2Lab]above:\scriptsize $t_2\langle rem_{energy}\rangle$}] (t4)at (2,-1){};
\node[place,label=left:\scriptsize $Region_2$] (reg2) at (4,0){};
    \draw[->](reg1) -- (t3);
    \draw[->](reg2) -- (t4);
    \draw[<-](reg2) -- (t3);
    \draw[<-](reg1) -- (t4);

    \node[transhor,label={[name=t3Lab]above:\scriptsize $t_3\langle rem_{prey},check_{predator}\rangle$}] (t5)at (0,1.5){};
    \node[transhor,label={[name=t3Lab]below:\scriptsize $t_4\langle rem_{mitosis},2 add_{prey}\rangle$}] (t6)at (0,-1.5){};
    \draw[->](reg1.north west) --node[midway,left]{\scriptsize$2$} (t5.south west);
    \draw[<-](reg1.north east) -- (t5.south east);
    \draw[<-](reg1.south west) -- (t6.north west);
    \draw[->](reg1.south east) --node[midway,right]{\scriptsize$2$} (t6.north east);

    \node[transhor,label={[name=t3Lab]above:\scriptsize $t_5\langle rem_{prey},check_{predator}\rangle$}] (t7)at (4,1.5){};
    \node[transhor,label={[name=t3Lab]below:\scriptsize $t_6\langle rem_{mitosis},2 add_{prey}\rangle$}] (t8)at (4,-1.5){};
    \draw[->](reg2.north west) --node[midway,left]{\scriptsize$2$} (t7.south west);
    \draw[<-](reg2.north east) -- (t7.south east);
    \draw[<-](reg2.south west) -- (t8.north west);
    \draw[->](reg2.south east) --node[midway,right]{\scriptsize$2$} (t8.north east);

\begin{scope}[xshift=-7.5cm,yshift=0cm]
    \node[place,tokens=1,label={[name=p0Lab,xshift=.25cm]below :\scriptsize $prey$}](p0)at (0,0){};
    \node[transvert,label={[name=t1Lab]above:\scriptsize $age_1$}] (t1)at (1,0){};
    \node[place,label={[name=p1Lab]below:\scriptsize $predator$}](p1) at (2,0){};
    \node[transvert,label={[name=t2Lab]above:\scriptsize $age_2$}] (t2)at (3,0){};
    \node[place,label={[name=p2Lab]below:\scriptsize $mitosis$}](p2) at (4,0){};
    \draw[->](p0) -- (t1);
    \draw[->](t1) -- (p1);
    \draw[->](p1) -- (t2);
    \draw[->](t2) -- (p2);

    \node[place,label={[name=resLab]below:\scriptsize $energy$}](res) at (2,-1.25){$n$};
    \node[transhor,label={[name=t3Lab]below:\scriptsize $rem_{energy}$}] (t3)at (4,-1.25){};
    \draw[->](res) -| (t1);
    \draw[->](res.north east) -| (t2);
    \draw[->](res) -- (t3);

  \node[transhor,label={[name=remPreyLab]above:\scriptsize $rem_{prey}$}] (remPrey)at (0,1){};
  \node[transhor,label={[name=addPreyLab]below:\scriptsize $add_{prey}$}] (addPrey)at (0,-1.25){};
  \node[transhor,label={[name=checkLab]above:\scriptsize $check_{predator}$}] (check)at (2,1){};
  \node[transhor,label={[name=remRepLab]above:\scriptsize $rem_{mitosis}$}] (remRep)at (4,1){};

    \draw[->](p0) -- (remPrey);
    \draw[<-](p0.south west) -- (addPrey.north west);
    \draw[<->](p1) -- (check);
    \draw[->](p2) -- (remRep);
    
    \node[draw=black,fit={(p0)(p2)(p0Lab)(p2Lab)(resLab)(remPreyLab)(addPreyLab)(checkLab)(remRepLab)},label={left:\scriptsize $N$}](obj0){};
    \draw[dashed] (obj0) -- (reg1.center);
\end{scope}

\end{tikzpicture}
    \caption{The cEOS of Ex.~\ref{ex:pop}. On top of the depicted events, the set of events $\Theta$ contains also the object autonomous events $\tup{Id_{Region_i},\fmset{age_j}}$ for $i,j\in \{1,2\}$.}
    \label{fig:eosexpop}
\end{figure}
\begin{example}\label{ex:pop}
The cEOS in Fig.~\ref{fig:eosexpop} models a population of unicellular organisms growing, migrating, predating and reproducing. Each organism is represented by an object of type $N$, which comprises a state machine to represent the organism stage (among \textit{prey}, \textit{predator}, and \textit{mitosis}) as well as an extra place \textit{energy}, whose tokens represent available energy. During aging, the organism changes stage by spending some energy. New energy is drained from preys during predation. At mitosis, the energy is non-deterministically split among the two new organisms (possibly, one of the organisms gets zero energy).
The system net represents the environment and features two places $Region_1$ and $Region_2$. Organisms can interact only if they are placed in the same region. Movement among the two regions can happen anytime by spending energy. 
Initially, there is only a single organism, with plenty of energy, in the first region. This is captured by the initial configuration $\tup{Region_1, \fmset{\textit{energy}^n}}$, for some $n\in\mathbb{N}$. Aging is captured by the object autonomous events $
\tup{ id_{\textit{Region}_i}, \textit{age}_j }$ for $i,j\in\{1,2\}$. 

Predation is captured by the event $\tup{t_i, \fmset{rem_{prey}, check_{predator}}}$, for $i\in\{1,2\}$, that consumes two organisms, checks whether one is in the prey stage (removing the corresponding token at the same time) and one is in the predator stage; the energy drain is captured by the the fact that a single object is created, thus by collecting all tokens in both \textit{energy} places into a single one. As long as the organisms are in a legal state, that is there is exactly one token among \textit{prey}, \textit{predator}, and \textit{mitosis}, the firing of this event results in a single organism with a single token on \textit{predator}. 

A careless designer may attempt to capture mitosis in a similar way using a distribution instead of a merging, as done in Fig.~\ref{fig:eosexpop}. Specifically, it might use an event $\tup{ t_i, \fmset{rem_{mitosis},2 add_{prey}}}$, for $i\in\{4,6\}$, which consumes one organism, checks if it is in mitosis phase, and produces two objects, with a total of two tokens on the place \textit{prey}. Unfortunately, the fully non-deterministic behavior of the distribution may result in abominations, i.e., one organism with no token or two tokens on \textit{prey}. This bug is detected by running a coverability problem with target $\tup{
Region_i, \fmset{prey^2}}$, for $i\in\{1,2\}$, which returns \textit{True}.

The bug can be fixed by adding an additional system net place that provisionally hosts the product of the distribution transition and by immediately firing twice an auxiliary transition that checks the presence of one token in \textit{prey} and puts the object into $Region_i$. To fire the transitions in the right sequence, auxiliary places and black tokens are necessary in the system net. If no abomination was created, both firings can take place and the simulation continues. Otherwise, the second firing cannot happen, because one of the abomination has no token in \textit{prey}, and the net reaches a deadlock. However, as long as there is no organism under check by the auxiliary transition, there is no abomination. This is is verified by running a coverability problem with target $\tup{Region_i, \fmset{2 prey}} + \tup{simulating,\varepsilon}$, which returns \textit{False}.

By slightly changing the model, the model designer can check, for a given initial energy $n$, whether the population can prosper up to $m$ specimen, for some $m\in\mathbb{N}$, e.g., by adding an additional system net gadget that collects the tokens from each region. More sophisticated interactions with the environment can be flexibly modeled by, for example, employing additional object types, that is, by taking further advantage of the nested paradigm.
\end{example}

\section{Channel \texorpdfstring{\nPN}{nuPN}}\label{sec:extend}
We introduce \GnPN, an extension of \nPN by restricted forms of whole-place operations in the style of UDNs. These whole-place operations are useful to naturally encode the nested-paradigm into the data-aware paradigm.
\newcommand{\confs}{\texttt{Confs}}
Specifically, \GnPN extend \nPN by \textit{special} transitions that perform transfers with renaming.
The firing of special transitions is similar to UDNs (and affine nets~\cite{FinkelMP04}):
\begin{enumerate*}
    \item 
    the pre-conditions are fired,
    \item 
    the transfers (defined by matrices \(G_t(x_i,x_j)\) for $x_i,x_j\in\X(t)$) are performed
    \item 
    the post-conditions are fired.
\end{enumerate*}

Technically, $G_t(x_i,x_j)\in \{0,1\}^{{P} \times {P}}$. 
$G_t(x_i,x_j)(p,q)=n$ means that, after firing the pre-conditions, 
the transition puts on $q$, in the tuple instantiating $x_j$, $n$ times the number of tokens on $p$ from the tuple instantiating $x_i$.
We use the following notation: if, for some set $P$ of places, $M\in\{0,1\}^{P\times P}$ is a matrix and $p\in P$ is a place, $M[p]$ denotes the row of $M$ indexed by $p$.
Moreover, given a place $p\in P$, we denote by $\delta_p\in\{0,1\}^{\abs{P}}$ the vector such that, for each $q\in P$, $\delta_p(q) = 1$ if and only if $p=q$. For example, given a tuple $m$ and a place $p$, the tuple $m(p)\delta_p$ is the projection of $m$ to $p$.

\begin{definition} \label{def:CNUPN}
    A \emph{channel \nPN (\GnPN)} is a tuple \(N = (P,T,F,G)\) where
         \((P,T,F)\) is a \nPN and
         G maps each \(t \in T\) to a function \(G_t: \X(t) \times \X(t) \rightarrow \{0,1\}^{{P} \times {P}}\) such that, for each variable \(x_i \in \X(t)\) and place $p\in P$:
        \begin{enumerate} [topsep=0pt,itemsep=-1ex,partopsep=1ex,parsep=1ex]
            \item for each variable $x_j\in\X(t)$ either $G_t(x_i,x_j)[p]=0$ or $\exists q\in P\;G_t(x_i,x_j)=\delta_q$, i.e., $G_t(x_i,x_j)[p]$ contains at most a single $1$.
            \item if \(\exists q \in P\,\exists x_j\in\X(t)\setminus\{x_i\}\;G_t(x_i,x_j)[p] = \delta_q\), then \(\forall x_k \in \X(t)\setminus \{x_j\}\; G_t(x_i,x_k)[p] = 0^{\abs{P}}\), else \(G_t(x_i,x_i)[p] = \delta_p\).
        \end{enumerate}
\end{definition}

The graphical representation of a \GnPN $N'=(P,T,F,G)$ adds the representation of $G_t$, for each $t\in T$, on top of the representation of the \nPN $N=(P,T,F)$. This is achieved by special arrows called \textit{channels}.
A channel $c$ is a pair of decorated arrows (e.g., double or triple arrows in Fig.~\ref{fig:CNuPN_example}), one to $t$, called pre-channel of $c$, and one from $t$, called post-channel of $c$. Different channels use different decorations. The pre- and post-channels of $c$ are labeled by two non-empty sequences $\sigma_\text{pre}$ and $\sigma_\text{post}$ of variables, respectively, such that $\abs{\sigma_\text{pre}} =\abs{ \sigma_\text{post}}$. Moreover, for each place $p$, each variable $x\in\X(t)$ can occur at most once in the label of at most one pre-channel from $p$.\footnote{This condition is necessary to forbid the representation of transfers with duplication.}
Intuitively, $c$ represents $\abs{\sigma_\text{pre}}$ transfers from $p$ to $q$ while renaming the tokens taken from the tuple instantiating $\sigma_\text{pre}(i)$ as tokens of the tuple instantiating $\sigma_\text{post}(i)$.
A transition is \textit{special} if it has at least one channel.

    Each PN \(N=(P,T,F)\) corresponds (cf. \GnPN semantics in Def.~\ref{def:stepCNUPN} below) to the \GnPN \(N'=(P,T,F,G)\) where, for each $t\in T$ and $x\in\X(t)$, $y\in \X(t)\setminus\{y\}$, $
    G_t(x,x)=\textbf{Id}$,
    and $G_t(x,y)=0$. Both $N$ and $N'$ have the same graphical representation. $N'$ has no special transition.

\begin{example} \label{example:CNuPN1}
    Fig.~\ref{fig:CNuPN_example} depicts a \GnPN \(N = (P,T,F,G)\) with several configurations (cf. Def.~\ref{def:stepCNUPN} below) where \(P = \{p_1,\cdots,p_5\}\), \(T = \{t\}\), \(\X(t) = \{x_1,x_2,x_3\}\), \(\Upsilon(t) = \{\nu_1,\nu_2\}\), and for each place $p\in P$ and $x\in\var(t)$, $G_t(x_1,x)[p]$ is $\delta_p$ if $x=x_1$ and $0$ otherwise and:
    {\footnotesize \begin{align*}
        F_x(p,t)=&\begin{cases}
        1& \text{if }p=p_2,x\in\{x_2,x_3\}\\
            1& \text{if }p=p_1,x=x_1\\
            0& \text{otherwise}
        \end{cases}
        &
        F_x(t,p)=&\begin{cases}
            1& \text{if }p=p_3,x=x_3\\
            2& \text{if }p=p_3,x=x_2\\
            2& \text{if }p=p_5,x\in\{\nu_1,\nu_2\}\\
            0& \text{otherwise}
        \end{cases}
    \\
    G_t(x_2,x)[p]=&\begin{cases}
        \delta_{p_3}&\text{if }p=p_1, x=x_3\\
        \delta_{p}&\text{if }p\neq p_1, x=x_2\\
        0&\text{otherwise}
    \end{cases}&
    G_t(x_3,x)[p]=&\begin{cases}
        \delta_{p_4}&\text{if }p=p_2, x=x_1\\
        \delta_{p}&\text{if }p\neq p_2, x=x_3\\
        0&\text{otherwise}
    \end{cases}
\end{align*}}%

\end{example}

\begin{figure}[t]
\centering
    \begin{subfigure}{0.24\linewidth}
    \scalebox{0.9}{
        \begin{tikzpicture}

\node[place,label={[name=p2Lab]below:\scriptsize $p_2$}](p2)at (.2,0){};
\node[place,label={[name=p1Lab]above:\scriptsize $p_1$}](p1)at (.2,.75){};

\node[transvert] (t)at (1.5,0){};
\node at (t){\scriptsize $t$};

\node[place,label={[name=p3Lab]above:\scriptsize $p_3$}](p3)at (2.8,.75){};
\node[place,label={[name=p4Lab]below:\scriptsize $p_4$}](p4)at (2.8,0){};
\node[place,label={[name=p5Lab,yshift=.2cm] below right:\scriptsize $p_5$}](p5)at (2.35,-.75){};

\draw [->] (p1) to [bend left=30]  node [above, sloped]  (TextNode1) {\scriptsize $x_1$} (t.north west);
\draw [->] (p2) to [bend right=30]  node [below, sloped]  (TextNode1) {\scriptsize $x_2 x_3$} (t.south west);
        
\draw [->,myDouble] (p1) -- node[above,midway,sloped] {\scriptsize $x_2$}(t);
\draw [->,triple] (p2) -- node[below,midway,sloped] {\scriptsize $x_3$}(t);
\draw [->,myDouble] (t) -- node[above,midway,sloped] {\scriptsize $x_3$}(p3);
\draw [->] (t.north east) to [bend left=30]  node [above, sloped]  (TextNode2) {\scriptsize $2x_2,x_3$} (p3);

\draw [->,triple] (t) -- node[below,midway,sloped,pos=.75] {\scriptsize $x_1$}(p4);
\draw [->,bend right] (t.south east) --node[below,midway,sloped] {\scriptsize $2\nu_1\nu_2$} (p5);

\node at($(p1)+(0,.2)$){\tiny a a };
\node at($(p1)$){\tiny b b b};
\node at($(p1)-(0,.2)$){\tiny c c};

\node at($(p2)+(0,.2)$){\tiny a a};
\node at($(p2)$){\tiny b b};
\node at($(p2)-(0,.2)$){\tiny c c};

\node at($(p3)$){\tiny a};
\node at($(p4)$){\tiny b};
\node at($(p5)$){\tiny c};
\end{tikzpicture}}
        \caption{}
        \label{fig:CNuPN_init}
    \end{subfigure}
    \begin{subfigure}{0.24\linewidth}\scalebox{0.9}{
        \begin{tikzpicture}

\node[place,label={[name=p2Lab]below:\scriptsize $p_2$}](p2)at (0.2,0){};
\node[place,label={[name=p1Lab]above:\scriptsize $p_1$}](p1)at (0.2,.75){};

\node[transvert] (t)at (1.5,0){};
\node at (t){\scriptsize $t$};

\node[place,label={[name=p3Lab]above:\scriptsize $p_3$}](p3)at (2.8,.75){};
\node[place,label={[name=p4Lab]below:\scriptsize $p_4$}](p4)at (2.8,0){};
\node[place,label={[name=p5Lab,yshift=.2cm] below right:\scriptsize $p_5$}](p5)at (2.35,-.75){};

\draw [->] (p1) to [bend left=30]  node [above, sloped]  (TextNode1) {\scriptsize $x_1$} (t.north west);
\draw [->] (p2) to [bend right=30]  node [below, sloped]  (TextNode1) {\scriptsize $x_2 x_3$} (t.south west);
        
\draw [->,myDouble] (p1) -- node[above,midway,sloped] {\scriptsize $x_2$}(t);
\draw [->,triple] (p2) -- node[below,midway,sloped] {\scriptsize $x_3$}(t);
\draw [->,myDouble] (t) -- node[above,midway,sloped] {\scriptsize $x_3$}(p3);
\draw [->] (t.north east) to [bend left=30]  node [above, sloped]  (TextNode2) {\scriptsize $2x_2,x_3$} (p3);

\draw [->,triple] (t) -- node[below,midway,sloped,pos=.75] {\scriptsize $x_1$}(p4);
\draw [->] (t.south east) --node[below,midway,sloped,bend right] {\scriptsize $2\nu_1\nu_2$} (p5);

\node at($(p1)+(0,.2)$){\tiny a};
\node at($(p1)$){\tiny b b b};
\node at($(p1)-(0,.2)$){\tiny c c};

\node at($(p2)+(0,.2)$){\tiny a a};
\node at($(p2)$){\tiny b};
\node at($(p2)-(0,.2)$){\tiny c};

\node at($(p3)$){\tiny a};
\node at($(p4)$){\tiny b};
\node at($(p5)$){\tiny c};
\end{tikzpicture}}
        \caption{}
        \label{fig:CNuPN_step1}
    \end{subfigure}

    \begin{subfigure}{0.24\linewidth}\scalebox{0.9}{
        \begin{tikzpicture}

\node[place,label={[name=p2Lab]below:\scriptsize $p_2$}](p2)at (0.2,0){};
\node[place,label={[name=p1Lab]above:\scriptsize $p_1$}](p1)at (0.2,0.75){};

\node[transvert] (t)at (1.5,0){};
\node at (t){\scriptsize $t$};

\node[place,label={[name=p3Lab]above:\scriptsize $p_3$}](p3)at (2.8,0.75){};
\node[place,label={[name=p4Lab]below:\scriptsize $p_4$}](p4)at (2.8,0){};
\node[place,label={[name=p5Lab,yshift=.2cm] below right:\scriptsize $p_5$}](p5)at (2.35,-0.75){};

\draw [->] (p1) to [bend left=30]  node [above, sloped]  (TextNode1) {\scriptsize $x_1$} (t.north west);
\draw [->] (p2) to [bend right=30]  node [below, sloped]  (TextNode1) {\scriptsize $x_2 x_3$} (t.south west);
        
\draw [->,myDouble] (p1) -- node[above,midway,sloped] {\scriptsize $x_2$}(t);
\draw [->,triple] (p2) -- node[below,midway,sloped] {\scriptsize $x_3$}(t);
\draw [->,myDouble] (t) -- node[above,sloped,midway] {\scriptsize $x_3$}(p3);
\draw [->] (t.north east) to [bend left=30]  node [above, sloped]  (TextNode2) {\scriptsize $2x_2,x_3$} (p3);

\draw [->,triple] (t) -- node[below,midway,sloped,pos=.75] {\scriptsize $x_1$}(p4);
\draw [->] (t.south east) --node[below,midway,sloped] {\scriptsize $2\nu_1\nu_2$} (p5);

\node at($(p1)+(0,.1)$){\tiny a};
\node at($(p1)-(0,.1)$){\tiny c c};

\node at($(p2)+(0,.1)$){\tiny a a};
\node at($(p2)-(0,.1)$){\tiny b};

\node at($(p3)+(0,.2)$){\tiny a};
\node at($(p3)$){\tiny c c c };

\node at($(p4)+(0,.1)$){\tiny a};
\node at($(p4)-(0,.1)$){\tiny b};

\node at($(p5)$){\tiny c};
\end{tikzpicture}}
        \caption{}
        \label{fig:CNuPN_step2}
    \end{subfigure}

    \begin{subfigure}{0.24\linewidth}\scalebox{0.9}{
        \begin{tikzpicture}

\node[place,label={[name=p2Lab]below:\scriptsize $p_2$}](p2)at (0.2,0){};
\node[place,label={[name=p1Lab]above:\scriptsize $p_1$}](p1)at (0.2,0.75){};

\node[transvert] (t)at (1.5,0){};
\node at (t){\scriptsize $t$};

\node[place,label={[name=p3Lab]above:\scriptsize $p_3$}](p3)at (2.8,0.75){};
\node[place,label={[name=p4Lab]below:\scriptsize $p_4$}](p4)at (2.8,0){};
\node[place,label={[name=p5Lab,yshift=.2cm] below right:\scriptsize $p_5$}](p5)at (2.35,-0.75){};

\draw [->] (p1) to [bend left=30]  node [above, sloped]  (TextNode1) {\scriptsize $x_1$} (t.north west);
\draw [->] (p2) to [bend right=30]  node [below, sloped]  (TextNode1) {\scriptsize $x_2 x_3$} (t.south west);
        
\draw [->,myDouble] (p1) -- node[above,midway,sloped] {\scriptsize $x_2$}(t);
\draw [->,triple] (p2) -- node[below,midway,sloped] {\scriptsize $x_3$}(t);
\draw [->,myDouble] (t) -- node[above,sloped,midway] {\scriptsize $x_3$}(p3);
\draw [->] (t.north east) to [bend left=30]  node [above, sloped]  (TextNode2) {\scriptsize $2x_2,x_3$} (p3);

\draw [->,triple] (t) -- node[below,midway,sloped,pos=.75] {\scriptsize $x_1$}(p4);
\draw [->] (t.south east) --node[below,midway,sloped, bend right] {\scriptsize $2\nu_1\nu_2$} (p5);

\node at($(p1)+(0,.1)$){\tiny a};
\node at($(p1)-(0,.1)$){\tiny c c};

\node at($(p2)+(0,.1)$){\tiny a a};
\node at($(p2)-(0,.1)$){\tiny b};

\node at($(p3)+(0,.15)$){\tiny a c};
\node at($(p3)$){\tiny c c c};
\node at($(p3)-(0,.15)$){\tiny b b };

\node at($(p4)+(0,.1)$){\tiny a};
\node at($(p4)-(0,.1)$){\tiny b};

\node at($(p5)+(0,.2)$){\tiny c};
\node at($(p5)$){\tiny d d};
\node at($(p5)-(0,.2)$){\tiny f};

\end{tikzpicture}}
        \caption{}
        \label{fig:CNuPN_step3}
    \end{subfigure}
    \caption{The \GnPN discussed in Ex.~\ref{example:CNuPN1} and Ex.~\ref{example:CNuPN2}, with configurations.
    } 
    \label{fig:CNuPN_example}
\end{figure}

The firing of a special transition $t$ serializes its standard pre-conditions, the transfers, and, finally, the post-conditions. In Def.~\ref{def:stepCNUPN} below, $M''$ captures the tokens uninvolved with the firing, $\out$ accounts for the creation of fresh tuples via $\nu$ variables, $m''$ accounts for the firing of the standard preconditions,
the summation in $m'_{e(x_j)}$
formalizes the transfers to the tuple instantiating $x_j$,
while $m'$ takes into account also the final firing of the standard post-conditions.

\begin{definition} \label{def:stepCNUPN}
A configuration $M$ of a \GnPN $N=(P,T,F,G)$ is a configuration of $(P,T,F)$, and a transition $t\in T$ is enabled in $N$ if it is enabled in $(P,T,F)$. A configuration $M'$ is reached from $M$ by firing a transition $t\in T$ with mode $e$, denoted by $M\rightarrow^{t,e}M'$, if
        \[ M=M''+\sum_{x\in\X(t)}\fmset{m_{e(x)}}
        \qquad
        M'=M''+\out+\sum_{x\in\X(t)}\fmset{m_{e(x)}'}
        \]
    where \(m_{e(x_j)}' = \sum_{x_i \in \X(t)} \left( m_{e(x_i)}'' \ast G_t(x_i,x_j) \right) + F_{x_j}(t,P)\) and \(m_{e(x_i)}'' = m_{e(x_i)} - F_{x_i}(P,t)\).
    We define \(\rightarrow^\ast\) as the reflexive and transitive closure of \(\rightarrow^{t,e}\).
\end{definition}

\begin{example} \label{example:CNuPN2}

Fig.~\ref{fig:CNuPN_step3} depicts the firing of transition $t$ from Fig.~\ref{fig:CNuPN_init} with mode $e(x_1)= a$, $e(x_2)= b$, $e(x_3)= c$, $e(\nu_1)=d$, and $e(\nu_2)=f$. Fig~\ref{fig:CNuPN_step1} and Fig.~\ref{fig:CNuPN_step2} depict the intermediate steps.

\end{example}

The qonCG of a \GnPN is defined similarly to \nPNs.
\begin{definition}[\GnPN qonCG]
    Given a \GnPN 
    $W=\tup{P,T,F,G}$, 
    the \textit{qonCG of $W$} is the tuple $\C=((C,E),\prec,\norm{\bullet})$ where, $C$ is the set of configurations of $W$ and, for each $M_1,M_2\in C$, 
    \begin{enumerate}
        \item $E(M_1,M_2)$ iff there is some transition firing such that $M_1\rightarrow M_2$, and 
    \item $\norm{M_1}=\sum_{m_i \in \support{M}} M(m_i)|m_i|$
    \end{enumerate}
\end{definition}

\section{Phase Encodings}\label{sec:phaseencoding}

The results of this paper rely on a number of encodings among cEOS, \GnPN, and \nPN. In this section we define the type of encoding that we exploit. Specifically, we encode a single step of a source CG into a sequence of steps in a target CG. The first and last configurations in the sequence amount to the actual encoding of the configurations in the source step, while the intermediate configurations amount to extra steps required in the target CG in order to encode the single source step.

\begin{definition}[Configuration encoding]
   Given two CGs $\C_1=(C_1,E_1)$ and $\C_2=(C_2,E_2)$, an \textit{configuration encoding} $\xi$ of $\C_1$ into $\C_2$ is an injective function $\xi:C_1\longrightarrow C_2$. The configurations $K\in C_2$ in the range of $\xi$ are called \textit{encoding configurations}. A \textit{phase} $P$ is either:
   \begin{enumerate}
       \item an infinite run $K_1\rightarrow K_2\rightarrow\dots$ of $\C_2$ such that $K_1$ is an encoding configuration iff $i=1$, or
       \item it is a finite run $K_1\rightarrow K_2\rightarrow\dots\rightarrow K_n$ such that, $n > 1$ and, for each $i>1$, $K_i$ is an encoding configuration iff $i\in\{1,n\}$.
   \end{enumerate} 
   We write $\start(P)=K_1$ and, if $P$ is a finite phase $K_1\rightarrow^* K_n$, then $\finish(P)=K_n$. A phase $P_1$ is consistent with a phase $P_2$ if $P_1$ is finite and $\finish(P_1) = \start(P_2)$.
\end{definition}
Since $\xi$ is an injective function, given a configuration $K\in C_2$ in the range of $\xi$,  is the counter-image $\xi^{-1}(K)$ of $K$ under $\xi$, i.e., $\xi^{-1}(K)=\{H\in C_1\mid \xi(H)=K\}$, contains a single configuration $H\in C_1$. Thus, with a slight abuse of notation, we denote $H$ by $\xi^{-1}(K)$.

In a \textit{phase encoding}, while configurations are encoded into encoding configurations, steps are encoded into corresponding phases.

\begin{definition}[Phase Encoding]
    Given a configuration encoding $\xi$ of $\C_1$ into $\C_2$, 
   $\xi$ is a \textit{phase encoding} if:
   \begin{enumerate} 
        \item if $K_1\rightarrow K_2$ in $\C_1$, then, there is a phase $P$ such that $\start(P)=\xi(K_1)$ and $\finish(P)=\xi(K_2)$, and
        \item if there is a finite phase $P$, then there is a step $\xi^{-1}(\start(P))\rightarrow \xi^{-1}(\finish(P))$ in $\C_1$.
    \end{enumerate} 
    Provided that $\C_1=((C_1,E_1),\leq_1,\norm{\bullet}_1)$ and $\C_2=((C_2,E_2),\leq_2,\norm{\bullet}_2)$ are qonCGs, we say that $\xi$ is:
    \begin{enumerate}
        \item an \textit{embedding} if $K\leq_1 H$ if and only if $\xi(K)\leq_2 \xi(H)$. 
        \item \textit{uniform} if it is an embedding and, if $H\geq_2 \xi(K)$ in $\C_2$ and $H$ is reachable from some encoding configuration, then $H$ is an encoding configuration.
        \item \textit{phase-finite} if all the phases are finite. 
        \item \textit{$f$-bounded} if for each run $\sigma:\xi(K)\rightarrow^*H$ in $\C_2$ where the only encoding configuration in $\sigma$ is $\xi(K)$, we have $\norm{H}_2\leq_2 f(\norm{\xi(K)}_2)$, for some $f:\mathbb{N}\longrightarrow\mathbb{N}$.
        \item $f$-perfect if it is all of the above, for some $f:\mathbb{N}\longrightarrow\mathbb{N}$.
    \end{enumerate}
\end{definition}

Note that the definition of a phase encoding is reminiscent of transitive, stuttering, or branching bisimulations and compatibility properties~\cite{BrowneCG88,finkel_well-structured_2001}. However, while bisimulations are symmetric (they are a simulation from $\C_1$ to $\C_2$ and vice-versa), phase encodings are asymmetric, since they do not simulate single steps of $\C_2$, but only phases.

We now provide a couple of technical lemmas that show that coverability, boundedness, and termination are preserved under appropriate phase encodings. Their proofs are in Appendix.~\ref{app:phaseProofs}. Given a problem instance $I=(\C_1,K_0,\dots,K_n)$, and a phase-encoding $\xi$ of $\C_1$ into some $\C_2$, we denote by $\xi(I)$ the instance $(\C_2,\xi(K_0),\dots,\xi(K_n))$.

\newcommand{\coverLemmaStatement}{
    Given two families $F_1$ and $F_2$ of qonCG, if, for each $\C_1\in F_1$, there is a \textit{uniform phase-encoding} $\xi$ of $\C_1$ into some $\C_2\in F_2$ then, each instance $I$ of $F_1$-coverability is equivalent to the instance $\xi(I)$ of $F_2$-coverability.
}
\begin{lemma}\label{coverLemmaLabel}
    \coverLemmaStatement
\end{lemma}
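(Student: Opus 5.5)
We prove the two directions of the equivalence separately. Fix an instance $I=(\C_1,K_0,K_1)$ of $F_1$-coverability and a uniform phase-encoding $\xi$ of $\C_1$ into $\C_2$, so that $\xi(I)=(\C_2,\xi(K_0),\xi(K_1))$. The common idea is that runs of $\C_1$ correspond to concatenations of phases in $\C_2$. The first phase-encoding condition carries $\C_1$-runs forward. The second condition, together with uniformity, carries $\C_2$-runs back.

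\emph{($\Rightarrow$)} Suppose $K_0\rightarrow^\ast K_2\geq_1 K_1$ in $\C_1$, with steps $K_0=H_0\rightarrow H_1\rightarrow\dots\rightarrow H_k=K_2$. By the first condition of a phase encoding, each step $H_i\rightarrow H_{i+1}$ yields a phase $P_i$ with $\start(P_i)=\xi(H_i)$ and $\finish(P_i)=\xi(H_{i+1})$. Consecutive phases are consistent, so concatenating them gives a run $\xi(K_0)\rightarrow^\ast\xi(K_2)$ in $\C_2$. Since $\xi$ is an embedding, $K_2\geq_1 K_1$ gives $\xi(K_2)\geq_2\xi(K_1)$. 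Hence $\xi(I)$ is a yes-instance.

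\emph{($\Leftarrow$)} Suppose $\xi(K_0)\rightarrow^\ast H\geq_2\xi(K_1)$ in $\C_2$. Since $H$ is reachable from the encoding configuration $\xi(K_0)$ and $H\geq_2\xi(K_1)$, uniformity makes $H$ an encoding configuration, say $H=\xi(K_2)$. The embedding property then gives $K_2\geq_1 K_1$. It remains to lift the run $\sigma:\xi(K_0)\rightarrow^\ast\xi(K_2)$ to $\C_1$.

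\begin{itemize}
\item Mark the positions of $\sigma$ that hold encoding configurations. There are finitely many, the first and last are among them, and they split $\sigma$ into segments.
\item Each nontrivial segment between consecutive marked positions starts and ends at an encoding configuration and has no encoding configuration strictly inside. It is therefore a finite phase.
\item By the second condition, each such phase $P$ gives a step $\xi^{-1}(\start(P))\rightarrow\xi^{-1}(\finish(P))$ in $\C_1$.
\item Chaining these steps, using injectivity of $\xi$ so that $\xi^{-1}$ is well defined, gives $K_0\rightarrow^\ast K_2$ in $\C_1$.
\end{itemize}

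The main obstacle is making this decomposition rigorous, which requires two small checks. First, the definition of phase requires $n>1$ and forbids encoding configurations strictly inside. So consecutive marked positions always delimit a legitimate finite phase, and a zero-length run is handled trivially. Second, the phase definition has an indexing glitch ("$K_1$ is an encoding configuration iff $i=1$"). I read it as: the intermediate configurations are exactly the non-encoding ones.

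Note that only the uniform phase-encoding properties are used; phase-finiteness and boundedness are not needed for coverability.
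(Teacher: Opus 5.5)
Your proof is correct and follows essentially the same route as the paper. In the forward direction you concatenate the consistent phases given by the first phase-encoding condition and use the embedding property. In the backward direction you use uniformity to make the covering configuration an encoding one, then decompose the run into consistent finite phases and pull each back via the second condition and injectivity. Your explicit segmentation at encoding positions, together with the zero-length case, spells out what the paper asserts in a single line.
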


\newcommand{\terminationLemmaStatement}{
    Given two families $F_1$ and $F_2$ of qonCGs, if, for each $\C_1\in F_1$, there is a \textit{phase-finite phase-encoding} $\xi$ of $\C_1$ into some $\C_2\in F_2$ then, each instance $I$ of $F_1$-termination is equivalent to the instance $\xi(I)$ of $F_2$-termination.
}
\begin{lemma}\label{terminationLemmaLabel}
    \terminationLemmaStatement
\end{lemma}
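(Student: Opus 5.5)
We prove both directions of the equivalence between $I=(\C_1,K)$ and $\xi(I)=(\C_2,\xi(K))$. The idea is to decompose runs of $\C_2$ that start at an encoding configuration into consecutive phases, and to match those phases with steps of $\C_1$ in both directions.

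\emph{From $\C_1$ to $\C_2$.} Assume there is an infinite run $K=K_0\rightarrow K_1\rightarrow K_2\rightarrow\cdots$ in $\C_1$. By the first condition of phase encodings, each step $K_i\rightarrow K_{i+1}$ yields a phase $P_i$ with $\start(P_i)=\xi(K_i)$ and $\finish(P_i)=\xi(K_{i+1})$. Each $P_i$ is finite, because $\xi$ is phase-finite, and it has length at least one, since $n>1$ in the definition of finite phase. Hence $\finish(P_i)=\start(P_{i+1})$, so consecutive phases are consistent and can be concatenated. Their concatenation is an infinite run of $\C_2$ from $\xi(K_0)=\xi(K)$.

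\emph{From $\C_2$ to $\C_1$.} Assume there is an infinite run $\sigma: H_0\rightarrow H_1\rightarrow\cdots$ with $H_0=\xi(K)$. Let $i_0=0<i_1<i_2<\cdots$ be the positions at which $\sigma$ visits encoding configurations. This gives two cases.
\begin{itemize}
\item If there are infinitely many such positions, each segment $H_{i_j}\rightarrow\cdots\rightarrow H_{i_{j+1}}$ is a finite phase: its endpoints are encoding configurations, its interior contains none, and it has at least one step. The second condition of phase encodings then gives steps $\xi^{-1}(H_{i_j})\rightarrow\xi^{-1}(H_{i_{j+1}})$ in $\C_1$. Chaining these steps gives an infinite run of $\C_1$ from $\xi^{-1}(\xi(K))=K$, where injectivity of $\xi$ makes $\xi^{-1}$ well defined.
\item If there are only finitely many such positions, let $i_k$ be the last one. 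The suffix $H_{i_k}\rightarrow H_{i_k+1}\rightarrow\cdots$ is then an infinite run in which only its first configuration is an encoding configuration, so it is an infinite phase. This contradicts phase-finiteness, so this case cannot occur.
\end{itemize}

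\emph{Main obstacle.} The only delicate point is the phase decomposition itself. We must check that every segment between consecutive encoding positions really satisfies the definition of a finite phase. We must also check that phase-finiteness excludes infinite phases, which is what rules out infinite runs of $\C_2$ that stop visiting encoding configurations. This is exactly why the lemma assumes phase-finiteness but not uniformity or an embedding: termination does not involve the quasi-order, so neither property is needed.
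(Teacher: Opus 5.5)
Your proof is correct and follows the paper's own argument essentially step for step. For $\C_1\Rightarrow\C_2$ you concatenate the finite phases given by the first phase-encoding condition; for $\C_2\Rightarrow\C_1$ you use phase-finiteness to force infinitely many encoding configurations and apply the second condition to each segment between consecutive ones. You also make explicit two details the paper leaves implicit: each phase has at least one step, so the concatenation is genuinely infinite, and each such segment really satisfies the definition of a finite phase.
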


\begin{definition}
    A qonCG $((C,E),\leq,\norm{\bullet})$ is \textit{finitary} if, for each $n\in\mathbb{N}$, $\norm{\bullet}^{-1}(n)$ is finite.
\end{definition}
Note that the qonCGs of \nPNs, cEOSs, and \GnPNs are all finitary since, given a finite number of tokens, one can build at most a finite number of configurations.

\newcommand{\boundedLemmaStatement}{
   Given two families $F_1$ and $F_2$ of qonCGs, if, for each $\C_1\in F_1$, there is a \textit{$f$-bounded 
    phase-encoding} $\xi$ of $\C_1$ into some \textit{finitary} $\C_2\in F_2$,
    then each instance $I$ of $F_1$-boundedness is equivalent to the instance $\xi(I)$ of $F_2$-boundedness.
}
\begin{lemma}\label{boundedLemmaLabel}
    \boundedLemmaStatement
\end{lemma}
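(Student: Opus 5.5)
We prove the two directions of the equivalence separately: $I=(\C_1,K_0)$ is bounded iff $\xi(I)=(\C_2,\xi(K_0))$ is bounded. The tools are the phase-encoding properties (steps map to phases and finite phases map back to steps), injectivity of $\xi$, $f$-boundedness, and finitariness of $\C_2$.

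\emph{($\xi(I)$ bounded $\Rightarrow$ $I$ bounded).} Suppose $\{H\mid \xi(K_0)\rightarrow^* H\}$ is finite. Take any $K$ with $K_0\rightarrow^* K$ in $\C_1$, via a run $K_0\rightarrow K_1\rightarrow\dots\rightarrow K_n=K$. By property 1 of phase encodings, each step $K_i\rightarrow K_{i+1}$ yields a phase from $\xi(K_i)$ to $\xi(K_{i+1})$. Concatenating these phases gives $\xi(K_0)\rightarrow^*\xi(K)$. Hence $\xi$ maps the reachability set of $K_0$ into the reachability set of $\xi(K_0)$. Since $\xi$ is injective, the former is finite.

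\emph{($I$ bounded $\Rightarrow$ $\xi(I)$ bounded).} Suppose $R_1=\{K\mid K_0\rightarrow^*K\}$ is finite, and let $H$ be reachable from $\xi(K_0)$ via a run $\sigma$. Cut $\sigma$ at its encoding configurations. This splits it into consecutive finite phases $P_1,\dots,P_k$, followed by a (possibly trivial) suffix $\sigma'$ whose only encoding configuration is its first one, $\xi(K)$.
\begin{itemize}
\item By property 2, each $P_j$ gives a step $\xi^{-1}(\start(P_j))\rightarrow\xi^{-1}(\finish(P_j))$ in $\C_1$. Chaining these steps shows $K_0\rightarrow^* K$, so $K\in R_1$.
\item The suffix $\sigma'$ runs from $\xi(K)$ to $H$ and contains $\xi(K)$ as its only encoding configuration. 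By $f$-boundedness, $\norm{H}_2\le f(\norm{\xi(K)}_2)$.
\end{itemize}
Let $B=\max_{K\in R_1} f(\norm{\xi(K)}_2)$. This maximum exists because $R_1$ is finite. Every reachable $H$ then lies in $\bigcup_{n\le B}\norm{\bullet}_2^{-1}(n)$, which is finite because $\C_2$ is finitary. So $\xi(I)$ is bounded.

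The main obstacle is the decomposition step in the second direction. Cutting at encoding configurations must produce genuine phases in the sense of the definition, and the degenerate case where $H$ is itself encoding must be covered (take $\sigma'$ trivial, so that $H=\xi(K)$ with $K\in R_1$ directly). One also has to read the $f$-bound as applying to all intermediate configurations of the suffix, including, if needed, prefixes of an infinite phase. I would also read the odd ``$\norm{H}_2\leq_2 f(\cdot)$'' in the definition as the numeric inequality $\le$.
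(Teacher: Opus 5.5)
Your proof is correct and follows the same route as the paper's. One direction concatenates phases and uses injectivity of $\xi$. The other applies $f$-boundedness from the last encoding configuration on the run and then uses finitariness of $\C_2$. Your bound $B=\max_{K\in R_1} f(\norm{\xi(K)}_2)$ is slightly cleaner than the paper's step $f(\norm{K})\le f(n_{K_0})$, which tacitly assumes $f$ is monotone. You also spell out the phase decomposition (via property 2) that the paper uses without comment when it asserts that only finitely many encoding configurations are reachable.
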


\section{From \texorpdfstring{\nPN}{nuPN} to cEOS}\label{sec:fromNutoCEOS}
We now provide a phase encoding from an arbitrary \nPN to a cEOSs. We then obtain a reduction from problems on \nPNs to the problems on cEOS.

Let $\D=(P,T,F)$ be a \nPN and $\ell=|P|$. We build a cEOS $\os$ that involves only two types: the type $\blacksquare$, used to fire sequences of events, and one more type $N_\D$, that captures the transitions of $\D$ split by variable. 

\begin{definition}
    $N_\D$ is the PN $N_\D=(P_\D,T_\D,F_\D)$ such that $P_\D=P$, $T_\D=\biguplus_{t\in T}\{t_x\mid x\in\var(t)\}$, and, for each $p\in P_\D$ and $t_x\in T_\D$, $F_\D(p,t_x)=F_x(p,t)$ and $F_\D(t_x,p)=F_x(t,p)$.
\end{definition}

Since we deal with only two types, we graphically represent system net places of type $N_\D$ by circles, as usual, while we represent places of type $\blacksquare$ by triangles.

The system net $\hat{N}=(\hat{P},\hat{T},\hat{F})$ of $\os$ contains the places ${sim}$, ${selectTran}$, and, for each $t\in T$, the places, transitions, and flow function depicted in Fig.~\ref{fig:lowerboundnuPNtoEOSblock}. The figure depicts also the synchronization structure $\Theta$ of $\os$. Note that 
\begin{enumerate}
    \item the places $sim$ and $selectTran$ are in common for each instantiation of  Fig.~\ref{fig:lowerboundnuPNtoEOSblock} for $t\in T$, and
    \item the only transitions that can fire concurrently are the $t^\text{fire}_{x}$, for $x\in\var(t)$.
\end{enumerate}\begin{figure}[t]\centering
\resizebox{1\columnwidth}{!}{
\begin{tikzpicture}

\node[triangle,label={[name=selLab]above:\scriptsize $\text{selectTran}$}](sel)at(0,0){};
\node[transhor,label={[name=t1Lab]above:\scriptsize $t_{x_1}^{\text{select}}$}](t1)at(1.5,0){};
\node[triangle,label={[name=sel1Lab]above:\scriptsize $\text{select}_{x_2}^t$}](sel1)at(3,0){};
\node at(4.5,0)(selDots){$\dots$};
\node[triangle,label={[name=t2Lab]above:\scriptsize $\text{select}_{x_n}^t$}](t2)at(6,0){};
\node[transhor,label={[name=sel2Lab]above:\scriptsize $t_{x_n}^\text{select}$}](sel2)at(7.5,0){};

\node[triangle,label={[name=tnuLab]above:\scriptsize $\text{select}_{\nu}^t$}](tnu)at(9,0){};
\node[transhor,label={[name=selnuLab]above:\scriptsize $t_\nu^\text{select}$}](selnu)at(10.5,0){};

\node[place,label={[name=simLab]left:\scriptsize $\text{sim}$}](sim)at(-1.5,1.25){};

\node[place,label={[name=t1selLab]left:\scriptsize $t^\text{selected}_{x_1}$}](t1sel)at(1.5,-1){};
\node[triangle,label={[name=t1runLab]right:\scriptsize $t^\text{run}_{x_1}$}](t1run)at(2.5,-1){};

\node[transhor,label={[name=t1putLab]right:\scriptsize $t^\text{fire}_{x_1}\tup{t_{x_1}}$}] at (2,-2)(t1put){};

\node[place,label={[name=tnselLab]left:\scriptsize $t^\text{selected}_{x_n}$}](tnsel)at(7.5,-1){};
\node[triangle,label={[name=tnrunLab]right:\scriptsize $t^\text{run}_{x_n}$}](tnrun)at(8.5,-1){};

\node[transhor,label={[name=tnputLab]left:\scriptsize $t^\text{fire}_{x_n}\tup{t_{x_n}}$}] at (8,-2)(tnput){};

\node[triangle,label={[name=tnurunLab]left:\scriptsize $t^\text{run}_{\nu}$}](tnurun)at(10.5,-1){};

\node[transhor,label={[name=tnuputLab]left:\scriptsize $t^\text{fire}_{\nu}\tup{t_\nu}$}] (tnuput)at (10.5,-2){};

\draw[->] (sim.south east) -| (t1Lab.north);
\draw[->] (sim.east) -| (sel2Lab.north);

\draw[->] (t1.south) -- (t1sel.north);
\draw[->] (sel2.south) -- (tnsel.north);

\draw[->] ($(selnu.south)-(.2,0)$) |- ($(t1run)+(0,.75)$) -- (t1run);
\draw[->] ($(selnu.south)-(.1,0)$)|- ($(tnrun)+(0,.6)$) -- (tnrun);
\draw[->] ($(selnu.south)-(0,0)$) -- (tnurun);

\draw[->] (sel) -- (t1);
\draw[->] (t1) -- (sel1);
\draw[->] (sel1) -- (selDots);
\draw[->] (selDots)--(t2);
\draw[->] (t2)--(sel2);
\draw[->] (sel2)--(tnu);
\draw[->] (tnu)--(selnu);

\draw[->] (t1put) -| ($(sim.south)+(.2,0)$);
\draw[->] ($(tnput.south) - (.1,0)$) -- ($(tnput.south)-(.1,.15)$) -| (sim);
\draw[->] ($(tnuput.south) - (.1,0)$)-- ($(tnuput.south)-(.1,.3)$)-| ($(sim.south)-(.2,0)$);

    \node[triangle,label={[name=reportLab]above right:\scriptsize $t^\text{report}$}](report)at (2,-3.1){};

\node[transhor,label={[name=doneLab]left:\scriptsize $t^\text{done}$}](done)at(0,-3.1){};

\draw[->] (t1put) -- (report);
\draw[->] (tnput) |- (report.north east);
\draw[->] (tnuput) |- (report.east);
\draw[->] (report) --node [midway, above]{\scriptsize 
$n+1$} (done);
\draw[->] (done) -- (sel);

\node at (sel){\scriptsize $\blacksquare$};

\draw[->](t1sel) -- (t1put);
\draw[->](t1run) -- (t1put);

\draw[->](tnsel) -- (tnput);
\draw[->](tnrun) -- (tnput);

\draw[->](tnurun) -- (tnuput);

\end{tikzpicture}
}

    \caption{The part of $\os$ dedicated to the simulation of a transition $t$ of $\D$ such that $\var(t)=\{x_1,\dots,x_n,\nu\}$. If $\nu\notin\var(t)$, then, $\text{select}^t_\nu$, $t^\text{select}_\nu$, $t_\nu^\text{run}$, and $t_{\nu}^\text{fire}$ have to be dropped, and $\hat{F}(t^\text{select}_x,t_{x_1}^\text{run})=1$, $\hat{F}(t^\text{select}_x,t_{x_n}^\text{run})=1$, and $\hat{F}(t^\text{report},t^\text{done})=n$ has to be set.
    }
    \label{fig:lowerboundnuPNtoEOSblock}
\end{figure}

Intuitively, given a configuration $M$ of $\D$, we encode $M$ by using, for each tuple $m\in M$ a dedicated $N_\D$ object in the place \textit{sim} with internal marking $m$, plus a single token in $selectTran$.  
\begin{definition}\label{def:lowerBoundEncode}
    Let $CG_{\D}=((C_\D,E_\D),\preceq,\norm{\bullet}_N)$ be the CG of $\D$ and $CG_{\os}=((C_\os,E_\os),\leq_f,\norm{\bullet}_\os)$ the CG of $\os$. The configuration $M=\fmset{m_1,\dots,m_\ell}\in C_{\D}$ is encoded by the configuration $\xi(M)=\sum_{i=1}^\ell \tup{\text{sim}, m_i} + \tup{\text{selectTran}, \varepsilon}$.
\end{definition}
Note that the function $\xi:C_{\D}\longrightarrow C_{\os}$ is injective, i.e., a configuration encoding of $CG_{\D}$ into $CG_{\os}$. Moreover, for each configuration $M$ of $\D$ and marking $m$ over $P$, $\xi(M)+\tup{sim,m}=\xi(M+m)$. By tracing the firing sequences in Fig.\ref{fig:lowerboundnuPNtoEOSblock}, one can show that $\xi$ is a phase encoding that satisfies the properties in Sec.\ref{sec:phaseencoding}.

\newcommand{\nuToCEOSphaseEncodingStatement}{The configuration encoding $\xi$ defined in Def.\ref{def:lowerBoundEncode} is a phase encoding.}
\begin{lemma}\label{lemma:nuToCEOSphaseEncoding}
    \nuToCEOSphaseEncodingStatement
\end{lemma}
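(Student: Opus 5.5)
We must prove two directions: (1) every step $M\rightarrow^{t,e}M'$ of $\D$ is matched by a phase of $\os$ from $\xi(M)$ to $\xi(M')$, and (2) every finite phase of $\os$ from $\xi(M)$ ends in some $\xi(M')$ with $M\rightarrow M'$ in $\D$. The argument traces the control flow of the gadget in Fig.~\ref{fig:lowerboundnuPNtoEOSblock} for the chosen $t$. The black tokens on the triangle places ($selectTran$, $\text{select}^t_x$, $t^\text{run}_x$, $t^\text{report}$) act as a program counter. In any encoding configuration, exactly one black token sits in $selectTran$ and all other $\blacksquare$-places are empty.

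\textbf{Step (1).} Fix a step with mode $e$ and $\X(t)=\{x_1,\dots,x_n\}$.
\begin{itemize}
\item Fire $t^\text{select}_{x_1},\dots,t^\text{select}_{x_n}$ in order, each as the system autonomous (empty-$\theta$) event. The firing of $t^\text{select}_{x_i}$ moves the object $\tup{sim,m_{e(x_i)}}$ to $t^\text{selected}_{x_i}$ unchanged; this is the mode with $\lambda=\rho$ on the inner marking, which is possible since there is a single consumed and a single produced $N_\D$ object.
\item Fire $t^\text{select}_\nu$, if present, which distributes black tokens to all $t^\text{run}$ places.
\item For each $x_i$, fire $t^\text{fire}_{x_i}$ synchronized with $t_{x_i}$. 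It is enabled because $F_{x_i}(P,t)\leq m_{e(x_i)}$, and it puts $\tup{sim,m'_{e(x_i)}}$ back together with a token in $t^\text{report}$.
\item For each fresh $\nu$, fire $t^\text{fire}_\nu$ synchronized with $t_\nu$, starting from the empty $\blacksquare$ marking. Its produced object is $\tup{sim,F_\nu(t,P)}$. Note that the gadget in the figure has a single $\nu$; several fresh variables are handled in the same way, one per variable.
\item Finally, $t^\text{done}$ consumes the $n+1$ report tokens and restores $selectTran$.
\end{itemize}
The intermediate configurations are not encoding configurations, because either $selectTran$ is empty or some auxiliary place is marked. The last configuration is exactly $\xi(M')$, by the additivity $\xi(M)+\tup{sim,m}=\xi(M+m)$.

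\textbf{Step (2).} Let $P$ be a finite phase starting at $\xi(M)$. The only enabled first events are the $t^\text{select}_{x_1}$ for the various $t$, since all other transitions need a token on some auxiliary $\blacksquare$-place. This commits the phase to one $t$. I would prove an invariant on the configurations reachable inside the phase, by induction on the length of the prefix. The invariant states that the black tokens follow the linear pattern of the gadget, that $selectTran$ is empty until $t^\text{done}$ fires, and that each place $t^\text{selected}_{x}$ holds at most one object.

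From this invariant it follows that the first encoding configuration after the start must come right after $t^\text{done}$. At that point each $t^\text{fire}_x$ has fired exactly once, since $t^\text{done}$ needs $n+1$ report tokens and each run token is used only once. Each $t^\text{select}_{x_i}$ has picked a distinct object from $sim$, which defines an injective mode $e$. The objects picked and returned in between only change by the firings of $t_{x}$. The only semantic subtlety is that a one-to-one transfer may not alter the inner marking, which holds because conservativity here means a single object of $N_\D$ in pre and post. Objects that stay in $sim$ throughout are untouched. The final configuration is therefore $\xi(M')$ with $M\rightarrow^{t,e}M'$.

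\textbf{Main obstacle.} The delicate part is (2), specifically the interleaving of the $t^\text{fire}_x$ events with one another. I also need to exclude that $t^\text{select}$ events of a \emph{different} transition $t'$ start in the middle of the phase. This is excluded because they require the single token in $selectTran$, which is absent throughout the phase. The $t^\text{fire}$ events commute, since they touch disjoint objects, so I would normalize the run into the canonical order of step (1) by a commutation argument, and then read off the mode.
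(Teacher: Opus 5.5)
Your proposal is correct and follows the same route as the paper. The forward direction traces the $t^{select}$ chain, the $t^{fire}$ events and $t^{done}$ to reach $\xi(M')$. The converse splits a finite phase into the select prefix, the fire block and $t^{done}$, reads off the mode $e(x_i)=m_i$, and concludes by injectivity of $\xi$. Your invariant makes explicit the paper's unjustified ``organize $\pi$ into three blocks'' step. The commutation step is unnecessary, though: the control tokens already force that block structure, and the order of the $t^{fire}$ events does not affect the mode you read off.
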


The proof is available in App.\ref{app:phaseEncodingProofs}. We now show that $\xi$ is perfect.
\begin{lemma}
    The configuration encoding $\xi$ in Def.\ref{def:lowerBoundEncode} of $CG_N$ into a $CG_{\os}=(\C_\os,\leq_f,\norm{\bullet}_\os)$ is $f$-perfect, for some polynomial function $f$.
\end{lemma}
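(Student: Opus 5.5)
The plan is to verify, one at a time, the four properties in the definition of an $f$-perfect encoding: embedding, uniformity, phase-finiteness and $f$-boundedness. Lemma~\ref{lemma:nuToCEOSphaseEncoding} already gives that $\xi$ is a phase encoding. Throughout I will use the structural facts about Fig.~\ref{fig:lowerboundnuPNtoEOSblock}: the objects of type $N_\D$ live only in $sim$ and the places $t^\text{selected}_x$, and every other system place has type $\blacksquare$ and carries $\varepsilon$.

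\emph{Embedding.} Take $M=\fmset{m_1,\dots,m_n}$ and $M'=\fmset{k_1,\dots,k_\ell}$.
\begin{itemize}
\item If $M\preceq M'$ via an injection $e$ with $m_i\sqleq k_{e(i)}$, then $\xi(M')$ arises from $\xi(M)$ in three ways: adding inner tokens $k_{e(i)}-m_i$ to the object $\tup{sim,m_i}$; adding the objects $\tup{sim,k_j}$ for $j\notin e([n])$; and leaving $\tup{selectTran,\varepsilon}$ untouched. Hence $\xi(M)\leq_f\xi(M')$.
\item Conversely, $\xi(M)\leq_f\xi(M')$ is witnessed by a place-preserving injection of nested tokens whose inner markings grow. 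The unique token on $selectTran$ must map to the unique token on $selectTran$. The $sim$-objects then give an injection $e$ with $m_i\sqleq k_{e(i)}$, so $M\preceq M'$.
\end{itemize}

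\emph{Phase-finiteness and the shape of phases.} I will trace a run from an encoding configuration $\xi(M)$. Initially the only enabled events are $t^\text{select}_{x_1}$ for $t\in T$, since $selectTran$ is the only marked $\blacksquare$-place. From then on the $\blacksquare$-control token moves deterministically along the chain $select^t_{x_2},\dots,t^\text{select}_\nu$. This moves $n$ objects from $sim$ into the places $t^\text{selected}_{x_i}$ and then puts one token on each $t^\text{run}_x$. Each $t^\text{fire}_x$ fires at most once, placing one object in $sim$ and one token in $t^\text{report}$. Finally $t^\text{done}$ consumes $n+1$ report tokens and restores $selectTran$.

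Any run starting at $\xi(M)$ that avoids encoding configurations therefore has length at most $2(n+1)+1\le 2\max_t|\var(t)|+1$. So no infinite phase exists. This case analysis should give an invariant $I$ for every configuration reachable from an encoding configuration, whose first clause is:
\begin{itemize}
\item $selectTran$ is marked iff no other $\blacksquare$-place is marked and no place $t^\text{selected}_x$ holds an object.
\end{itemize}

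\emph{Uniformity.} Suppose $H\geq_f\xi(K)$ and $H$ is reachable from an encoding configuration. Then $H$ has a token on $selectTran$. By $I$, every other $\blacksquare$-place of $H$ is empty, so $H$ has exactly one $selectTran$ token and otherwise only $N_\D$-objects in $sim$. Hence $H=\xi(M'')$, where $M''$ is the multiset of inner markings of those objects, and $H$ is an encoding configuration.

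I expect the main obstacle to be stating and proving $I$ carefully. It must be checked across all the blocks of Fig.~\ref{fig:lowerboundnuPNtoEOSblock}, including the variant without $\nu$. It must also exclude spurious configurations with two $selectTran$ tokens or a leftover report token. I would prove it by induction on run length, splitting into cases on which block is currently active, that is, which $\blacksquare$-place of which $t$ holds tokens.

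\emph{$f$-boundedness.} Along a run from $\xi(K)$ that contains no other encoding configuration, the norm $\norm{\bullet}_\os$ changes in three ways:
\begin{itemize}
\item System-net tokens: at most $n+1$ extra $\blacksquare$-tokens are present at any time, and $N_\D$-objects are only moved, apart from the at most one new $\nu$-object.
\item Inner tokens: these change only through the $t^\text{fire}_x$ events, each fired at most once per phase.
\item Per-phase total: the inner tokens increase by at most $c_F=\max_t\sum_{x,p}F_x(t,p)$.
\end{itemize}
Thus $\norm{H}_\os\le\norm{\xi(K)}_\os+c$, where $c=c_F+2\max_t|\var(t)|+2$. This is the polynomial (indeed linear) $f(k)=k+c$.
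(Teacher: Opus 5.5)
Your proposal is correct and takes essentially the same approach as the paper: it checks embedding, uniformity, phase-finiteness and $f$-boundedness in turn, by tracing the single control token through the gadget of Fig.~\ref{fig:lowerboundnuPNtoEOSblock} and counting at most one firing per event per phase. It is more careful than the paper's sketch in two places: in the converse embedding direction you account for growth of inner markings rather than only added objects, and you state and plan to prove by induction the reachability invariant that the paper merely asserts in one line before its uniformity argument.
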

\begin{proof}
If $H$ is reachable from some encoding configuration $\xi(K)$, then $H$ is reachable via some phase starting from an encoding configuration. Thus, if it has a $\blacksquare$ token on $selectTran$, then $H=\sum_{i=1}^n\tup{sim,m_i}+\tup{selectTran,\varepsilon}$.

\paragraph{Embedding} Clearly, $\xi$ is monotone. Vice-versa, if $\xi(H)\geq_f \xi(K)$, since both $\xi(H)$ and $\xi(K)$ are encoding configurations with a $\blacksquare$ on $selectTran$, they differ only on the tuples on $sim$. Specifically, $\xi(K)=\xi(H)+\sum_{m\in M}\tup{sim,m}=\xi(H+M)$ for some finite set $M$ of tuples over $P$. By injectivity of $\xi$, $K=H+M\succeq H$. Thus, $\xi$ is an embedding.

\paragraph{Uniform}  If $H$ is reachable from an encoding configuration and $H\geq_f\xi(K)$, then $H=\xi(K)+\sum_{m\in M}\tup{sim,M}=\xi(K+\sum_{m\in M}\tup{sim,M})$ for some finite set $M$ of tuples over $P$. Thus, $\xi$ is uniform.

\paragraph{phase-finite} Each phase $\pi$ starts with the firing of a $t^{select}_{x_1}$ event. Now, each transition in Fig.\ref{fig:lowerboundnuPNtoEOSblock} can fire at most once, firing first the $t^{select}_{x}$ events, then the $t^{fire}_{x}$ events, and finally the $t^{done}$ event, which returns a new encoding configuration. For each $t$, these are at most $2|\var(t)|+3$ events. Thus, each phase has at most $2\max_{t\in T}|\var(t)|+3$ steps. Thus, each phase is finite.

\paragraph{$f$-bounded} Since each event in a phase can fire at most once, each phase is long at most $s=2\max_{t\in T}|\var(t)|+3$, and each event creates at most $k=\max_{t\in T, x\in \var(t)}(\sum post(t_{x_i}))+2$, if $H$ is reachable from $K$ during a phase, $\norm{H}_N\preceq \norm{K}_N+sk$.

\end{proof}

Thus, by the lemmas in Sec.\ref{sec:phaseencoding}, $\xi$ preserves the instances of coverability, termination, and boundedness. Since, for each \nPN $N$ and \nPN configuration $M$ the corresponding encoding cEOS $\os$ and encoding configuration $\xi(M)$ can be built in polynomial time, we have reduced the problems over \nPNs to the corresponding problems over cEOSs.
\begin{theorem}\label{thm:npnCEOS}
    There is a polynomial reduction from \nPN-coverability, (respectively, -termination, -boundedness) to cEOS-coverability (-termination, -boundedness).
\end{theorem}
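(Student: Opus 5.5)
The theorem follows by combining the construction of this section with the three transfer lemmas of Sec.~\ref{sec:phaseencoding}. Given a \nPN $\D=(P,T,F)$, build the cEOS $\os$ described above: the object net $N_\D$, the system net with places $sim$ and $selectTran$, and one gadget per $t\in T$ as in Fig.~\ref{fig:lowerboundnuPNtoEOSblock}, together with its events. Map each problem instance $I$ to $\xi(I)$, where $\xi$ is the encoding of Def.~\ref{def:lowerBoundEncode}. For coverability, $I=(\C_\D,M_0,M_1)$ goes to $(\C_\os,\xi(M_0),\xi(M_1))$. For termination and boundedness, $(\C_\D,M_0)$ goes to $(\C_\os,\xi(M_0))$.

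Correctness comes from the lemmas already proved. By Lemma~\ref{lemma:nuToCEOSphaseEncoding}, $\xi$ is a phase encoding, and by the subsequent lemma it is $f$-perfect for a polynomial $f$. In particular $\xi$ is uniform, phase-finite and $f$-bounded.

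\begin{itemize}
\item Uniformity lets Lemma~\ref{coverLemmaLabel} give the equivalence of the coverability instances.
\item Phase-finiteness lets Lemma~\ref{terminationLemmaLabel} give the equivalence of the termination instances.
\item Boundedness lets Lemma~\ref{boundedLemmaLabel} give the equivalence of the boundedness instances. Its finitary hypothesis holds because cEOS qonCGs are finitary, as noted after the definition of finitary qonCGs.
\end{itemize}

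Before applying these lemmas, check that $\os$ really is conservative. Every system-net transition consuming from $sim$ or from a $t^\text{selected}_x$ place (type $N_\D$) also produces onto a place of type $N_\D$:
\begin{itemize}
\item $t^\text{select}_x$ moves an object from $sim$ to $t^\text{selected}_x$;
\item $t^\text{fire}_x$ moves it back to $sim$.
\end{itemize}
Transitions consuming $\blacksquare$-typed places always produce some $\blacksquare$ token:
\begin{itemize}
\item $t^\text{fire}_x$ puts one on $t^\text{report}$;
\item $t^\text{done}$ puts one on $selectTran$;
\item the select chain passes its token along from place to place.
\end{itemize}
Only $t^\text{select}_\nu$ produces an $N_\D$ token without consuming one, and this is allowed.

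It remains to show the translation is polynomial. The size of $\os$ is linear in $|\D|$:
\begin{itemize}
\item $N_\D$ has $|P|$ places and $\sum_t|\var(t)|$ transitions.
\item Each gadget has $O(|\var(t)|)$ places, transitions and events, with arc weights bounded by $|\var(t)|+1$ and by the weights of $F$.
\end{itemize}
Each encoded configuration $\xi(M)$ is obtained by copying each tuple of $M$ as the inner marking of an object on $sim$ and adding one $\blacksquare$ on $selectTran$, which takes linear time. The main point needing care is that the target of a coverability instance must be encoded with the same $\xi$, so that $\geq_f$ on cEOS configurations matches $\preceq$. This is exactly the embedding property already established, so I expect no real obstacle beyond bookkeeping of encoding sizes, for instance writing weights in binary and keeping them unchanged.
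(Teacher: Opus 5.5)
Your proposal is correct and follows the paper's own route: $\xi$ from Def.~\ref{def:lowerBoundEncode} is an $f$-perfect phase encoding, so Lemmas~\ref{coverLemmaLabel}, \ref{terminationLemmaLabel} and \ref{boundedLemmaLabel} transfer coverability, termination and boundedness, and the cEOS and $\xi(M)$ can be built in polynomial time; your explicit conservativity check is a useful addition that the paper leaves implicit. One small slip in that check: the event that creates an $N_\D$ object without consuming one is $t^\text{fire}_\nu\tup{t_\nu}$, which puts the fresh object on $sim$, whereas $t^\text{select}_\nu$ only distributes $\blacksquare$ tokens to the $t^\text{run}$ places, and conservativity is unaffected.
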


Since \nPN-coverability is hyper-Ackermanian, $\F_{\omega2}$-complete~\cite{LazicS16}, and \nPN-boundedness and -termination are non-primitive recursive~\cite{DBLP:journals/tcs/Rosa-VelardoF11}, we obtain analogous lower-bounds for cEOSs.
\begin{corollary}
    cEOS-coverability is $\F_{\omega2}$-hard. cEOS-termination and -boundedness are non-primitive recursive.
\end{corollary}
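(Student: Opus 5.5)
The plan is to derive the corollary directly from Theorem~\ref{thm:npnCEOS} together with the known lower bounds for \nPNs, with no new construction. The argument is the standard transfer of hardness along reductions. Coverability is handled first, then termination and boundedness.

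For coverability, recall that \nPN-coverability is $\F_{\omega2}$-hard~\cite{LazicS16}. In the fast-growing hierarchy, hardness for $\F_{\omega2}$ is defined with respect to primitive-recursive (in fact elementary) many-one reductions, and the class is closed under them. Theorem~\ref{thm:npnCEOS} maps each instance $I=(\D,M_0,M_1)$ to the equivalent instance $\xi(I)=(\os,\xi(M_0),\xi(M_1))$ in polynomial time. Equivalence comes from Lemma~\ref{coverLemmaLabel}, since $\xi$ is a uniform phase encoding. Composing the reduction witnessing $\F_{\omega2}$-hardness of \nPN-coverability with this polynomial-time reduction gives a reduction from any $\F_{\omega2}$ problem to cEOS-coverability. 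A polynomial reduction is in particular elementary, so the composite is still of the admissible kind.

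For termination and boundedness, I would argue by contradiction. Suppose cEOS-termination were decidable by a primitive recursive procedure $A$. Given a \nPN instance $(\D,M)$, compute $(\os,\xi(M))$ in polynomial time and run $A$ on it. By Lemma~\ref{terminationLemmaLabel}, which applies because $\xi$ is phase-finite, the answer is correct. The composition of a polynomial-time function with a primitive recursive one is primitive recursive, so this would contradict the non-primitive-recursiveness of \nPN-termination~\cite{DBLP:journals/tcs/Rosa-VelardoF11}. The same argument works for boundedness using Lemma~\ref{boundedLemmaLabel}, since $\xi$ is $f$-bounded for a polynomial $f$ and cEOS qonCGs are finitary.

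The only delicate point is not mathematical depth but checking that the hypotheses of the three transfer lemmas are all met. The perfectness lemma just proved supplies them, as it shows $\xi$ is an embedding, uniform, phase-finite and $f$-bounded. It also requires that the notion of reduction in the definition of $\F_{\omega2}$-hardness accommodates polynomial-time reductions, which is immediate because it allows all elementary reductions.
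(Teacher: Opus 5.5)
Your proposal is correct and follows the paper's own route: the corollary transfers the known \nPN lower bounds along the polynomial reduction of Theorem~\ref{thm:npnCEOS}, whose correctness rests on the $f$-perfect phase encoding together with the three transfer lemmas of Sec.~\ref{sec:phaseencoding}. Your remarks on the closure of $\F_{\omega 2}$-hardness under polynomial-time reductions and your explicit primitive-recursive contradiction argument spell out what the paper leaves implicit in the sentence preceding the corollary.
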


\section{From cEOS to \texorpdfstring{\GnPN}{c-nuPN}}\label{sec:fromEOS}

In this section, we show that for any given cEOS \( \os= (\hat{N},\N,d,\Theta)\), there is a polynomial-time constructible \GnPN \(W=(P,T,F,G)\) that simulates \(\os\). We assume without loss of generality that, for each system net transition \(\tau \in \hat{T}\), \(\tau\) is involved in exactly one event.\footnote{If \(\tau\) is involved in multiple events $(\theta_i)_{i=1}^n$, we substitute \(t\) in \(\theta_i\) with a copy \(t_i\) of \(t\), for each $i\in \{1,\cdots,n\}$.}
In what follows, we show only how to encode an arbitrary synchronous event $e=\tup{\tau,\theta}$ according to the EOS semantics,
i.e., by serializing merging, internal firing, and distribution phases. In fact, object and system autonomous events are restricted forms of synchronous events. We assume that $\tau$ involves only two system place types $N,\blacksquare\in\N$.\footnote{The construction for events involving several types is analogous, requiring only the serialization of a copy of the given construction for each type.} The \GnPN $W$ combines several blocks of places:
\begin{enumerate*}
    \item \textbf{\(\hat{p}\)-Block} (see Fig.~\ref{fig:ceosTonuPNBasicBlocks}), for each system net place $\hat{p}$, used to encode the configurations of $\os$;
    \item \textbf{$N$-merged} and \textbf{$N$-updated} blocks (see Fig.~\ref{fig:emerge} and Fig.~\ref{fig:eupdating}), for each type \(N \in \N\setminus\{\blacksquare\}\), consists of some auxiliary places to store the result of the merging phase and conduct the internal firing phase of the event $e$.
\end{enumerate*}
These blocks are utilized in several modules to simulate the firing of an event \(e\) in \(\os\).
\begin{enumerate*}
    \item \textbf{$e$-merging} modules (see Fig.~\ref{fig:emerge}), to capture the dynamics of the merging stage;
    \item \textbf{$e$-updating module} (see Fig.~\ref{fig:eupdating}), to capture the dynamics of the internal firing stage;
    \item \textbf{$e$-distributing module} (see Fig.~\ref{fig:edist}), to capture the dynamics of the internal firing stage and the distribution stage.
\end{enumerate*} 
These modules fire in sequence, in the order above, thanks to the passing of a $\blacksquare$ token handled by a $x_0$ variable.
Overall, these modules can be built in polynomial time, and their consecutive firing allows $W$ to simulate the firings of $e$ in $\os$.  

In the rest of this section, we first define \(\hat{p}\) block and formally state the configuration encoding of the EOS in \cref{subsec:encoding}. Then, in \cref{subsec:phase_encoding_ceos_to_cnpn}, we define \(N\)-merged and \(N\)-updated blocks followed by the definition of all the modules that constitute the phase encoding of a single step of \(\os\) in \(W\), through the example in Fig.~\ref{fig:ceosex}. We conclude the section by proving that the phase encoding defined in \cref{subsec:phase_encoding_ceos_to_cnpn} is \(f\)-perfect in \cref{subsec:f_perfect_ceos_to_cnpn}

\begin{figure}[t]
    \begin{subfigure}{0.45\linewidth}
        \centering
        \scalebox{0.8}{

\begin{tikzpicture}
    \begin{scope}
    \node[place,label={[name=p1obj1Lab]left:\scriptsize$p_1$}]at(0,0)(p1obj1){};
    \node at(p1obj1) {\scriptsize$\bullet\bullet$};
    \node[transhor,label=left:\scriptsize$t$]at(0,-1)(tobj1){};
    \node[place,label=left:\scriptsize$p_2$]at(0,-2)(p2obj1){};
    \node at(p2obj1) {\scriptsize$\bullet$};
    \draw[->] (p1obj1) -- (tobj1);
    \draw[<-] (p2obj1) -- (tobj1);

    \node[draw,dashed,fit={(p1obj1)(tobj1)(p2obj1)(p1obj1Lab)}](obj1){};
    \end{scope}
   \begin{scope}[xshift=1.7cm]
    \node[place,label={ left:\scriptsize$\hat{p}_1$}]at(0,0)(p1sys){};
    \node[xshift=-.1cm] (bul1)at(p1sys) {\scriptsize$\bullet$};
        \node[xshift=.1cm] (bul2)at(p1sys) {\scriptsize$\bullet$};
    \node[transhor,label={left:\scriptsize$\hat{t}$},label={below left:\scriptsize$\fmset{2t}$}]at(0,-1)(tsys){};
    \node[place,label=left:\scriptsize$\hat{p}_2$]at(0,-2)(p2sys){};

    \draw[->] (p1sys) --node[midway,right]{\scriptsize$2$} (tsys);
    \draw[<-] (p2sys) --node[midway,right]{\scriptsize$2$} (tsys);

    \end{scope}

   \begin{scope}[xshift=2*1.7cm]
    \node[place,label={[name=p1obj2lab]left:\scriptsize${p}_1$}]at(0,0)(p1obj2){};
    \node at(p1obj2) {\scriptsize$\bullet$};
    \node[transhor,label=left:\scriptsize$t$]at(0,-1)(tobj2){};
    \node[place,label=left:\scriptsize${p}_2$]at(0,-2)(p2obj2){};
    \node at(p2obj2) {\scriptsize$\bullet\bullet$};
    \draw[->] (p1obj2) -- (tobj2);
    \draw[<-] (p2obj2) -- (tobj2);

    \node[draw,dashed,fit={(p1obj2)(tobj2)(p2obj2)(p1obj2lab)}](obj2){};
    \end{scope}

    \draw[dashed] (obj1)--(bul1.center);
    \draw[dashed] (obj2)--(bul2.center);
\end{tikzpicture}
}
      \caption{}
      \label{fig:ceosex}
\end{subfigure}
\begin{subfigure}{0.45\linewidth}
        \centering
        \scalebox{1}{\begin{tikzpicture}

\path [rectangle, fill=gray!10](2.5,-1.7) to (3.5,-1.7) to (3.5,1.5) to (2.5,1.5);
\node[label={[name=p1BlockLab]above:\scriptsize  $\hat{p}_1$-block}](p1Block)at(3,1.3){};
\node[place,label={[name=idpLab]below:\scriptsize $Id_{\hat{p}_1}$}](idp)at (3,0){};
\node[empty] (txt) at (3,0.1){\scriptsize \tiny $a_1$};
\node[empty] (txt) at (3,-0.1){\scriptsize \tiny $a_2$};
\node[place,label={[name=p1Lab]below:\scriptsize $p_1$}](p1)at (3,1){};
\node[empty] (txt) at (3,1.1){\scriptsize \tiny $a_1 a_1$};
\node[empty] (txt) at (3,0.9){\scriptsize \tiny $a_2$};
\node[place,label={[name=p2Lab]below:\scriptsize $p_2$}](p2)at (3,-1){};
\node[empty] (txt) at (3,-0.9){\scriptsize \tiny $a_1$};
\node[empty] (txt) at (3,-1.1){\scriptsize \tiny $a_2a_2$};

 \path [rectangle, fill=gray!10](5,-1.7) to (6,-1.7) to (6,1.5) to (5,1.5);

\node[label={[name=idpnBlockLab]above: \scriptsize $\hat{p}_2$\text{-block}}](pNBlock)at(5.5,1.3){};
\node[place,label={[name=idpnLab]below:\scriptsize $Id_{\hat{p}_2}$}](idpn)at (5.5,0){};

\node[place,label={[name=p1NLab]below:\scriptsize $p_1$}](p1N)at (5.5,1){};
\node[place,label={[name=p2NLab]below:\scriptsize $p_2$}](p2N)at (5.5,-1){};

\end{tikzpicture}
}     
      \caption{}
      \label{fig:flatten}
  \end{subfigure}
  \caption{A cEOS Example (\subref{fig:ceosex}) and its encoding configuration in two blocks (\subref{fig:flatten}).}
  \label{fig:ceosTonuPNBasicBlocks}
\end{figure}

\subsection{cEOS to \texorpdfstring{\nPN}{nuPN} encoding} \label{subsec:encoding}

To formally define the configuration encoding \(\xi\), we need only \textbf{\(\hat{p}\)-Block} next to the place $p^{init}$ in Fig.\ref{fig:emerge}. The other places are used only in the phases and remain empty in each encoding configuration. For each $\hat{p}\in \hat{P}$, the module \textbf{\(\hat{p}\)-Block},  is used to encode in $W$ the objects hosted by $\hat{p}$. Specifically, \textbf{\(\hat{p}\)-Block} contains (a disjoint copy of) the places of $N$ as well as a place $\Id_{\hat{p}}$. 
The latter is used to store identifiers dedicated to each encoded object. 
A configuration $\tup{\hat{p},m}$ of $\os$ is captured by the tuple $m'$ of \textbf{\(\hat{p}\)-Block} that extends $m$ with a token on $\Id_{\hat{p}}$.
Note that, thanks to the place $\Id_{\hat{p}}$, this encoding can even witness the case $m=0$, by placing a single token on $\Id_{\hat{p}}$.

Say that $p$ is a place of $N\in \N$, we denote by $p^{\hat{q}}$ the corresponding place in $\hat{q}$-block for each system net place $\hat{q}$ of type $N$, by $p^{mer}$ the corresponding place in $N$-merged, and by $p^{upd}$ the corresponding place in $N$-updated. The following function $\K$ is used to transform objects in $\os$ as tuples in the modules of $W$.

\begin{definition}
    Given a configuration $c=\tup{\hat{p},m}$ of $\os$ where $d(\hat{p})=N=(P_N,T_N,F_N)$, we define the tuple 
    \begin{align*}
        \K(c)=&\sum_{p\in P_N} m(p)\delta_{p^{\hat{p}}}+\delta_{\Id_{\hat{p}}}&
        &
        \\
        \K_{mer}(c)=&\sum_{p\in P_N} m(p)\delta_{p^{mer}}+\delta_{\Id_N^m}&
        \bar{\K}_{mer}(N,{m})=&\sum_{p\in P_N} m(p)\delta_{p^{\hat{p}}}+\delta_{\Id_{\hat{p}}}\\
        \K_{upd}(c)=&\sum_{p\in P_N} m(p)\delta_{p^{upd}}+\delta_{\Id_N^u}&
        \bar{\K}_{mer}(N,{m})=&\sum_{p\in P_N} m(p)\delta_{p^{\hat{p}}}+\delta_{\Id_{\hat{p}}}
    \end{align*}
\end{definition}

We extend $\K$, $\K_{mer}$, and $\K_{upd}$ to sums of pairs $c_i=\tup{\hat{p}_i,m_i}$ in the trivial way: given a configuration $M=\sum_{i=1}^\ell \tup{\hat{p}_i,m_i}$ of $\os$, letting $c_i=\tup{\hat{p}_i,m_i}$, we define
    \begin{align*}
        \K(M)=&\sum_{i=1}^\ell \fmset{\K(c_i)}\\
        \K_{mer}(M)=&\sum_{i=1}^\ell \fmset{\K_{mer}(c_i)}\\
        \K_{upd}(M)=&\sum_{i=1}^\ell \fmset{\K_{upd}(c_i)}\\
    \end{align*}
We can now define the encoding of an arbitrary configuration.
\begin{definition} \label{def:encoding_ceos_to_cnpn}
    The encoding configuration $\xi(M)$ of a configuration $M$ of $\os$ is the configuration $\xi(M)=\K(M)+\fmset{\delta_{p^{init}}}$ of $W$.
\end{definition}

Note that the function $\xi:C_{\os}\longrightarrow C_{W}$ is injective, i.e., a configuration encoding of $CG_{\os}$ into $CG_{W}$.

\subsection{Phase Encoding} \label{subsec:phase_encoding_ceos_to_cnpn}

We now come back to the construction, where we show the modules in \GnPN \(W\) that constitute a phase encoding of a single step in \(\os\) through the running example in \cref{fig:ceosex}.

\smallskip
\noindent\textbf{\textbf{$N$-merged} and \textbf{$N$-updated} blocks.}
These blocks are used to store the objects after the merging and internal firing stages and are analogous to the blocks above. 
The \textbf{$N$-merged} module and the \textbf{$N$-updated} modules contain a disjoint copy of the places of $N$
as well as the place $\Id_{N}^m$ and $\Id_{N}^u$, respectively. 
The encoding of a merged/updated tuple is similar to that in \textbf{$\hat{p}$-Block}.

\smallskip
\begin{figure}[t]
        \centering
\scalebox{1}{\begin{tikzpicture}

\node[place,label={[name=objenbLab]below:\scriptsize $p^{init}$}](pmerge)at (5.5,-1.25){};
\node[place,label={[name=objenbLab]above:\scriptsize $p^{merged}_{e}$}](pmerged)at (5.5,1.25){};
\path [rectangle, fill=gray!10](2.5,-1.7) to (3.5,-1.7) to (3.5,1.5) to (2.5,1.5);
\node[label={[name=p1BlockLab]above:\scriptsize  $\hat{p}_1$-block}](p1Block)at(3,1.3){};
\node[place,label={[name=idpLab]below:\scriptsize $Id_{\hat{p}}$}](idp)at (3,0){};
\node[empty] (txt) at (3,0.1){\scriptsize \tiny $a_1$};
\node[empty] (txt) at (3,-0.1){\scriptsize \tiny $a_2$};
\node[place,label={[name=p1Lab]below:\scriptsize $p_1$}](p1)at (3,1){};
\node[empty] (txt) at (3,1.1){\scriptsize \tiny $a_1 a_1$};
\node[empty] (txt) at (3,0.9){\scriptsize \tiny $a_2$};
\node[place,label={[name=p2Lab]below:\scriptsize $p_2$}](p2)at (3,-1){};
\node[empty] (txt) at (3,-0.9){\scriptsize \tiny $a_1$};
\node[empty] (txt) at (3,-1.1){\scriptsize \tiny $a_2a_2$};

\node[transvert,minimum height=12mm,minimum width=5mm] (tmerge)at (5.5,0){};
\node [rotate=-90]at(5.5,0){\tiny$\tau^{merge}_e$};

 \path [rectangle, fill=gray!10](7,-1.7) to (8,-1.7) to (8,1.5) to (7,1.5);

\node[label={[name=idpnBlockLab]above:\tiny N\text{-Merged}}](pNBlock)at(7.5,1.3){};
\node[place,label={[name=idpnLab]below:\scriptsize $\Id_N^m$}](idpn)at (7.5,0){};

\node[place,label={[name=p1NLab]below:\scriptsize $p_1$}](p1N)at (7.5,1){};
\node[place,label={[name=p2NLab]below:\scriptsize $p_2$}](p2N)at (7.5,-1){};

\draw[->](idp)--node[above,near start, yshift = -0.2]{\scriptsize $x_1^{\hat{p}_1},x_2^{\hat{p}_1}$}(tmerge);
\draw[->,myDouble] (p1) -- node[above,midway,sloped] {\scriptsize $x_1^{\hat{p}_1},x_2^{\hat{p}_1}$}(tmerge);
\draw[->,triple] (p2) -- node[below,midway,sloped] {\scriptsize $x_1^{\hat{p}_1},x_2^{\hat{p}_1}$}(tmerge);

\draw[->](tmerge) -- node[left,midway] {\scriptsize $x_{cs}$}(pmerged);

\draw[->,myDouble] (tmerge) -- node[above,midway,sloped] {\scriptsize $x_1^{\hat{p}_1},x_1^{\hat{p}_1}$}(p1N);
\draw[->] (tmerge) to (idpn);
\node[empty] (txt) at (6.7,0.3){\scriptsize $x_1^{\hat{p}_1}$};
\draw[->,triple] (tmerge) -- node[below,midway,sloped] {\scriptsize $x_1^{\hat{p}_1},x_1^{\hat{p}_1}$}(p2N);

\draw[<-](tmerge.south) -- node[midway,left]{\scriptsize $x_{cs}$}(pmerge.north);

\end{tikzpicture}
}     
      \caption{Module $e$-merging.}
      \label{fig:emerge}
\end{figure}
\noindent\textbf{{Module \textbf{\(e\)-merging}}.}
This module captures the dynamics of the merging phase. It contains the places $p^{init}$ and $p^{merged}_e$ as well as the transition $\tau^{merge}_e$. The two places behave as enable and acknowledgment for the transition. Thus, initially, $p^{init}$ is populated by a dedicated \textit{control sequence} token (moved around by the variable $x_{cs}$).
In \(\os\), for each place $\hat{p}$ from which $\tau$ consumes, there is a channel from each non-$\Id$ place of \textbf{$\hat{p}$-Block} to the corresponding place in \textbf{$N$-merged}. 
The pre-channels from $\hat{p}$ are labeled by $x_1^{\hat{p}},\dots,x_n^{\hat{p}}$, where $n$ is the the number of objects consumed by $\tau$ from $\hat{p}$ in \(\os\). The corresponding post-channel is labeled by the constant sequence $x_1^{\hat{p}},\dots,x_1^{\hat{p}}$ of length $n$. 
Moreover, to take care of the identifiers of the encoded objects consumed by \(\tau\), $\tau^{merge}_e$ consumes from $\Id_{\hat{p}}$ one token for each $x_1^{\hat{p}},\dots,x_n^{\hat{p}}$ and produces only one token $x_1^p$ in $\Id_N^m$, where $p$ is the minimum of the places from which $\tau$ consumes according to some fixed order among the system net places.

\smallskip
\begin{figure}[t]
        \centering
\scalebox{1} {
\begin{tikzpicture}
\path [rectangle, fill=gray!10](8,-1.5) to (10,-1.5) to (10,1.5) to (8,1.5);
\node[label={[name=idpnBlockLab]below:\scriptsize \tiny N\text{- Merged}}](pNBlock)at(9.2,2){};
\node[place,label={[name=idpnLab]left:\scriptsize $\Id_N^m$}](idpn)at (9,0){};
\node[place,label={[name=p1NLab]left:\scriptsize $p_1$}](p1N)at (9,1){};
\node[place,label={[name=p2NLab]left:\scriptsize $p_2$}](p2N)at (9,-1){};

\node[place,label={[name=Lab]right:\scriptsize $p_e^{fin}$}](pfin)at (11,2){};

\node[transhor] (tfin)at (11,1){};
\node[empty] (txt) at (11.6,1.3){\scriptsize$\tau_{e}^{fin}$};

\node[place](psel)at (11,-0.1){};
\node[empty] (txt) at (11.7,-0.3){\scriptsize$p_{e}^{select}$};
\node[transhor,label={[name=tNLab]left:\scriptsize  $t_e$}] (t)at (11,-1){};
\node[place,label={[name=Lab]below:\scriptsize $p_e^{merged}$}](pinit)at (9,-2){};
\draw[->](t)--node[left,midway] {\scriptsize $ x_{cs}$}(psel);
\draw[->](psel)--(tfin);
\node[empty,rotate=90](txt) at (10.8,0.45){\scriptsize $2x_{cs}$};
\draw[->](tfin)--node[left,midway] {\scriptsize $x_{cs}$}(pfin);
\draw[->,myDouble] (p1N) -- node[above,midway] {\scriptsize $x_{N}$}(tfin);
\draw[->,triple] (p2N) -- node[above,near start,sloped] {\scriptsize $x_{N}$}(tfin);

\path [rectangle, fill=gray!10](12.2,-1.5) to (14,-1.5) to (14,1.5) to (12.2,1.5);
\node[label={[name=BlockLab]above:\scriptsize \tiny N\text{- Updated}}](NUBlock)at(13.2,1.3){};
\node[place,label={[name=idpnLab]right:\scriptsize $\Id_N^u$}](idpu)at (13,0){};
\node[place,label={[name=Lab]right:\scriptsize $p_1$}](p1u)at (13,1){};
\node[place,label={[name=Lab]right:\scriptsize $p_2$}](p2u)at (13,-1){};
\draw[->,myDouble] (tfin) -- node[above,near end] {\scriptsize $x_{N}$}(p1u);
\draw[->,triple] (tfin) -- node[above,near end,sloped] {\scriptsize $x_{N}$}(p2u);

\draw[->](idpn)--node[above,near start,sloped] {\scriptsize $x_{N}$}(tfin);
\draw[->](tfin)--node[above,near end,sloped] {\scriptsize $x_{N}$}(idpu);
\draw[->](p1N) --($(p1N.south)+(-1,-0.2)$) -- node[left,pos=0.5] {\scriptsize $x_{N}$}($(p1N.south)+(-1,-2.2)$) -| ($(t.south)-(0.2,0)$);
\draw[->](t)--node[above,near end] {\scriptsize $x_{N}$}(p2u);
\node[transhor] (tinit)at (11,-2){};
\node[empty] (txt) at (11,-2.3){\scriptsize$\tau_{e}^{init}$};

\node[place,label={[name=Lab]below:\scriptsize $p_{e,t}^{fire}$}](pfire)at (13,-2){};
\draw[->](pinit)--node[below]{\scriptsize $x_{cs}$}(tinit);
\draw[->](tinit)--node[below]{\scriptsize $2 x_{cs}$}(pfire);
\draw[->](pfire)--($(pfire.north)+(0,0.2)$)--($(pfire.north)+(-1.5,0.2)$)--node[right]{\scriptsize$x_{cs}$}(t);
\end{tikzpicture}
}   
      \caption{Module $e$-updating.}
      \label{fig:eupdating}
\end{figure}
\noindent\textbf{Module \textbf{\(e\)-updating}.}
This module captures the dynamics of the internal firing phase applied to the \textbf{$N$-merged} modules after the merging phase, storing its result in the module \textbf{$N$-updated}. It contains the places $p_e^{select}$ as well as a set $\{p^{x}_{e,t} \mid x\in\{fire,fired\}, t \in \supp(\theta)\}$ of new places. The modules contains the transitions $\tau^{init}_e$ and $\tau^{fin}_e$ as well as a copy $t_e$ of the transitions $t\in\supp(\theta)$.
Places $p^{init}_e$, $p^{select}_e$, and $p^{fin}_e$ are used to fire $\tau^{init}_e$ and $\tau^{fin}_e$ in sequence.
The places $p^\text{fire}_{e,t}$ and $p^\text{fired}_{e,t}$ behave as enable to fire and acknowledgment of firing of $t$, respectively.
For each $t\in\supp(\theta)$, $\tau^{init}_e$ puts $\theta(t)$ tokens in $p^{fire}_{e,t}$. Thus, the transition $t_e$ in the module gets enabled $\theta(t)$ times. When firing, $t_e$ behaves as $t$, with the provision that its preconditions stem from places in \textbf{$N$-merged}, while its post-conditions are directed towards places in \textbf{$N$-updated}. The movement of tokens from \textbf{$N$-merged} to \textbf{$N$-updated} ensures that either the transitions in $\theta$ are all enabled at the start of this phase, or a deadlock is reached, i.e., the firing of a $t'\in\theta$ is not responsible for the enabling of another transition $t''\in\theta$.
Finally, all tokens remaining in \textbf{$N$-merged} are moved to the respective places in \textbf{$N$-updated} using the channels of $\tau^{fin}_e$, as well as the movement of the token from $\Id_{N}^m$ to $\Id_{N}^u$, for each $\hat{p}$ involved in the preconditions of $\tau$.

\begin{figure}[t]
    \begin{subfigure}{0.2\linewidth}
        \centering
        \scalebox{0.7}{
\begin{tikzpicture}

\node[place,label=below:\scriptsize$p^{fin}_e$]at(0,-2)(pfin){};
\node[transhor,label=left:\scriptsize$t_e^{id}$]at(0,-1)(t){};
\node[place,label=above:\scriptsize$p^{move(1)}_e$]at(-1,0)(pmove){};
\node[place,label=above:\scriptsize$p^{new}_e$]at(1,0)(pnew){};

\draw[->] (pfin)--node[midway,left]{\scriptsize$x_{cs}$} (t);
\draw[->] (t)-- node[midway,right]{\scriptsize$\nu_1\nu_2$}(pnew);
\draw[->] (t)--node[midway,left]{\scriptsize$x_{cs}$} (pmove);
    
\end{tikzpicture}
} 
      \caption{Module $e$-id-creation.}
      \label{fig:eidcreation}
    \end{subfigure}
    \begin{subfigure}{0.48\linewidth}
        \centering
\scalebox{0.7}{\begin{tikzpicture}

\path [rectangle, fill=gray!10](3.5,-1.7) to (4.5,-1.7) to (4.5,1.5) to (3.5,1.5);
\node[label={[name=BlockLab]above: \tiny N-Updated}](NBlock)at(4,1.3){};
\node[place,label={[name=idpLab]below:\scriptsize $Id_{N}^u$}](idN)at (4,1){};

\node[place,label={[name=p1Lab]below:\scriptsize $p_1$}](p1N)at (4,0){};
\node[place,label={[name=p2Lab]below:\scriptsize $p_2$}](p2N)at (4,-1){};

\node[transvert,label={[name=tLab]below:\scriptsize  $t_1$}](t1)at(5.5,0){};
\node[transvert,label={[name=tLab]below:\scriptsize  $t_2$}](t2)at(6,-1){};
 \path [rectangle, fill=gray!10](6.5,-1.7) to (7.5,-1.7) to (7.5,1.5) to (6.5,1.5);

\node[label={[name=pBlockLab]above:\tiny$\hat{p}_2$-block}](pBlock)at(7,1.3){};
\node[place,label={[name=idpnLab]left:\scriptsize $\Id_{\hat{p}_2}$}](idp)at (7,1){};

\node[place,label={[name=p1NLab]right:\scriptsize $p_1$}](p1p)at (7,0){};
\node[place,label={[name=p2NLab]right:\scriptsize $p_2$}](p2p)at (7,-1){};

\draw[->](p2N)--node[above, near start]{\scriptsize $x_N$}(t2);
\draw[->](t2)--node[above,near end]{\scriptsize $x_{Id}$}(p2p);
\draw[->](p1N)--node[above,near end]{\scriptsize $x_N$}(t1);
\draw[->](t1)--node[above,near end]{\scriptsize $x_{Id}$}(p1p);


\node[place](prename)at (8.5,1){};
\node[empty](txt) at (8.3,1.4){\scriptsize $p^{rename}_{e}$};
\node[place,label={[name=Lab]below:\scriptsize $p^{new}_{e}$}](pnew)at (8.5,-1){};

\node[transhor,label={[name=tLab]below:\scriptsize  $t_3$}](t3)at(9.5,0){};

\node[place](pmoving)at (10.5,1){};
\node[empty](txt) at (11,1.5){\scriptsize $p^{moving(1)}_{e}$};
\node[place,label={[name=Lab]below:\scriptsize $p^{move(1)}_{e}$}](pmove)at (10.5,-1){};
\node[transhor,label={[name=tLab]right:\scriptsize  $t^{moved(1)}_e$}](tmoved)at(9.5,2.5){};

\node[place,label={[name=Lab]left:\scriptsize $p^{transfer}_{e}$}](ptrans)at (7,2.5){};
\draw[->](pnew)--node[above,midway,sloped]{\scriptsize $x_{Id}$}(t3);
\draw[->](t3)--node[above,midway,sloped]{\scriptsize $x_{Id}$}(prename);
\draw[->](prename)--node[below,midway,sloped]{\scriptsize $x_{Id}$}(tmoved);

\draw[->](pmove)--node[above,midway,sloped]{\scriptsize $x_{cs}$}(t3);
\draw[->](t3)--node[above,midway,sloped]{\scriptsize $x_{cs}$}(pmoving);
\draw[->](pmoving)--node[below,midway,sloped]{\scriptsize $x_{cs}$}(tmoved);

\draw[<->]($(prename.south)+(-0.3,0.2)$)--node[left]{\scriptsize $x_{Id}$}($(prename.south)-(0.3,0.1)$)-|(t1.north);

\draw[<->](prename.south)--node[left]{\scriptsize $x_{Id}$}($(prename.south)-(0,1.1)$)-|(t2.north);

\draw[->]($(tmoved.south)-(0.3,0)$)|-node[below,near end]{\scriptsize $x_{Id}$}($(tmoved.south)-(1.8,0.1)$)|-($(idp)+(0.5,0)$)--(idp.east);
\draw[->](tmoved)--node[above,midway,sloped]{\scriptsize $x_{cs}$}(ptrans);
\end{tikzpicture}
}
        \caption{Module $e$-move(1).}
        \label{fig:emove}
    \end{subfigure}  
    \begin{subfigure}{0.3\linewidth}
        \centering
\scalebox{0.7}{\begin{tikzpicture}

\node[place,label={[name=objenbLab]below:\scriptsize $p^{transfer}_e$}](pmerge)at (5,-1.25){};
\node[place,label={[name=newLab]below:\scriptsize $p^{new}_e$}](pnew)at (6,-1.25){};
\node[place,label={[name=objenbLab]above:\scriptsize $p^{init}$}](pmerged)at (5.5,1.25){};
\path [rectangle, fill=gray!10](3.5,-1.7) to (4.5,-1.7) to (4.5,1.5) to (3.5,1.5);
\node[label={[name=BlockLab]above:\tiny N-Updated}](p1Block)at(4,1.3){};
\node[place,label={[name=idpLab]below:\scriptsize $Id_{N}^u$}](idp)at (4,0){};
\node[place,label={[name=p1Lab]below:\scriptsize $p_1$}](p1)at (4,1){};
\node[place,label={[name=p2Lab]below:\scriptsize $p_2$}](p2)at (4,-1){};
\node[transvert,minimum height=12mm,minimum width=5mm] (tmerge)at (5.5,0){};
\node [rotate=-90]at(5.5,0){\tiny$\tau^{transfer}_e$};
\path [rectangle, fill=gray!10](6.5,-1.7) to (7.5,-1.7) to (7.5,1.5) to (6.5,1.5);
\node[label={[name=idpnBlockLab]above:\scriptsize \tiny $\hat{p}_2$\text{-block}}](pNBlock)at(7,1.3){};
\node[place,label={[name=idpnLab]below:\scriptsize $\Id_{\hat{p}_2}$}](idpn)at (7,0){};

\node[place,label={[name=p1NLab]below:\scriptsize $p_1$}](p1N)at (7,1){};
\node[place,label={[name=p2NLab]below:\scriptsize $p_2$}](p2N)at (7,-1){};

\draw[->](idp)--node[above,near start]{\scriptsize $x_{N}$}(tmerge);
\draw[->,myDouble] (p1) -- node[above,midway,sloped] {\scriptsize $x_{N}$}(tmerge);
\draw[->,triple] (p2) -- node[above,midway,sloped] {\scriptsize $x_{N}$}(tmerge);

\draw[->](tmerge) -- node[right,midway] {\scriptsize $x_{cs}$}(pmerged);

\draw[->,myDouble] (tmerge) -- node[above,midway,sloped] {\scriptsize $x_{Id}$}(p1N);
\draw[->] (tmerge) to node[midway, above]{\scriptsize$x_{Id}$} (idpn);
\draw[->,triple] (tmerge) -- node[above,midway,sloped] {\scriptsize $x_{Id}$}(p2N);

\draw[<-](tmerge.south west) -- node[midway,left]{\scriptsize $x_{cs}$}(pmerge.north);

\draw[->](pnew) -- node[midway,right]{\scriptsize$x_{Id}$} (tmerge.south east);

\end{tikzpicture}
}
        \caption{Module $e$-transfer.}
        \label{fig:etransfer}
    \end{subfigure}
    \caption{Module \(e\)-distributing.}
    \label{fig:edist}
\end{figure}

\noindent\textbf{Module \textbf{\(e\)-distributing}.}
This module captures the dynamics of the final distribution of updated tokens.
Let $n$ be the number of objects created by $\tau$ and let $\hat{p}_1,\dots,\hat{p}_n$ be an enumeration (possibly) with repetition of the places in the post-conditions of $\tau$. The module concatenates the sub-module \textbf{$e$-id-creation}, a sequence of sub-modules \textbf{$e$-move$(i)$} for $i\in\{1,\dots,n-1\}$, and the sub-module \textbf{$e$-transfer}. The sub-module \textbf{$e$-id-creation} generates $n$ new identifiers\footnote{This is the only place where we need name creation via $\nu$ variables.} for the encoding of the objects to be created. Then, each \textbf{$e$-move$(i)$} selects one such identifier, say $a$, and moves while renaming to $a$, one by one, some token not in $\Id_{N}^u$ from \textbf{$N$-updated} to the corresponding place in \textbf{$\hat{p}_i$-block}. By taking advantage of concurrency and of a couple of enable/acknowledge places as in the previous modules, each \textbf{$e$-move$(i)$} passes the turn to the next sub-module after a non-deterministic number of movements. 
While passing the turn, \textbf{$e$-move$(i)$} finally moves the identifier $a$ to $\Id_{\hat{p}_i}$, completing the encoding of the next object created by $\tau$ at place $\hat{p}_i$. Finally, the \textbf{$e$-transfer} sub-module creates the last encoding by moving, via channels, all remaining tokens in \textbf{$N$-updated} to \textbf{$\hat{p}_n$-block} while renaming to the last new identifier.

\begin{remark}
    The construction above (or slight modifications) cannot simulate non-conservative EOSs. In general EOS, if the transition $\tau$ in the event $e$ destroys a type $N$, then $e$ can fire only when destroying an empty object, performing a sort of zero-check. Otherwise, the transition cannot fire. Instead, the corresponding module $e$-merging, can fire even when consuming non-empty objects. Thus, this construction works only for cEOSs.
\end{remark}


\newcommand{\phaseencodingeostocnpnstatement}{The configuration encoding $\xi$ in Def.\ref{def:encoding_ceos_to_cnpn} is a phase-encoding, i.e, given an EOS $\os$, two markings $M,M'$, an event $e =\tup{\tau,\theta}$, and mode $(\lambda,\rho)$, $M \rightarrow^{(e,\lambda,\rho)}M'$ iff $\xi(M) \rightarrow^\ast \xi(M')$ in the \GnPN $N$ using the above construction.}
\begin{lemma} \label{thm:phase_encoding_eos_to_cnpn}
\phaseencodingeostocnpnstatement
\end{lemma}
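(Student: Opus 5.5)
We prove the two directions required by the definition of phase encoding: (i) every step $M\rightarrow^{(e,\lambda,\rho)}M'$ of $\os$ yields a phase of $W$ from $\xi(M)$ to $\xi(M')$, and (ii) every finite phase of $W$ starting at an encoding configuration $\xi(M)$ and ending at an encoding configuration $K$ satisfies $K=\xi(M')$ for some $M'$ with a step $M\rightarrow M'$ in $\os$. The control-sequence token moved by $x_{cs}$ is central to both directions. In $\xi(M)$ it sits on $p^{init}$, and every transition of the modules consumes it from one control place and puts it (or its copies) on the next. So a run of $W$ starting at $\xi(M)$ is forced through a single event $e$: first the $e$-merging module, then $e$-updating, then $e$-id-creation, the $e$-move$(i)$ sub-modules in order, and finally $e$-transfer, which returns it to $p^{init}$. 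The first encoding configuration reached after $\xi(M)$ is exactly the one produced by $\tau^{transfer}_e$, since before that some auxiliary place (merged/updated blocks, $p^{new}_e$, control places) is non-empty.

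\textbf{Completeness (i).} Fix $(e,\lambda,\rho)$ enabled on $M$. Fire $\tau^{merge}_e$ with the mode instantiating the variables $x_j^{\hat{p}}$ with the identifiers of the objects of $\lambda$. By Def.~\ref{def:stepCNUPN} the channels sum their contents, so the $N$-merged block holds $\K_{mer}$ of the merged marking $\Pi^2_N(\lambda)$, with a single identifier on $\Id_N^m$. Next fire $\tau^{init}_e$ and then each copy $t_e$, $\theta(t)$ times. This is possible because $\Pi^2_N(\lambda)\geq\prefun_N(\theta(N))$, and the preconditions are taken from the merged block while the postconditions go to the updated block. Firing $\tau^{fin}_e$ then moves the remainder, so the $N$-updated block holds $\Pi^2_N(\lambda)-\prefun_N(\theta(N))+\postfun_N(\theta(N))=\Pi^2_N(\rho)$. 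Enumerate $\rho=\sum_{i=1}^n\tup{\hat{p}_i,m_i}$ and create $n$ fresh identifiers. In $e$-move$(i)$, fire $t_1,t_2,\dots$ exactly $|m_i|$ times, moving the tokens of $m_i$ into the $\hat{p}_i$-block under the $i$-th identifier, and then fire $t^{moved(i)}_e$. Finally $\tau^{transfer}_e$ moves the remaining tokens, which are precisely $m_n$, as required. The result is $\K(M-\lambda+\rho)+\fmset{\delta_{p^{init}}}=\xi(M')$.

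\textbf{Soundness (ii).} Consider a finite phase and read it backwards along the control token. The choice of mode of $\tau^{merge}_e$ determines $\lambda$: it is the objects whose identifiers instantiate the $x_j^{\hat{p}}$, and the token counts on the $\Id$ places guarantee that $\Pi^1(\lambda)=\prefun_{\hat{N}}(\tau)$. Since the phase reaches $p^{init}$, all required $t_e$ firings occurred, so the enabledness inequality holds. For each $i$, the tokens delivered under the $i$-th fresh identifier define $m_i$, and we set $\rho=\sum_i\tup{\hat{p}_i,m_i}$. Each produced object carries exactly one identifier: $t^{moved(i)}_e$ and $\tau^{transfer}_e$ each put one identifier on $\Id_{\hat{p}_i}$, and freshness keeps these identifiers distinct from existing ones. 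Tokens are conserved between the updated block and the targets, so the equation $\Pi^2_N(\rho)=\Pi^2_N(\lambda)-\prefun+\postfun$ follows. The phase ends in an encoding configuration, so all auxiliary places are empty. This excludes stray tokens in the merged block, because $\tau^{fin}_e$ transfers the whole place, and in the updated block, because of the final transfer. Hence $\Phi(e,\lambda,\rho)$ holds and the final configuration is $\xi(M-\lambda+\rho)$.

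\textbf{Main obstacle.} The delicate step is soundness with respect to interleavings and ``wrong'' modes. We must rule out the following: $t_e$ using a tuple other than the merged one (the variable $x_N$ could bind any tuple having tokens in the $N$-merged places); $t_1,t_2$ of $e$-move renaming into an identifier other than the current one held on $p^{rename}_e$; and partially executed copies of $\tau^{init}_e$ leaving residues. The approach is to prove an invariant by induction on the phase: at every intermediate configuration the merged and updated blocks contain at most one tuple, namely the one carrying the unique identifier on $\Id_N^m$ or $\Id_N^u$; $p^{rename}_e$ holds at most one identifier; and the control tokens occupy exactly one stage of the modules. For the multi-type case, we serialize one copy of the construction per type, and the argument repeats per type.
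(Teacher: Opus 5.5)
Your proposal is correct and follows the paper's proof. The forward direction fires $\tau^{merge}_e$, the $e$-updating module, $e$-id-creation, the $e$-move$(i)$ stages and $\tau^{transfer}_e$ in sequence and obtains the same intermediate configurations as the appendix. For the converse, which the paper only calls analogous, your control-token invariant is the natural way to make it rigorous, but it should also record one of two facts: either the move transitions $t_1,t_2$ of a stage $j\neq i$ cannot fire while the $i$-th identifier sits on $p^{rename}_e$, or any token misrouted that way is stuck forever, so such runs never close a finite phase.
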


We provide the proof in the Appendix \cref{app:phaseEncodingProofs}. Thanks to Lemma~\ref{thm:phase_encoding_eos_to_cnpn}, \(\xi\) is a phase encoding. Now, we will show that \(\xi\) is \(f\)-perfect.

\subsection{Perfect cEOS to \texorpdfstring{\nPN}{nuPN} encoding} \label{subsec:f_perfect_ceos_to_cnpn}
\begin{lemma}
    The configuration encoding $\xi$ in Def.\ref{def:encoding_ceos_to_cnpn} is $f$-perfect, for some polynomial function $f$.
\end{lemma}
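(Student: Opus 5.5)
We check the four properties (embedding, uniformity, phase-finiteness, $f$-boundedness) separately, following the \nPN-to-cEOS proof. We first record a structural invariant of every configuration $H$ reachable in $W$ from an encoding configuration $\xi(M)$. There is exactly one control-sequence tuple (the one moved by $x_{cs}$). The only exception is the bounded number of copies temporarily created by $\tau^{init}_e$ and $t_e$ (the $2x_{cs}$ arcs), and these are all consumed again before the control returns to $p^{init}$. Moreover, every other tuple lives entirely inside a single block: a $\hat{p}$-block, $N$-merged, $N$-updated, or the $p^{new}_e$/$p^{rename}_e$ places. In $H$, the control token sits on $p^{init}$ iff all auxiliary places are empty. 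This should follow by inspecting the modules: each transition moves the $x_{cs}$ token along the chain $p^{init}\to p^{merged}_e\to p^{fire}_{e,t}/p^{select}_e\to p^{fin}_e\to p^{move(i)}_e\to\dots\to p^{transfer}_e\to p^{init}$. Also, $\tau^{fin}_e$ and $\tau^{transfer}_e$ empty $N$-merged and $N$-updated via channels.

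\textbf{Embedding.} Monotonicity is immediate, since adding an object $\tup{\hat{p},m'}$ or inner tokens to $M$ adds a tuple $\K(\tup{\hat{p},m'})$ or increases a tuple of $\K(M)$. For the converse, assume $\xi(M)\preceq\xi(M')$ via an injection $e$. Each $\K(c)$ has exactly one $\Id$ token, located in $\Id_{\hat{p}}$. Hence $e$ maps the $p^{init}$ tuple to the $p^{init}$ tuple, and each $\K(\tup{\hat{p},m})$ to some $\K(\tup{\hat{p},m''})$ with $m\sqleq m''$. This yields the required injection witnessing $M\leq_f M'$.

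\textbf{Uniformity.} Let $H\succeq\xi(K)$ be reachable from an encoding configuration. The tuple dominating $\delta_{p^{init}}$ is the control tuple, so by the invariant the control is at $p^{init}$ and all auxiliary places are empty. Every other tuple of $H$ then lies in a $\hat{p}$-block with exactly one $\Id_{\hat{p}}$ token. This is the step we expect to be the main obstacle: we must show that no reachable tuple ever carries two $\Id$ tokens or zero $\Id$ tokens while sitting in a $\hat{p}$-block. We argue this by induction along runs, using the fact that merging collapses the consumed $\Id$'s into one, and that $e$-move and $e$-transfer always deposit exactly one fresh identifier (taken from $p^{new}_e$) into $\Id_{\hat{p}_i}$. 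The tokens moved by $t_1,t_2$ into a $\hat{p}$-block carry the name $x_{Id}$, which equals the name later placed in $\Id_{\hat{p}_i}$ by $t^{moved(i)}_e$. Hence $H=\K(M'')+\fmset{\delta_{p^{init}}}=\xi(M'')$ is an encoding configuration.

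\textbf{Phase-finiteness and $f$-boundedness.} A phase begins when $\tau^{merge}_e$ consumes the control token and ends when $\tau^{transfer}_e$ returns it. Every non-control transition except $t_1,t_2$ in $e$-move$(i)$ fires a bounded number of times per phase: $t_e$ fires exactly $\theta(t)$ times, and $\tau^{init}_e$, $\tau^{fin}_e$, and the id/move/transfer transitions fire once. The transitions $t_1,t_2$ each move one token out of $N$-updated. After the merge, no transition adds tokens to $N$-updated except $t_e$ and $\tau^{fin}_e$, a bounded number of times, so $t_1,t_2$ can fire at most $\norm{\xi(K)}+c$ times, where $c$ bounds the production of $\theta$. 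Every phase is therefore finite. The norm never grows by more than a polynomial: channels preserve token counts, merging only deletes $\Id$ tokens, and the only creations are the at most $n+\sum_t\theta(t)\cdot\max\norm{\postfun(t)}$ tokens from fresh identifiers and internal postconditions. This gives $\norm{H}\leq\norm{\xi(K)}+c'$, with $c'$ polynomial in the size of $\os$, so $f(x)=x+c'$ works.
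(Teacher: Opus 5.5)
Your proposal is correct and follows the paper's proof. Both check the same four properties, and both get phase-finiteness from the fact that every module transition fires a bounded number of times except $t_1,t_2$, whose firings are bounded by the token count. You are more precise than the paper in three places. Your reachability invariant (one control tuple, every other tuple confined to a single $\hat{p}$-block with one $\Id_{\hat{p}}$ token) justifies what the paper only asserts in its embedding and uniformity steps. Your additive bound $f(x)=x+c'$ is also sharper than the paper's ``polynomially many firings, each adding linearly many tokens'' estimate.
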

\begin{proof}
If $H$ is reachable from some encoding configuration, then it is reachable via some phase. Thus, if it has a token on $\fmset{\delta_{p^{init}}}$, then $H=\K(M) + \fmset{\delta_{p^{init}}}$.

\paragraph{Embedding} Clearly, if \(H \geq K\) then \(\xi(H) \geq \xi(K)\), since $\xi$ is an monotone. For the other direction, if $\xi(H)\geq \xi(K)$, since both $\xi(H)$ and $\xi(K)$ are encoding configurations with a token \(\fmset{\delta_{p^{init}}}\), they differ only on the tuples on $\K(M)$. Specifically, $\xi(K)=\xi(H)+M'=\xi(H+M')$ for some finite set $M$ of tuples. By injectivity of $\xi$, $K=H+M'\geq H$. Thus, $\xi$ is an embedding.

\paragraph{Uniform}  If $H$ is reachable from an encoding configuration and $H\geq\xi(K)$, then $H=\xi(K)+M=\xi(K+M)$ for some finite set $M$ of tuples. Thus, $\xi$ is uniform.

\paragraph{phase-finite} Say, \(H \rightarrow^t K\) in \(W\). 
We show that \(\xi(H) \rightarrow^\ast \xi(K)\) in polynomial many steps. 
Each phase $\pi$ fires a sequence of five modules.
In module \(e\)-merging, one transition \(\tau^{merge}_{e}\) is fired (see Fig~\ref{fig:emerge}). 
In module \(e\)-updating, three transitions are fired (see Fig~\ref{fig:eupdating}). 
Then, in module \(e\)-distributing, the following three observations can be made: \begin{enumerate*}
    \item Module \(e\)-id-creation, fires only one transition \(t^{id}_e\).
    \item Module \(e\)-move has \(m\) components, where \(m = \sum_x{p \in \hat{P}} \hat{F}(t,p)\) and in each module \(t_3\) and \(t^{moved}_e\) are fired exactly once. The transitions \(t_1\) and \(t_2\) are collectively fired among all the components at most \(k\) times, where \(k\) is the total number of tokens present in \(H\) of \(\os\), i.e., \(|H|\). Hence, there can be at most \(m \times 2 + |H|\) firings in this module.
    \item Module \(e\)-transfer fires \(\tau^{transfer}_e\) only once.
\end{enumerate*}
Clearly, \(\xi\) is phase-finite; in fact, it requires a run of length polynomial to the size of the configuration \(H\) to reach \(\xi(K)\).

\paragraph{$f$-bounded} Since there are polynomial many firings in each phase. And each firing can only add tokens at most linear to the size of the \(\os\), there can only be polynomial many tokens added during a phase. Clearly, \(\xi\) is \(f\)-bounded.

\end{proof}

Thus, by the lemmas in Sec.\ref{sec:phaseencoding}, $\xi$ preserves the instances of coverability, termination, and boundedness. Since, for each cEOS $\os$ and cEOS configuration $M$ the corresponding encoding \GnPN $N$ and encoding configuration $\xi(M)$ can be built in polynomial time. Overall, we have reduced the problems over cEOS to the corresponding problems over \GnPN.
\begin{theorem}\label{thm:ceosGnPN}
    There is a polynomial reduction from cEOS-coverability (respectively, -termination, -boundedness) to \GnPN-coverability (-termination, -boundedness).
\end{theorem}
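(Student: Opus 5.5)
The theorem follows by combining the three lemmas of Sec.~\ref{sec:phaseencoding} with the encoding $\xi$ of Def.~\ref{def:encoding_ceos_to_cnpn}. For a cEOS $\os$, we take the \GnPN $W$ built from the $\hat p$-blocks, the $N$-merged and $N$-updated blocks, and the modules $e$-merging, $e$-updating and $e$-distributing. By Lemma~\ref{thm:phase_encoding_eos_to_cnpn}, $\xi$ is a phase encoding of $\C_\os$ into $\C_W$. By the preceding lemma, it is moreover $f$-perfect for a polynomial $f$, so it is uniform, phase-finite and $f$-bounded.

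We then handle the three problems in turn. For coverability, we apply Lemma~\ref{coverLemmaLabel}: since $\xi$ is a uniform phase encoding, $(\C_\os,\mu_0,\mu_f)$ is a yes-instance exactly when $(\C_W,\xi(\mu_0),\xi(\mu_f))$ is. For termination, we apply Lemma~\ref{terminationLemmaLabel}, which needs only phase-finiteness. For boundedness, we apply Lemma~\ref{boundedLemmaLabel}. This lemma needs the target qonCG to be finitary, which holds for \GnPN qonCGs as remarked after the definition of finitary qonCGs: finitely many tokens yield finitely many configurations up to the identity of data. It also needs $f$-boundedness, which the perfectness lemma supplies.

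It remains to check that the reduction runs in polynomial time. We first normalise $\os$ so that each system transition lies in exactly one event, by duplicating transitions. This is linear in the size of $\Theta$. We then count the size of $W$:
\begin{itemize}
\item one $\hat p$-block per system place, of size $|P_{d(\hat p)}|+1$;
\item one merged block and one updated block per object type;
\item per event, a merging module whose channel labels have length bounded by the arc weights of $\tau$;
\item an updating module with one copy $t_e$ per $t\in\supp(\theta)$, where the multiplicities $\theta(t)$ appear only as arc weights from $\tau^{init}_e$;
\item a distributing module with $n-1$ move sub-modules, where $n$ is the number of objects that $\tau$ produces.
\end{itemize}
Events involving several types are handled by serialising one copy of the construction per type, which costs a factor of at most $|\N|$. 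Finally, $\xi(\mu)=\K(\mu)+\fmset{\delta_{p^{init}}}$ is computed in time linear in $|\mu|$.

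The main obstacle is the size bound. We must make sure that no component is unary in a weight, for example the number of move sub-modules or the length of the channel labels, both of which grow with the arc weights of $\tau$. We would read the arc weights as part of the input's unary size, since a system-net weight of $k$ already denotes $k$ consumed objects in the EOS semantics. If binary encoding were insisted upon, we would instead argue that the standard cEOS input size counts the consumed and produced objects explicitly. With this accounting, every part of $W$ is polynomial, and the theorem follows.
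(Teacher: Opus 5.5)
Your proposal is correct and follows the paper's own argument. The phase-encoding lemma and the $f$-perfectness lemma give exactly the hypotheses of the three transfer lemmas of Sec.~\ref{sec:phaseencoding}, with finitarity of the target for boundedness, and polynomial-time constructibility of $W$ and $\xi(\mu)$ completes the reduction. Your explicit size accounting goes beyond the paper, which leaves it implicit: the number of move sub-modules and the channel-label lengths grow with the system-net arc weights, so polynomiality presupposes unary weights. Your fallback remark for binary weights simply restates that unary assumption rather than removing it.
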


Since \GnPNs are a fragment of Unordered Data Nets, we may obtain upper-bounds on the complexity of problems on cEOSs. However, in the next section we directly obtain completeness for coverability and more informative upper bounds.

\section{From \texorpdfstring{\GnPN}{c-nuPN} to \texorpdfstring{\nPN}{nuPN}} \label{extendedNuPntoNuPn}

We now provide a configuration encoding of arbitrary \GnPN $W=(P,T,F,G)$ into a standard \nPN $N=(P',T',F')$. Unlike the previous constructions, the encoding we provide is not uniform nor phase-finite, in general. This is because $N$ captures the channels of $W$ by cheating, i.e., it moves the tokens one per time, possibly wrongly leaving some of them unmoved, resulting in a form of \textit{broken} configuration. Our technique unfolds in two steps. First, we abruptly remove from $N$ all runs that reach such broken configurations. This returns a qonCG where the encoding is actually a phase-encoding. Second, we show that putting the broken configurations back in the qonCG does not affect the decision problem under study.
In this section we call the qonCGs of $W$ and $N$ respectively $CG_W$ and $CG_N$.

\subsection{\texorpdfstring{\GnPN}{CnuPN} to \texorpdfstring{\nPN}{nuPN} encoding}

\begin{figure}[t]
    \centering
    \begin{subfigure}[b]{.49\textwidth}\centering
    \scalebox{1}{
    \begin{tikzpicture}

\node[place,label={[name=p1Lab]left:\scriptsize $p_1$}](p1)at (0,0){};
\node[place,label={[name=p2Lab]above:\scriptsize $p_2$}](p2)at (0,1.25){};

\node[place,label={[name=p3Lab]above:\scriptsize $p_3$}](p3)at (1,1.25){};

\node[transvert] (t)at (1.5,0){};
\node at (t){\scriptsize $t$};

\node[place,label={[name=p4Lab]above:\scriptsize $p_5$}](p4)at (3,1.25){};
\node[place,label={[name=p5Lab]right:\scriptsize $p_6$}](p5)at (3,0){};

\draw [->,myDouble] (p1) -- node[below,midway,sloped] {\scriptsize $x_1$}(t);

\node [place,label={above:\scriptsize $p_4$}](p6)at(2,1.25){};
\draw[<->] (p6) -- node[midway,below,sloped]{\scriptsize $x_2$}(t);
\node at(p3){\scriptsize $a$};

\draw [->] (p2) -- node[below,midway,sloped] {\scriptsize $x_3$}(t);
\draw [<->] (p3) -- node[below,midway,sloped] {\scriptsize $x_1$}(t);

\draw [->,myDouble] (t) -- node[below,midway,sloped] {\scriptsize $x_2$}(p5);
\draw [->] (t) --node[below,midway,sloped] {\scriptsize $x_3$} (p4);

\node at($(p1)+(0,.1)$){\scriptsize a a};
\node at($(p1)-(0,.1)$){\scriptsize b c};
\node at($(p2)+(0,.1)$){\scriptsize a b};
\node at($(p2)-(0,.1)$){\scriptsize  c c};
\node at($(p6)+(0,.1)$){\scriptsize c c};
\node at($(p6)-(0,.1)$){\scriptsize d};
    \end{tikzpicture}
    }
\caption{}
\label{fig:selectiveTransferPre}
    \end{subfigure}
    \hfill
       \begin{subfigure}[b]{.49\textwidth}\centering
       \scalebox{1}{
    \begin{tikzpicture}

\node[place,label={[name=p1Lab]left:\scriptsize $p_1$}](p1)at (0,0){};
\node[place,label={[name=p2Lab]above:\scriptsize $p_2$}](p2)at (0,1.25){};

\node[place,label={[name=p3Lab]above:\scriptsize $p_3$}](p3)at (1,1.25){};

\node [place,label={above:\scriptsize $p_4$}](p6)at(2,1.25){};
\node at(p3){\scriptsize $a$};
\node[transvert] (t)at (1.5,0){};
\node at (t){\scriptsize $t$};

\node[place,label={[name=p4Lab]above:\scriptsize $p_5$}](p4)at (3,1.25){};
\node[place,label={[name=p5Lab]right:\scriptsize $p_6$}](p5)at (3,0){};

\draw [->,myDouble] (p1) -- node[below,midway,sloped] {\scriptsize $x_1$}(t);
\draw [->] (p2) -- node[below,midway,sloped] {\scriptsize $x_3$}(t);
\draw [<->] (p3) -- node[below,midway,sloped] {\scriptsize $x_1$}(t);
\draw [<->] (p6) -- node[below,midway,sloped] {\scriptsize $x_2$}(t);

\draw [->,myDouble] (t) -- node[below,midway,sloped] {\scriptsize $x_2$}(p5);
\draw [->] (t) --node[below,midway,sloped] {\scriptsize $x_3$} (p4);

\node at (p5){\scriptsize d d};
\node at (p1){\scriptsize b c};
\node at($(p2)+(0,.1)$){\scriptsize a b};
\node at($(p2)-(0,.1)$){\scriptsize  c};
\node at(p4){\scriptsize  c};
\node at($(p6)+(0,.1)$){\scriptsize c c };
\node at($(p6)-(0,.1)$){\scriptsize d};
    \end{tikzpicture}
    }
\caption{}
\label{fig:selectiveTransferPost}
    \end{subfigure}
    \caption{
    Firing of a simiplified \GnPN transition.
    }
    \label{fig:selectiveTransfer}
\end{figure}
Without loss of generality, we assume that the special transitions of $W$ have the simplified form depicted in Fig.\ref{fig:selectiveTransferPre}. In fact, any arbitrary \GnPN transition can be captured by a sequence of such simplified transitions (cfr. Section 3.2 in \cite{OurRP25}).

The \nPN $N$ contains the same places as $W$, with the addition of some extra place necessary to encode stages, \GnPN modes, and a mechanism to designate \textit{broken} configurations. Specifically, let $T=T_{std}\cup T_{spl}$, where $T_{std}$ (respectively, $T_{spl}$) is the set of standard \nPN transitions (special \GnPN transitions, i.e., with at least one channel). Assuming that, for each transition $t$, $\var(t)=\{x_1,\dots,x_{m_t}\}$ for some $m_t$, let $n=\max_{t\in T}\norm{\var(t)}$, we have

\[ P' = P \cup \{\pactive\} \cup \{p_{\nu}\} \cup \{\pinit{t}, \pfire{t} \mid t \in \specialtrans\}\cup\{p_{start}\}\cup\{p_{x_i}\mid i \in \{1,\dots,n\}\}\]

We call \textit{auxiliary token} any token on the \textit{auxiliary places} $p_{start}$,  $p_{t}^{fire}$, and $p_{t}^{rename}$. An \textit{auxiliary tuple} is a tuple with some auxiliary token.
The configuration encoding of a configuration $M$ of $W$ into a configuration $M'$ of $N$ extends $M$ by adding, for each tuple $m\in M$, a token in $p_{active}$, and adds a new token on $p_{start}$. 

\begin{definition}\label{def:cnupnstonupnsencoding}
Given a configuration $M=\fmset{m_1,\dots,m_\ell}$ of $W$, its encoding configuration in $W$ is $\xi(M) =\fmset{m_0',\dots,m_\ell'}$, where $m_0'=\fmset{p_{start}}$ and, for each $i\geq 1$,
\[m_i'(p)=\begin{cases}
m(p) & \text{if }m\in P\\
1 & \text{if }p=p_{active}\\
0 & \text{otherwise}
\end{cases}\]

\end{definition}
Note that $\xi$ is an injective function, i.e., it is a configuration encoding of $CG_W$ into $CG_N$.

\begin{figure}[t]
        \centering
        \scalebox{1}{
        \begin{tikzpicture}

\node[place,label={[name=p2Lab]left:\scriptsize $p_2$}](p2)at (0.4,0.8){};
\node[place,label={[name=p3Lab]above:\scriptsize $p_3$}](p3)at (1.2,1.6){};
\node[place,label={[name=p4Lab]above:\scriptsize $p_4$}](p4)at (3.2,1.6){};
\node[place,label={[name=p5Lab]right:\scriptsize $p_5$}](p5)at (4.0,0.8){};
\node[place,label={[name=px1Lab]right:\scriptsize $p_{x_1}$}](px1)at (4.4,0){};
\node[place,label={[name=px2Lab]right:\scriptsize $p_{x_2}$}](px2)at (4.4,-.8){};
\node[place,label={[name=px3Lab]right:\scriptsize $p_{x_3}$}](px3)at (4.4,-1.6){};
\node[place,label={[name=pnLab]below:\scriptsize $p_\nu$}](pn)at (1.2,-2.2){};
\node[place,label={[name=paLab]below:\scriptsize $\pactive$}](pa)at (3.2,-2.2){};
\node[place,label={[name=pmLab]below:\scriptsize $\pinit{t}$}](pm)at (2.2,-2.2){};

\node[transvert,label={[name=tLab]below left:\scriptsize $t_{mode}$}] (t) at (2.2,0){};

\draw [->] (p2) to node [below, midway, sloped, yshift=2]  (TextNode1) {\scriptsize $x_3$} (t.west);

\draw [<->] (p3) to node [below, sloped]  (TextNode2) {\scriptsize $x_1$} (t.north west);

\draw [<->] (p4) to node [below, sloped]  (TextNode3) {\scriptsize $x_2$} (t.north east);

\draw [->] (t.south) to node [above, midway, sloped, yshift=-2]  (TextNode4) {\scriptsize $x_0$} (pm);

\draw [->] (t.south west) to node [above, midway, sloped, yshift=-2]  (TextNode5) {\scriptsize $\nu$} (pn);

\draw [<->] (t.south east) to node [above, midway, sloped, yshift=-2]  (TextNode6) {\scriptsize $x_3,x_1,x_2$} (pa);

\node[place,label={[name=pstartLab]above:\scriptsize $p_{start}$}](pstart)at (2.2,1.6){};
\draw [<-] (t.north) -- node [above, midway, sloped, yshift=-2] {\scriptsize $x_0$} (pstart) ;
\draw [->] (t.east) -- node [above, midway, sloped, yshift=-2] {\scriptsize $x_3$} (p5) ;
\draw [->] (t.east) -- node [above, midway, sloped, yshift=-2] {\scriptsize $x_1$} (px1) ;
\draw [->] (t.east) -- node [above, midway, sloped, yshift=-2] {\scriptsize $x_2$} (px2) ;
\draw [->] (t.east) -- node [above, midway, sloped, yshift=-2] {\scriptsize $x_3$} (px3) ;

\end{tikzpicture}
        }
        \caption{\(t_{mode}\) is fired to fixe the firing mode and to capture the standard part of $t$.} 
        \label{fig:p1}
\end{figure}
\begin{figure}[t]
        \centering
        \scalebox{1}{
        \begin{tikzpicture}

\node[place,label={[name=p1Lab]left:\scriptsize $p_1$}](p1)at (0,0){};

\node[place,label={[name=p6Lab]right:\scriptsize $p_6$}](p6)at (4.4,0){};

\node[place,label={[name=pmLab]left:\scriptsize $\pinit{t}$}](pm)at (0,-1.0){};
\node[place,label={[name=pfLab]right:\scriptsize $\pfire{t}$}](pf)at (4.4,-1){};

\node[place,label={[name=px1Lab]above:\scriptsize $p_{x_1}$}](px1)at (0,1.6){};
\node[place,label={[name=px2Lab]above:\scriptsize $p_{x_2}$}](px2)at (2.2,1.6){};

\node[transvert] (t) at (2.2,0){};
\node at (t) [below, xshift=15 ,yshift=-3] {\scriptsize $t_{fire}$};

\node[transvert,label={below:$t_{stop}$}] (t1) at (2.2,-1.0){};

\draw [->] (p1) to node [above, midway, sloped, yshift=2]  (TextNode1) {\scriptsize $x_1$} (t.west);

\draw [->] (t.east) to node [above, midway, sloped, yshift=-2]  (TextNode4) {\scriptsize $x_2$} (p6);

\draw [->] (t1.east) to node [above, midway, sloped]  (TextNode4) {\scriptsize $x_0$} (pf);

\draw [<->] (pm) to node [above, sloped]  (TextNode3) {\scriptsize $x_0$} (t.south west);

\draw [->] (pm) to node [below, sloped]  (TextNode3) {\scriptsize $x_0$} (t1.west);

\draw [<->] (t.north west) -- node [above, midway, sloped, yshift=-2] {\scriptsize $x_1$} (px1) ;
\draw [<->] (t.north) -- node [above, midway, sloped, yshift=-2] {\scriptsize $x_2$} (px2) ;

\end{tikzpicture}
        }
        \caption{\(t_{fire}\) can fire several times, non-deterministically stopped by the firing of \(t_{stop}\). This simulates the transfer on the channel one token per time.}
        \label{fig:p2}
\end{figure}
\begin{figure}[t]
        \centering
        \scalebox{1}{
        \begin{tikzpicture}

\node[place,label={[name=pnLab]below:\scriptsize $p_\nu$}](pn)at (1.1,-1.6){};

\node[place,label={[name=pfLab]below:\scriptsize $\pfire{t}$}](pf)at (3.3,-1.6){};
\node[place,label={[name=pmLab]right:\scriptsize $p$}](p')at (3.3,1.6){};

\node[place,label={[name=px1Lab]above:\scriptsize $p_{x_1}$}](px1)at (1.1,1.6){};

\node[transvert,label={right:\scriptsize $t_{rename}^p$}] (t) at (2.2,0){};

\draw [<->] (pn) to node [above, midway, sloped, yshift=2]  (TextNode1) {\scriptsize $x_2$} (t.south west);

\draw [<->] (pf) to node [above, sloped]  (TextNode3) {\scriptsize $x_0$} (t.south east);

\draw [->] (p') to node [below, sloped]  (TextNode3) {\scriptsize $x_1$} (t.north east);

\draw [<-] (p') to[in= 100, out=-160] node [above, sloped]  (TextNode3) {\scriptsize $x_2$} (t.north east);

\draw [<->] (t.north west) -- node [above, midway, sloped, yshift=-2] {\scriptsize $x_1$} (px1) ;

\end{tikzpicture}
        }
        \caption{The transitions \(t_{rename}^p\), for each $p\in P$, can fire several times concurrently, non-deterministically stopped by the firing of $t_{reset}$. At each firing, $t_{rename}^p$ renames the tokens, inside $p$, of the transferred tuple into the new name dictated by the channel, analogously to what is done by $t_{fire}$, but without the transfer. 
        }
        \label{fig:p3}
\end{figure}
\begin{figure}[t]
        \centering
        \scalebox{1}{
        \begin{tikzpicture}

\node[place,label={[name=pnLab]below:\scriptsize $p_\nu$}](pn)at (1.2,-2.2){};
\node[place,label={[name=paLab]below:\scriptsize $\pactive$}](pa)at (3.2,-2.2){};
\node[place,label={[name=pfLab]below:\scriptsize $\pfire{t}$}](pf)at (2.2,-2.2){};

\node[transvert, label={left:$t_{reset}$}] (t) at (2.2,0){};

\draw [->] (pn) to node [below, midway, sloped, yshift=2]  (TextNode1) {\scriptsize $x_3$} (t.south west);

\draw [->] (t.south east) to node [above, midway, sloped, yshift=-2]  (TextNode4) {\scriptsize $x_3$} (pa);

\draw [->] (pf) to node [below, sloped]  (TextNode3) {\scriptsize $x_0$} (t.south);

\draw [<-] (t.south east) to[in =75, out =-30] node [above, midway, sloped, yshift=-2]  (TextNode4) {\scriptsize $x_1$} (pa);

\node[place,label={[name=px1Lab]above:\scriptsize $p_{x_1}$}](px1)at (1.1,1.6){};
\node[place,label={[name=px2Lab]above:\scriptsize $p_{x_2}$}](px2)at (2.2,1.6){};
\node[place,label={[name=px3Lab]above:\scriptsize $p_{x_3}$}](px3)at (3.3,1.6){};
\node[place,label={[name=px3Lab]above:\scriptsize $p_{start}$}](pstart)at (4.4,0){};
\draw [<-] (t.north west) -- node [above, midway, sloped, yshift=-2] {\scriptsize $x_1$} (px1) ;
\draw [<-] (t.north) -- node [above, midway, sloped, yshift=-2] {\scriptsize $x_2$} (px2) ;
\draw [<-] (t.north east) -- node [above, midway, sloped, yshift=-2] {\scriptsize $x_3$} (px3) ;
\draw [->] (t.east) -- node [above, midway, sloped, yshift=-2] {\scriptsize $x_0$} (pstart) ;

\end{tikzpicture}
        }
        \caption{\(t_{reset}\) disables the transitions \(t_{rename}^p\), deletes the tokens in the $p_{x_i}$, and moves the auxiliary token, preparing the net for a new phase. It also incorporates in $p_{active}$ the token in $p_{active}$. If some token to be transferred was left over, it is now not tracked by $p_{active}$ and returns a broken tuple.}
        \label{fig:p4}
\end{figure}

The set $T'$ of transitions of $N$ contains the standard transitions $t$ of $W$ along with their pre- and post-conditions, extended with the read (consumption and creation at the same time) of the auxiliary token from $p_{start}$ and of the variables $x\in\var(t)$ from $p_{active}$. Moreover, for each simplified special transition $t\in T_{spec}$, $T'$ captures $t$ via the sequential firing of several transitions. Specifically, for each $p\in P$, $T'$ contains the transitions $t_{mode}$, $t_{fire}$, $t_{stop}$, $t_{rename}^p$, and $t_{reset}$ depicted in Fig.\ref{fig:p1}, Fig.\ref{fig:p2}, Fig.\ref{fig:p3}, and Fig.\ref{fig:p4}. The transitions $t_{fire}$ and $t_{rename}^p$ capture the transfer with renaming performed by the channel by manipulating the tokens one by one through several firings, instead of all at once. However, since these transitions cannot detect the termination of this process, they may leave behind some leftovers, which are signaled by some tuple that do not place any token on $p_{active}$. We call such tuples \textit{broken}.
\begin{definition}
    A tuple $m$ over $P'$ is \textit{broken} if $m$ is also a tuple over $P$. A configuration $M$ of $W$ is \textit{broken} if there is some broken tuple $m\in\support{M}$.
\end{definition}

Because of the pre- and post-conditions of the transitions in $N$, starting from an encoding configuration, once a broken token is produced, it does not interact with the net anymore and cannot be repaired. Also, in each configuration there can only be a single auxiliary token, belonging to only one auxiliary tuple, denoted by $m_{aux}$. If $m_{aux}$ marks $p_{start}$, then $p_{\nu}$ is empty. Finally, a tuple can mark $p_{active}$ with at most one token.

Despite broken configurations, it turns out that $\xi$ is still a phase-encoding. This is because finite phases do not reach broken configurations.
\newcommand{\lemmaCGwCGNphaseencodingStatement}{
The function $\xi$ in Def.\ref{def:cnupnstonupnsencoding} is a phase encoding of $CG_W$ into $CG_N$.
}
\begin{lemma}\label{lemma:CGwCGNphaseencodingStatement}
\lemmaCGwCGNphaseencodingStatement
\end{lemma}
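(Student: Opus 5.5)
We have to establish the two clauses of the definition of phase encoding for $\xi$ of Def.~\ref{def:cnupnstonupnsencoding}: \emph{(i)} every step $M_1\rightarrow M_2$ of $CG_W$ is matched by a phase of $CG_N$ from $\xi(M_1)$ to $\xi(M_2)$; and \emph{(ii)} every finite phase of $CG_N$ projects back to a single step of $CG_W$. We rely on the structural invariants stated just before the lemma. Starting from an encoding configuration, there is exactly one auxiliary token, carried by a unique tuple $m_{aux}$. Every non-auxiliary, non-broken tuple carries exactly one token on $p_{active}$. Broken tuples never interact with the net again. We prove these invariants first, by a straightforward induction on run length, checking each of $t_{mode},t_{fire},t_{stop},t^p_{rename},t_{reset}$ and the extended standard transitions.

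\textbf{Simulation (i).} We split on whether the step of $W$ fires a standard or a special transition.

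A standard transition $t$ with mode $e$ is simulated by firing its copy in $N$ with the same mode, extended with $x_0\mapsto m_{aux}$. This copy reads $p_{start}$ and reads $p_{active}$ for each variable. Hence the next configuration is exactly $\xi(M_2)$, and this one-step run is a phase.

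For a special transition, which we may assume simplified as in Fig.~\ref{fig:selectiveTransferPre}, we fire the following sequence:
\begin{enumerate}
\item $t_{mode}$ with the same mode. This consumes the standard preconditions, produces the standard postconditions, records the mode in the places $p_{x_i}$, and moves the auxiliary token to $\pinit{t}$. For fresh variables, it creates a fresh name on $p_\nu$.
\item $t_{fire}$, exactly as many times as there are tokens on the source place of the channel in the tuple $e(x_1)$, so that the channel empties completely.
\item $t_{stop}$.
\item $t^p_{rename}$, for each place $p$ and each token still carrying the old name, until none remain.
\item $t_{reset}$. This returns the auxiliary token to $p_{start}$, moves the new name from $p_\nu$ into $p_{active}$, and deletes the mode tokens.
\end{enumerate}
No intermediate configuration carries the auxiliary token on $p_{start}$, so none is an encoding configuration. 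The final configuration is $\xi(M_2)$ by the semantics of Def.~\ref{def:stepCNUPN}. Hence this run is a phase.

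\textbf{Soundness (ii).} Take a finite phase $\xi(M_1)\rightarrow\cdots\rightarrow\xi(M_2)$.
\begin{enumerate}
\item The first step decides the type of the phase. It is either a standard copy, which immediately yields an encoding configuration and so gives a standard step of $W$, or a $t_{mode}$.
\item In the latter case, the control order forced by the auxiliary token is $\pinit{t}\to\pfire{t}\to p_{start}$. So the phase has the shape $t_{mode}\,(t_{fire})^{*}\,t_{stop}\,(t^{p}_{rename})^{*}\,t_{reset}$.
\item The mode of $t_{mode}$ gives a candidate mode $e$ for $t$ in $W$, with preconditions satisfied.
\item The key point is to argue that the final configuration being $\xi(M_2)$, hence non-broken, forces every transfer and every renaming to have been completed. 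Suppose some token of the old tuple instantiating $x_1$ were left unmoved or unrenamed. After $t_{reset}$, that tuple's marker on $p_{active}$ is gone, while its remaining tokens lie on places of $P$, so it is a broken tuple. An encoding configuration contains no broken tuples, which is a contradiction.
\item Conversely, no extra tokens can appear in the transferred tuple, since each $t_{fire}$ and $t^p_{rename}$ moves one token from the old name to the new one.
\item So the resulting configuration equals the $W$-successor under $e$, and $\xi^{-1}(\start)\rightarrow\xi^{-1}(\finish)$ is a step of $W$.
\end{enumerate}

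\textbf{Expected obstacle.} The delicate part is the exact bookkeeping of names in steps 4 and 5 of soundness. We must verify that $t_{reset}$ consumes $p_{active}$ for the renamed-away tuple, so that any leftovers in it become broken. We must also check that the interleaving of concurrent $t^p_{rename}$ firings cannot produce a configuration that coincides with an encoding configuration while hiding a leftover, for instance by merging it into another tuple. Injectivity of names, guaranteed by the freshness of $\nu$, rules out such a merge. We would prove this via an invariant relating the tokens carrying the old name, the new name, and the marker on $p_{active}$ throughout the phase.
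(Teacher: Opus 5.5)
Your forward direction is the paper's own. The standard case uses the copy that reads $p_{start}$ and $p_{active}$. The special case uses exactly the run $t_{mode}\,(t_{fire})^{m(p_1)}\,t_{stop}\,(t^p_{rename})^{*}\,t_{reset}$, which the paper writes out as $K_1,\dots,K_5$. For the converse the paper only says ``symmetric''. Your control-token argument is the right way to fill that in: the phase must have this shape, and its end must be non-broken. However, step 4 does not yet prove what you need.

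Non-brokenness of the final configuration only shows that the old tuple $e(x_1)$ has lost all its $P$-tokens. Each of them left through some $t_{fire}$ \emph{or} through some $t^p_{rename}$. These two exits lead to different tuples:
\begin{itemize}
\item $t_{fire}$ sends a $p_1$-token to $p_6$ of $e(x_2)$;
\item $t^p_{rename}$ keeps the token on $p$ but moves it into the fresh $p_\nu$-tuple.
\end{itemize}
Your step 5 blurs the two into ``the new name''. As the construction is worded ($t^p_{rename}$ ``for each $p\in P$''), $t^{p_1}_{rename}$ exists. Start from an encoding configuration in which $e(x_1)$ has two tokens on $p_1$ and one on $p_3$, and fire $t_{mode}\,t_{fire}\,t_{stop}\,t^{p_1}_{rename}\,t^{p_3}_{rename}\,t_{reset}$. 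The run ends in a non-broken configuration $\xi(M')$ in which only one token reached $p_6$ and one still sits on $p_1$. Such an $M'$ is in general not a $W$-successor, so clause (ii) would fail.

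The missing ingredient is the evidently intended restriction that renaming transitions exist only for the places the channel does not drain. Here these are the $p\neq p_1$, i.e.\ the $p$ with $G_t(x_1,x_1)[p]=\delta_p$. With this restriction, a $p_1$-token of $e(x_1)$ can leave that tuple only via $t_{fire}$, hence only into $p_6$ of $e(x_2)$. Every other $P$-token can leave only via $t^p_{rename}$, into the same place of the fresh tuple. Then your broken-tuple argument does yield exactly the $W$-successor. You should state and use this restriction explicitly; the paper's proof does not address it either.

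Two smaller points are also glossed over:
\begin{itemize}
\item After $t_{reset}$ the old tuple $e(x_1)$ remains as the all-zero tuple; this is the ``$+\fmset{}$'' in the paper's $K_5=\xi(K)+\fmset{}$. By the literal definition that tuple is even broken. So both ``the final configuration is $\xi(M_2)$'' and your step 4 need the convention that empty tuples are discarded.
\item Your invariant that every non-auxiliary, non-broken tuple carries a $p_{active}$-token is false mid-phase for the fresh $p_\nu$-tuple. It holds only at encoding configurations, which is all you actually use.
\end{itemize}
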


The proof is in App.\ref{app:phaseEncodingProofs}.

\subsection{Removing broken configurations}
Unfortunately, although $\xi$ is a phase-encoding, and $CG_W$ is finitary, $\xi$ is not uniform nor phase-finite and, thus, we cannot directly apply the lemmas in Sec.\ref{sec:phaseencoding}. However, these properties are regained if we project out broken configurations, obtaining an intermediate qonCG.
\begin{definition}
    The qonCG $CG^W_N$ is the qonCG of $N$ where all broken configurations $K$ and, for each configuration $H$, steps $K\rightarrow H$ and $H\rightarrow K$ have been dropped.
\end{definition}
Note that $CG^W_N$ corresponds to a constrained version of $CG_N$, in which no cheating can happen, i.e., no broken configuration can be reached.
Since encoding configurations are not broken, $\xi$ remains an encoding of the qonCG of $CG_W$ into $CG^W_N$. Moreover, since broken configurations play no role in making $\xi$ a phase encoding of $CG_W$ into $CG_N$, $\xi$ is also a phase encoding of $CG_W$ into $CG^W_N$
\begin{corollary}
    The function $\xi$ in Def.\ref{def:cnupnstonupnsencoding} is a phase-encoding of $CG_W$ into $CG^W_N$.
\end{corollary}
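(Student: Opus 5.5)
We prove this Corollary by transferring the phase-encoding property established in Lemma~\ref{lemma:CGwCGNphaseencodingStatement} from $CG_N$ to its restriction $CG^W_N$. First, $\xi$ is still a configuration encoding into $CG^W_N$. Every encoding configuration $\xi(M)$ consists of the auxiliary tuple $\fmset{p_{start}}$ together with tuples that each carry one token on $p_{active}$. None of these tuples is a tuple over $P$ alone, so $\xi(M)$ is not broken and survives the restriction. Injectivity is inherited from $\xi$ on $CG_N$.

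Next we verify the two clauses of the phase-encoding definition, using the fact that $CG^W_N$ is a subgraph of $CG_N$ with no added steps.

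\emph{Clause 2 (soundness).} Every finite phase of $CG^W_N$ is a run of $CG_N$. Its endpoints are encoding configurations, and its intermediate configurations are non-encoding in $CG_N$ exactly as they are in $CG^W_N$, since the set of encoding configurations is the range of $\xi$ in both graphs. Hence it is also a finite phase of $CG_N$, and Lemma~\ref{lemma:CGwCGNphaseencodingStatement} yields the required step $\xi^{-1}(\start(P))\rightarrow\xi^{-1}(\finish(P))$ in $CG_W$.

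\emph{Clause 1 (completeness).} Given a step $K_1\rightarrow K_2$ of $CG_W$, Lemma~\ref{lemma:CGwCGNphaseencodingStatement} provides a phase $P$ in $CG_N$ from $\xi(K_1)$ to $\xi(K_2)$. We must show that $P$ avoids broken configurations, so that all its configurations and steps belong to $CG^W_N$.

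We argue this from the earlier observation that, starting from an encoding configuration, a broken tuple never interacts with the net again and cannot be repaired. Suppose some configuration of $P$ were broken. Then every later configuration of $P$ would also contain that broken tuple, since no transition consumes it. In particular $\finish(P)=\xi(K_2)$ would contain it. This contradicts the fact that encoding configurations are not broken. Therefore the whole phase $P$ lies in $CG^W_N$, and since the phase notion depends only on which configurations are encoding, $P$ is a phase there too.

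The main obstacle is making the ``broken tuples persist'' claim rigorous. It requires checking that every transition of $N$ consumes tuples only through variables that also read $p_{active}$ or the auxiliary places. For standard transitions this holds because they read $p_{active}$ for each $x\in\var(t)$. For the gadget transitions $t_{mode}$, $t_{fire}$, $t_{rename}^p$ and $t_{reset}$, it holds because they read $p_{active}$, a place $p_{x_i}$, or $p_\nu$. A tuple left with no $p_{active}$ token after $t_{reset}$ marks none of these places, so it is never selected by any mode again and persists unchanged to the end of the run. We would confirm this by a case inspection of Fig.~\ref{fig:p1}--\ref{fig:p4}.
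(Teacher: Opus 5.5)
Your proposal is correct and follows the paper's own reasoning. Encoding configurations are not broken, and finite phases of $CG^W_N$ are finite phases of $CG_N$. The phases supplied by Lemma~\ref{lemma:CGwCGNphaseencodingStatement} never pass through a broken configuration, because a broken tuple can never be selected by a mode again and would survive into the non-broken final encoding configuration. You make explicit the syntactic persistence check that the paper only asserts: every non-fresh variable of every transition of $N$ requires a token on some place outside $P$.
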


Moreover, in this setting, $\xi$ becomes perfect.
\begin{lemma}\label{lemma:perfectChanneltoNu}
    The function $\xi$ in Def.\ref{def:cnupnstonupnsencoding} is an $f$-perfect phase encoding of $CG_W$ into $CG^W_N$.
\end{lemma}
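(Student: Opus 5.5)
The plan is to check the four ingredients of $f$-perfectness (embedding, uniformity, phase-finiteness, $f$-boundedness) one at a time, for $\xi$ viewed as an encoding into $CG^W_N$. The fact that $\xi$ is a phase encoding is already given by the preceding corollary. Throughout I will rely on the invariants stated before Lemma~\ref{lemma:CGwCGNphaseencodingStatement}. Every configuration reachable from an encoding configuration contains exactly one auxiliary tuple $m_{aux}$. Every non-auxiliary tuple carries at most one token on $\pactive$. In $CG^W_N$ no reachable configuration contains a broken tuple. Consequently, every non-auxiliary tuple in such a configuration has exactly one $\pactive$ token, apart from tuples that only live on $p_\nu$ or on the $p_{x_i}$ during a phase.

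\textbf{Embedding and uniformity.} Monotonicity of $\xi$ is immediate from Def.~\ref{def:cnupnstonupnsencoding}. For the converse, take $\xi(K)\preceq\xi(H)$. The injection witnessing $\preceq$ must send the $p_{start}$ tuple to the unique $p_{start}$ tuple of $\xi(H)$. It must also send each active tuple $m'_i$ to an active tuple, and dropping the $\pactive$ coordinate then yields $K\preceq H$.

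For uniformity, let $H'\succeq\xi(K)$ be reachable in $CG^W_N$ from an encoding configuration. Then $H'$ contains a tuple marking $p_{start}$. By the invariant, $m_{aux}$ is therefore on $p_{start}$, so no phase is in progress: $p_\nu$, the $p_{x_i}$, $\pinit{t}$ and $\pfire{t}$ are all empty. Since $H'$ has no broken tuples, every remaining tuple has exactly one token on $\pactive$ and is otherwise a tuple over $P$. Hence $H'=\xi(H)$ for some $H$. The delicate point is that $m_{aux}$ carries only the single $p_{start}$ token, and I would argue this from the shape of $t_{mode}$ and $t_{reset}$.

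\textbf{Phase-finiteness and $f$-boundedness.} This is where removing broken configurations is essential, and I expect it to be the main obstacle. A phase starts either with a standard transition, which returns immediately to an encoding configuration, or with $t_{mode}$. After $t_{mode}$, each firing of $t_{fire}$ strictly decreases the number of tokens on the pre-channel place $p_1$ of the tuple bound to $x_1$. Each firing of $t^p_{rename}$ strictly decreases the number of tokens in $p$ of that old tuple. Both counts are bounded by $\norm{\xi(K)}$, so the number of steps is bounded by $2+\norm{\xi(K)}\cdot(\abs{P}+1)$, provided the run terminates with $t_{reset}$.

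The remaining danger is an infinite phase, for instance one stuck after $t_{stop}$ or $t_{reset}$, or one that loops without returning to an encoding configuration. Any run that stops the transfer or renaming early and then fires $t_{reset}$ leaves a tuple without $\pactive$, which is broken and hence absent from $CG^W_N$. Since all the above counters are monotone, any infinite run would have to repeat transitions forever that need these counters to be positive, which is impossible. So every phase is finite.

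For $f$-boundedness, I would observe that $t_{fire}$ and $t^p_{rename}$ only move tokens without creating any, while $t_{mode}$ and $t_{reset}$ add a constant $c$ (depending only on $W$) per firing. Every configuration inside a phase therefore has norm at most $\norm{\xi(K)}+2c+\abs{\var(t)}$, which gives a linear, and in particular polynomial, $f$.
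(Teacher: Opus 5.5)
Your proposal is correct and follows essentially the same route as the paper. Uniformity comes from the unique auxiliary tuple sitting on $p_{start}$ plus the absence of broken tuples in $CG^W_N$. Phase-finiteness comes from $t_{fire}$ and $t^p_{rename}$ each strictly depleting the tuple bound to $x_1$, with the configuration after $t_{reset}$ being either encoding or broken and hence excluded. $f$-boundedness comes from having only constantly many token-creating firings per phase. The only slips are cosmetic: there are three control steps $t_{mode}$, $t_{stop}$, $t_{reset}$ rather than two, and $t_{reset}$ actually has a negative token balance; neither affects the argument.
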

\begin{proof}
    \begin{description}
        \item[Embedding] Clearly, $\xi$ is an embedding.
        \item[Uniform] If $H$ is reachable from an encoding configuration and $H\succeq \xi(K)$, then the auxiliary token of $H$ is on $p_{start}$. The other tokens of $H$ can only be on places in $P$ or on $p_{active}$ and each tuple can place at most one token on the latter.
        
        Since $H$ is not broken (there are no broken configurations in $CG^W_N$), for each $m\in\support{H}\setminus\{m_{aux}\}$ and $p\in P$, if $m(p)\neq 0$ then $m(p_{active})=1$. Thus, $H$ is an encoding configuration.
        
        \item[Phase-finite] For each phase $\pi$, either $\pi$ consists in the firing of a single standard \nPN transition of $T$, or, because of the flow of auxiliary tokens moved by the transitions $t_{mode}$, $t_{fire}$, $t_{stop}$, $t_{rename}^p$, and $t_{reset}$, $\pi$ consists at least in the firing from the encoding configuration $\start(\pi)$ of a prefix of a run of the form $\sigma:\start(\pi)\rightarrow^{t_{mode}}K_1\rightarrow^{\sigma_1}K_2\rightarrow^{t_{stop}}K_3\rightarrow^{\sigma_2} K_4 \rightarrow^{t_{reset}} K_5$, where $\sigma_1$ contains only finitely many firings of $t_{fire}$ and $\sigma_2$ contains only finitely many firings of the transitions $t_{rename}^p$, for some special \GnPN $t\in T$. In fact, note that the transitions $t_{fire}$ and $t^p_{rename}$ cannot fire infinitely often: each of their firing removes one token from the tuple marking $p_{x_1}$, while no other transition in the run add tokens back in this tuple. Thus, $\sigma$ is finite. Moreover, no configuration before $K_5$ is an encoding configuration, so that $\pi$ contains at least $\sigma$. However, $K_5$ is obtained after a full simulation cycle of the special transition $t$. Thus, it is either an encoding configuration or a broken configuration. Since $C^W_N$ does not have any broken configuration, $K_5$ has to be the first encoding configuration in $\pi$ after $K_1$. Hence, $\pi$ coincides with $\sigma$, which is finite. 
        
        \item[$f$-bounded] Each firing sequence in $CG^W_N$ outgoing from an encoding configuration and not reaching an encoding configuration has a form as $\sigma$ in the previous paragraph. Only $t_{mode}$ creates new tokens, while the other transitions simply rearrange them. Since the norm of $CG^W_N$ is the same as that in $CG_N$, i.e., it is the token counting function, $f$-boundedness holds if $f(n)=n+\max_{t\in T}\{\norm{\var(t)},post(t),3\}$.
    \end{description}
\end{proof}

Consequently, $\xi$ maps instances of the problems over $W$ into equivalent instances over $CG^W_N$.

\subsection{Reintroducing Broken configurations}
We can now reintroduce the broken configurations obtaining equivalent instances of the various problems over $CG^W_N$. This can be shown by exploiting compatibility\footnote{If $K\geq H\rightarrow^* H'$, then there is some $K'$ such that $K\rightarrow^*K'\geq H$.} of \nPNs and the following lemmas.
\begin{lemma}
    If $\sigma:\xi(K)\rightarrow^*H\geq\xi(K')$ and $H$ is the first broken configuration in $\sigma$, then there is some $K''\geq \xi(K')$ and a run $\sigma':\xi(K)\rightarrow^*\xi(K'')\geq \xi(K')$.
\end{lemma}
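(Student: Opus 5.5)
The idea is to repair the run $\sigma$ locally. Just before the step that creates the broken tuple, I would insert the missing renaming firings, so that the leftovers are absorbed into the target tuple instead of being abandoned. Then I would check that the resulting configuration is an encoding configuration that still covers $\xi(K')$.

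\textbf{Step 1: locate the breaking step.} Since $H$ is the first broken configuration of $\sigma$, the step $H_0\rightarrow H$ entering $H$ is the first one producing a tuple with tokens on $P$ but none on $\pactive$. Inspecting the transitions in Fig.~\ref{fig:p1}--Fig.~\ref{fig:p4}, the only transition that removes a $\pactive$ token from a tuple without emptying that tuple is $t_{reset}$. It consumes the $\pactive$ token of the tuple instantiating $x_1$ (the source of the simplified channel) and re-assigns the $\pactive$ token to the tuple instantiating $x_3$ (created by $\nu$ in $t_{mode}$). Hence the last step is $t_{reset}$ for some special $t$. The unique broken tuple $m_b$ of $H$ is the residual of the $x_1$-tuple, containing tokens of $P$ that $t_{fire}$ and $t^p_{rename}$ failed to move.

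\textbf{Step 2: the broken tuple is irrelevant for covering.} In $H\succeq\xi(K')$, the injection witnessing $\preceq$ maps every tuple of $\xi(K')$ to a tuple of $H$ dominating it. Each tuple of $\xi(K')$ marks either $p_{start}$ or $\pactive$, and $m_b$ marks neither. So $m_b$ is not in the image, and $H-\fmset{m_b}\succeq\xi(K')$. The same reasoning shows that the auxiliary tuple of $H$ marks $p_{start}$.

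\textbf{Step 3: repair the run.} In the configuration $H_0$ just before $t_{reset}$, the auxiliary tuple is on $\pfire{t}$, the $x_1$-tuple still marks $p_{x_1}$, and the $x_3$-tuple marks $p_\nu$. I would therefore fire $t^p_{rename}$ once for each token of $m_b$ on each $p\in P$; these firings are enabled exactly as long as such tokens remain. Each firing moves one token from the $x_1$-tuple to the same place of the $x_3$-tuple and changes nothing else. None of these intermediate configurations is broken, because the $x_1$-tuple keeps its $\pactive$ token until $t_{reset}$. Firing $t_{reset}$ afterwards then yields a configuration $H'$ in which the $x_1$-tuple has become empty and vanishes. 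Moreover $H'$ coincides with $H-\fmset{m_b}$, except that the $x_3$-tuple carries the extra tokens of $m_b$. Hence $H'\succeq H-\fmset{m_b}\succeq\xi(K')$.

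\textbf{Step 4: $H'$ is an encoding configuration.} $H'$ has its auxiliary token on $p_{start}$ and no broken tuple. Every other tuple marks $\pactive$ exactly once, and the empty $p_{x_i}$ carry nothing. By the same argument as in the uniformity part of Lemma~\ref{lemma:perfectChanneltoNu}, $H'=\xi(K'')$ for some $K''$, and the embedding property gives $K''\geq K'$. So $\sigma'$, consisting of the prefix of $\sigma$ up to $H_0$, the inserted renamings, and $t_{reset}$, is the required run, and it avoids broken configurations entirely.

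The main obstacle is the case analysis in Step 1. It must establish that $t_{reset}$ is the only way to create the first broken tuple, and that at $H_0$ the leftovers sit in a tuple still reachable by $t^p_{rename}$, i.e.\ one still marking $p_{x_1}$. This also covers tokens that $t_{fire}$ left on the channel's source place, since $t^{p}_{rename}$ for that place can absorb them too. I would handle it by a simple invariant on reachable configurations: a single auxiliary tuple, and the $p_{x_i}$ marked only between $t_{mode}$ and $t_{reset}$, checked transition by transition.
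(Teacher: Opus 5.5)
Your Steps 1, 2 and 4 are sound. Step 3, however, fails exactly where you anticipated trouble: the tokens of $m_b$ on the channel's source place $p_1$, i.e.\ the leftovers of $t_{fire}$.

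Absorbing them with $t^{p_1}_{rename}$ puts them on $p_1$ of the fresh $\nu$-tuple, which after $t_{reset}$ replaces the $x_1$-tuple. So in $H'$ they are still on the source place, whereas the whole-place transfer of $W$ requires them to end up on $p_6$ of the $x_2$-tuple. Your inequality $H'\succeq\xi(K')$ is correct. But $H'=\xi(K'')$ for a $K''$ obtained by a \emph{partial} transfer, so the last phase of $\sigma'$ need not correspond to any step of $W$. In the extreme case, you fire $t_{stop}$ right after $t_{mode}$ and then rename everything, which ``simulates'' $t$ with the transfer skipped.

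Since $\sigma'$ avoids broken configurations, it is a run of $CG^W_N$. The phase-encoding property of $\xi$ into $CG^W_N$ (the corollary after Lemma~\ref{lemma:CGwCGNphaseencodingStatement}) would then force it to decode into a run of $W$, and it does not. You are exploiting precisely the cheating that the broken-tuple mechanism exists to catch.

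For that lemma, and hence Theorem~\ref{thm:brokenChanneltoNu}, to hold, renaming must not be able to salvage leftovers on the channel's source place; otherwise a premature $t_{stop}$ would never produce a broken tuple. So $t^p_{rename}$ must range only over the other places. Under that reading your move is unavailable. At $H_0$ nothing can drain $p_1$ of the $x_1$-tuple: $t_{fire}$ needs the auxiliary token on $\pinit{t}$, which $t_{stop}$ has already moved to $\pfire{t}$.

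The missing idea is to repair $\sigma$ at two points, as the paper does:
\begin{enumerate}
\item Append the missing $t_{fire}$ firings to $\sigma_1$, just before $t_{stop}$, so that $p_1$ of the $x_1$-tuple is emptied into $p_6$ of the $x_2$-tuple.
\item Insert the missing $t^p_{rename}$ firings before $t_{reset}$.
\end{enumerate}
You must then check that these insertions do not disable the remaining steps of $\sigma$. The extra $t_{fire}$ firings only drain $p_1$ of the $x_1$-tuple, which no later step consumes, and only add tokens to the $x_2$-tuple. After this, your covering argument from Steps 2 and 4 goes through unchanged: the $x_2$-tuple only grows, and $m_b$ was never in the image of the covering injection.
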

\begin{proof}
By the distribution of black tokens in $N$, $\sigma$ can reach a broken configuration only if 
$\sigma:\start(\pi)\rightarrow^{t_{mode}}K_1\rightarrow^{\sigma_1}K_2\rightarrow^{t_{stop}}K_3\rightarrow^{\sigma_2} K_4 \rightarrow^{t_{reset}} K_5$ where $\sigma_1$ contains only finitely many firings of $t_{fire}$ and $\sigma_2$ contains only finitely many firings of the transitions $t_{reset}^p$, but $t_{fire}$ or some $t_{reset}^p$ left some left-overs. Thus, we can extend 
$\sigma$ by adding as many firings as possible of such transitions, resulting in a run of $CG^W_N$ that reaches an encoding configuration $\xi(K'')$. In fact, note that adding these firings do not disable following firings in $\sigma$. Moreover, because of the shape of $t_{fire}$ and $t_{reset}^p$, $\xi(K'')$ retains all non-broken tuples of $H$, i.e., $\xi(K'')$ covers all encoding configurations already covered by $H$, i.e., $\xi(K'')\geq \xi(K')$.
\end{proof}
The next lemma is an immediate consequence of compatibility and the fact that no transition of $W$ can instantiate variables to broken tuples.
\begin{lemma}
Let $H_i$, for $i\in\mathbb{N}$, be broken configurations and $K_0$ be the encoding configuration obtained from $H_0$ by removing the broken tuples. If there is an infinite run $H_0\rightarrow^{*}H_1\rightarrow^{*} \dots$, then there is also a run $K_0\rightarrow^{*}K_1\rightarrow^{*} \dots$ where each $K_i$ is an encoding configuration.
\end{lemma}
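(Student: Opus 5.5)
The plan is to project broken tuples out of every configuration of the run, and to rebuild the run step by step using two facts: a broken tuple never interacts with the net, and \nPNs are compatible.

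Define the operator $\beta$ that deletes from a configuration of $N$ all broken tuples. So $K_0=\beta(H_0)$, and we set $K_i'=\beta(H_i)$ for the later configurations. The construction proceeds in three steps.

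\textbf{Step 1 (locality of broken tuples).} Let $H\rightarrow^{t',e}H'$ be a single step of $N$ along the run. No variable of $t'$ is instantiated to a broken tuple, so $e$ ranges over non-broken tuples only. This holds because every transition of $N$ reads or moves a token of $p_{active}$, $p_{start}$, or an auxiliary place for each of its variables $x$. The fresh variables $\nu$ create non-broken tuples, since they put a token on $p_\nu$ or $p_{active}$. Transitions $t_{fire}$ and $t^p_{rename}$ only add tokens to tuples that hold $p_{x_i}$ or $p_\nu$ tokens, which are themselves non-broken. The only exception is that a $t_{reset}$ firing, or an incomplete transfer, may turn a previously non-broken tuple into a broken one. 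Hence we get $\beta(H)\rightarrow^{t',e}\beta(H')+R$, where $R$ consists of the tuples that became broken. In fact, modulo broken tuples, $\beta(H)\rightarrow\beta(H')$ holds directly, because broken tuples are simply dropped. I will argue this carefully: the step from $\beta(H)$ produces exactly $\beta(H')$ plus the newly broken tuples, and those are discarded by $\beta$. Instead of relying on discarding, I will then use the previous lemma to repair each phase that produces new leftovers. That lemma says the cheating transfer can be completed by firing the remaining $t_{fire}$ and $t^p_{rename}$ steps, so the phase reaches an encoding configuration covering what the broken version covered.

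\textbf{Step 2 (chaining with compatibility).} I build the run inductively. Suppose we have reached an encoding configuration $K_i$ with $K_i\succeq\beta(H_i)$. The segment $H_i\rightarrow^*H_{i+1}$ projects to a segment starting at $\beta(H_i)$. That projected segment is possibly broken again inside a phase, and the previous lemma repairs it into a run of $CG^W_N$ reaching an encoding configuration that covers $\beta(H_{i+1})$. Compatibility of \nPNs then lifts this run from $\beta(H_i)$ to $K_i$, giving $K_i\rightarrow^*K_{i+1}$ with $K_{i+1}$ encoding and $K_{i+1}\succeq\beta(H_{i+1})$. Uniformity, shown in Lemma~\ref{lemma:perfectChanneltoNu}, guarantees that configurations covering encoding ones while reachable from encoding ones are encoding, so each $K_{i+1}$ is indeed an encoding configuration.

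\textbf{Step 3 (infinitude).} To keep the constructed run infinite, I will group the segments at phase boundaries. Only finitely many steps can occur between encoding configurations of a non-broken run, and an infinite run of $N$ must fire infinitely many transitions on non-broken tuples, since broken tuples are inert. Therefore infinitely many segments are nonempty after projection.

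The main obstacle is Step 1's bookkeeping. The repair must not disable later firings: a leftover token that becomes broken can no longer be used, and the repaired run must be shown to only enlarge the non-broken part. I would handle this by proving that completing $t_{fire}$ and $t^p_{rename}$ moves tokens into the designated tuple without removing anything another transition needs, so that the repaired configuration $\succeq$ the original non-broken part, and then invoking compatibility.
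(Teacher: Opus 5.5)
Your ingredients are the ones the paper has in mind. It calls the lemma an immediate consequence of compatibility and of the fact that no transition instantiates a variable with a broken tuple. However, Step~2 does not go through as written.

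\textbf{The gap: you repair first and lift second.} You repair the projected phase starting at $\beta(H_i)$, and only afterwards transport the repaired run by compatibility to $K_i\succeq\beta(H_i)$. Compatibility replays exactly the same firings. So the lifted run fires $t_{fire}$ and the $t^p_{rename}$ only as often as the repaired run did from the smaller configuration.

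After the first repair you will typically have $K_i\neq\beta(H_i)$, because the repair pushes surplus tokens into the tuple instantiating $x_2$ and into the fresh renamed tuple. Any later transfer whose source tuple carries part of this surplus leaves it behind. The next $t_{reset}$ then strips the $p_{active}$ token and creates a broken tuple. Hence $K_i\rightarrow^*K_{i+1}$ is only a run of $CG_N$, and $K_{i+1}$ may be broken.

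\textbf{Uniformity does not close the gap.} The property proved in Lemma~\ref{lemma:perfectChanneltoNu} concerns configurations reachable in $CG^W_N$, i.e.\ along runs that avoid broken configurations. In $CG_N$, a configuration reachable from an encoding one and covering an encoding one can perfectly well be broken; this is exactly why $CG^W_N$ was introduced.

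\textbf{A smaller slip in Step~1.} The claim that $\beta(H)\rightarrow\beta(H')$ holds directly is false. Inertness only yields $\beta(H)\rightarrow\beta(H')+R$, which you then acknowledge.

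\textbf{The fix: lift first, then repair.} Apply the repair to the run you are actually building:
\begin{enumerate}
\item Lift the projected phase $\beta(H_i)\rightarrow^*\beta(H_{i+1})+R$ to $K_i$ by compatibility. This gives a run from the encoding configuration $K_i$ to some configuration $\succeq\beta(H_{i+1})$, and this run can only become broken at its final $t_{reset}$.
\item Apply the preceding repair lemma to this lifted phase. Equivalently, before replaying $t_{stop}$ and $t_{reset}$, fire $t_{fire}$ and the $t^p_{rename}$ until the source tuple has no tokens left on $P$.
\end{enumerate}
The extra firings only add tokens to tuples that already dominate their counterparts. So you obtain an encoding $K_{i+1}\succeq\beta(H_{i+1})$ through a run of $CG^W_N$, with no appeal to uniformity.

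\textbf{Infinitude.} Every phase of $N$ is finite, since each $t_{fire}$ or $t^p_{rename}$ firing consumes a token of the source tuple. Hence the original infinite run crosses infinitely many phase boundaries. Each boundary is matched by a nonempty simulated phase, which also makes your Step~3 counting argument unnecessary.
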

It is now easy to see that any yes-instance $\xi(I)$ of a problem over $CG^W_N$, obtained from an instance $I$ over $CG_W$, is equivalent to the instance $\xi(I)$ over $CG_N$. By putting together this fact with Lemma\ref{lemma:perfectChanneltoNu} and the lemmas in Sec.\ref{sec:phaseencoding} we obtain the following result.
\begin{theorem}\label{thm:brokenChanneltoNu}
    $I$ is a yes instance of \GnPN-coverability (respectively \GnPN-termination, \GnPN-boundedness) if and only if it $\xi(I)$ is a yes instance of the \nPN-coverability (\nPN-termination, \nPN-boundedness).
\end{theorem}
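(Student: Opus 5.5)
The proof goes in two stages: first from $CG_W$ to the pruned graph $CG^W_N$, then from $CG^W_N$ to the full $CG_N$. For the first stage, Lemma~\ref{lemma:perfectChanneltoNu} gives that $\xi$ is an $f$-perfect phase encoding of $CG_W$ into $CG^W_N$, so it is simultaneously uniform, phase-finite and $f$-bounded. Lemma~\ref{coverLemmaLabel} (coverability), Lemma~\ref{terminationLemmaLabel} (termination) and Lemma~\ref{boundedLemmaLabel} (boundedness) then show that $I$ over $CG_W$ and $\xi(I)$ over $CG^W_N$ are equivalent. Lemma~\ref{boundedLemmaLabel} needs finitarity of $CG^W_N$. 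This holds because it is a subgraph of the finitary qonCG of a \nPN, and pruning keeps the token-count norm. It remains to show, problem by problem, that $\xi(I)$ over $CG^W_N$ and $\xi(I)$ over $CG_N$ have the same answer. In each case one direction is trivial, since every run of $CG^W_N$ is also a run of $CG_N$.

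\emph{Coverability.} Suppose $\xi(K_0)\rightarrow^* H\succeq\xi(K_1)$ in $CG_N$ and the run visits a broken configuration. I take the first broken configuration on the run and apply the first lemma of this subsection, which gives a run of $CG^W_N$ to some $\xi(K'')$. I then continue using compatibility of \nPNs. Broken tuples are inert, since no transition can instantiate a variable on a tuple lacking $p_{active}$. So the remaining steps of $\sigma$ can be replayed from a configuration that dominates the non-broken part, and by iterating I reach a non-broken configuration covering $\xi(K_1)$. One point needs care: the target $\xi(K_1)$ is never covered through a broken tuple. This holds because each tuple of $\xi(K_1)$ carries a $p_{active}$ token, and $m_0'$ is on $p_{start}$.

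\emph{Termination.} An infinite run of $CG_N$ from $\xi(K_0)$ is either broken-free, and then it is already a run of $CG^W_N$, or from some point on all its configurations are broken. In the second case I use the second lemma: deleting broken tuples gives an infinite run through encoding configurations. To make this a run of $CG^W_N$, I combine it with the first lemma, which repairs the prefix up to the first broken configuration.

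\emph{Boundedness.} This is the step I expect to be the main obstacle. Unbounded reachability in $CG^W_N$ clearly implies it in $CG_N$. For the converse, suppose infinitely many configurations are reachable in $CG_N$. The $t_{fire}$ and $t^p_{rename}$ loops are finite, and every broken configuration is obtained from a non-broken reachable one by splitting some tuples. So the norms of reachable configurations are bounded by $f$ of the norms of reachable configurations of $CG^W_N$, up to a constant per step. Hence unboundedness in $CG_N$ forces unbounded norms in $CG^W_N$. Combined with finitarity, the two boundedness answers coincide.
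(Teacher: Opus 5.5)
Your two-stage route is the paper's: the $f$-perfect encoding into $CG^W_N$, then the two lemmas on broken configurations. The paper only says the second stage is ``easy to see'', so the problem-by-problem details are yours. The coverability part, with iterated repair-and-replay, is fine; each round pushes the first broken phase strictly later, so the iteration terminates. The boundedness step, however, does not go through as written.

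\textbf{Boundedness.} The claim that every broken configuration is obtained from a reachable non-broken one ``by splitting some tuples'' is false in general. If $t_{fire}$ is stopped early, the leftover tokens stay on the channel's source place inside a dead tuple. In any exact run they would instead sit on the target place of the $x_2$-tuple. After the cheat, the live part follows a lossy run, and the resulting configuration need not be a split of anything reachable in $CG^W_N$. A bound that holds only ``up to a constant per step'' would also be useless, since runs are unboundedly long. What you need is the same iterated repair-and-replay you use for coverability, applied to an arbitrary $H$ reachable in $CG_N$. It yields some $H'$ reachable in $CG^W_N$ with $\norm{H'}=\norm{H}$, for two reasons:
\begin{enumerate}
  \item $N$ is an ordinary \nPN, so each transition changes the token count by a mode-independent amount, and replayed steps therefore have the same effect on the norm.
  \item The inserted $t_{fire}$ and $t^p_{rename}$ firings each consume one token and produce one.
\end{enumerate}
So if $CG^W_N$ is bounded, the norms of configurations reachable in $CG_N$ are uniformly bounded, and finitarity of $CG_N$ concludes.

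\textbf{Termination.} There is a smaller hole at the junction. Let $K_0'$ be the non-broken part of the first broken configuration. The first lemma gives a $CG^W_N$-run ending in $\xi(K'')\succeq K_0'$, but the second lemma gives an infinite run from $K_0'$, not from $\xi(K'')$. Replaying that run from the larger $\xi(K'')$ inside $N$ does not stay in $CG^W_N$. A phase that fires $t_{fire}$ exactly as often as needed from $K_0'$ leaves the extra tokens behind when replayed from $\xi(K'')$, creating new broken tuples. You would then have to repair infinitely often and show that the repaired prefixes stabilize.

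It is simpler to glue in $CG_W$:
\begin{enumerate}
  \item The finite $CG^W_N$-run $\xi(K_0)\rightarrow^*\xi(K'')$ splits into finite phases, giving $K_0\rightarrow^*K''$.
  \item The second lemma's run is broken-free because broken tuples persist. Lemma~\ref{terminationLemmaLabel} then gives an infinite run from $\xi^{-1}(K_0')$ in $CG_W$.
  \item The embedding property gives $K''\succeq\xi^{-1}(K_0')$.
  \item \GnPNs are monotone, since their transfers only move tokens. Hence $K''$, and therefore $K_0$, has an infinite run.
\end{enumerate}
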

Since $N$ and the instances $\xi(I)$ can be built from $W$ and $I$ in polynomial time, we have a reduction of coverability, termination, and boundedness from \GnPNs to \nPNs. 

Overall, for each of these problems $\Pi$, we have a cycle of reductions from \nPNs, to cEOSs, to \GnPNs and back to \nPNs. Thus, the complexity of $\Pi$ is constant over all these variants of PNs. This allows us to obtain the following theorem.
\begin{theorem}
    The complexity of cEOS-coverability, cEOS-termination, and cEOS-boundedness is the same as \nPN-coverability ($F_{\omega2}$-complete), \nPN-termination (non-primitive recursive), and \nPN-boundedness (non-primitive recursive).
\end{theorem}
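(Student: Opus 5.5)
The plan is to close the cycle of reductions in Fig.~\ref{fig:reductions} and transfer both bounds for each problem $\Pi\in\{\text{coverability},\text{termination},\text{boundedness}\}$. The lower bound is already available: Th.~\ref{thm:npnCEOS} gives a polynomial reduction from \nPN-$\Pi$ to cEOS-$\Pi$. Hence cEOS-coverability is $\F_{\omega2}$-hard, and cEOS-termination and cEOS-boundedness are not primitive recursive, as recorded in the corollary of Sec.~\ref{sec:fromNutoCEOS}. For the upper bound, I will compose Th.~\ref{thm:ceosGnPN} (cEOS-$\Pi$ to \GnPN-$\Pi$) with Th.~\ref{thm:brokenChanneltoNu} (\GnPN-$\Pi$ to \nPN-$\Pi$). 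Both reductions are computable in polynomial time, so their composition is also polynomial. It therefore yields a polynomial many-one reduction from cEOS-$\Pi$ to \nPN-$\Pi$.

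The first point to check is that the composition is legitimate. An instance of cEOS-$\Pi$ is mapped to an equivalent instance of \GnPN-$\Pi$ over the encoding configurations. The second map is stated in Th.~\ref{thm:brokenChanneltoNu} for arbitrary instances $I$ of \GnPN-$\Pi$, so it applies to the output of the first map. Equivalence of instances is transitive, so the composed instance is a yes-instance exactly when the original is. I also need the sizes to stay polynomial: the \GnPN built in Sec.~\ref{sec:fromEOS} has size polynomial in the cEOS, and the \nPN of Sec.~\ref{extendedNuPntoNuPn} is polynomial in the \GnPN. One subtlety is the preprocessing into simplified special transitions (cfr. \cite{OurRP25}); I will note that it introduces only polynomially many intermediate transitions and places.

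Next I need to argue that polynomial reductions preserve the complexity classes in question. $\F_{\omega2}$ is closed under primitive-recursive, and in particular polynomial, reductions. Combined with the $\F_{\omega2}$ upper bound for \nPN-coverability \cite{LazicS16}, this gives $\F_{\omega2}$-membership for cEOS-coverability, and with the hardness above, $\F_{\omega2}$-completeness. For termination and boundedness, the two reductions in opposite directions show that cEOS-$\Pi$ and \nPN-$\Pi$ are interreducible in polynomial time. So cEOS-$\Pi$ is decidable, since \nPN-$\Pi$ is decidable by the WSTS theory in \cite{DBLP:journals/tcs/Rosa-VelardoF11}. It is also not primitive recursive, and it inherits whatever upper bound is known for the \nPN problem. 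This is exactly the sense of "the same complexity" claimed in the statement.

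The main obstacle is not the composition but the soundness of Th.~\ref{thm:brokenChanneltoNu}, on which the upper bound rests. In particular, I need that reintroducing broken configurations preserves all three problems. For coverability and termination, the two lemmas of that subsection suffice. Boundedness needs extra care, because broken tuples are inert leftovers and could in principle make the reachable set of the \nPN infinite while the \GnPN's reachable set is finite. I would handle this by noting that the tokens in broken tuples are bounded by the tokens of the preceding non-broken configuration. So if the non-broken reachable configurations of $CG^W_N$ have bounded norm, all reachable configurations of $CG_N$ do too, and finitarity then gives finiteness. If this argument failed, I would weaken the claim to the stated bounds and prove the boundedness result only up to the non-primitive-recursive lower and known \nPN upper bounds.
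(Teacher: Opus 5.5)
Your proposal is correct and matches the paper's argument: the paper also closes the cycle of polynomial reductions \nPN $\to$ cEOS $\to$ \GnPN $\to$ \nPN given by Th.~\ref{thm:npnCEOS}, Th.~\ref{thm:ceosGnPN} and Th.~\ref{thm:brokenChanneltoNu}, and transfers the known \nPN bounds; your remark that $\F_{\omega2}$ is closed under polynomial reductions just makes explicit what the paper leaves implicit. Your extra paragraph re-deriving the boundedness case of Th.~\ref{thm:brokenChanneltoNu} is unnecessary, since you may cite that theorem, and as written it would not work on its own: bounding each broken tuple by the preceding configuration does not bound how many broken tuples accumulate over repeated cheating phases, and the non-broken parts left after cheating are not necessarily reachable in $CG^W_N$, whereas a sound argument is that, by monotonicity, the tokens lost to broken tuples are carried along in a dominating run of $W$ from the initial configuration, so their total is bounded by the bound on $W$'s reachable set.
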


\section{Conclusions}\label{sec:conclusions}

We have inter-reduced cEOS-coverability, cEOS-termination, and cEOS-boundedness to their corresponding problems on \nPNs. This yields an $F_{\omega2}$-completeness for cEOS-coverability and non-primitive recursive lower-bounds for cEOS-termination and cEOS-boundedness. Interestingly, only one of our reduction performs cheating steps, namely the one reducing \GnPNs into \nPNs. This allows us to interpret the nested nets paradigm as cEOS verification in the hierarchy of data nets, specifically in between \nPNs and Unordered Data Nets, thus bridging two apparently orthogonal approaches of PN extensions, at least from the perspective of verification.
In fact, one could use the same reductions to reduce other problems, e.g., which it is already known to be undecidable on both \nPNs and cEOSs. Instead, the reduction from \GnPNs to \nPNs performs cheating steps, in order to capture whole place transfers with standard transitions. Thus, reductions can be obtained only for a restricted class of \textit{coverability-like} problems. However, further results of this kind, would enable a deeper understanding of the computational power of PN nesting. 

In this work we have studied object net systems restricted to two levels. In existing literature~\cite{DBLP:conf/ac/Valk03}, there is only a formal definition of EOSs but there is no formal definition of object systems with nesting depth beyond $2$. Intuitively, the restriction on the nesting depth should in turn restrict the complexity of the verification problems, when compared to object systems with several layers of nesting. 
We hypothesize that given that the nesting levels are finite, we can generalize the reduction from CEOSs to \GnPNs by adding a fixed hierarchy of places in the system nets to again capture nested object via identifiers. This remains an interesting direction to be explored in future works.

\bibliographystyle{fundam}
\bibliography{citations}

\appendix
\section{Phase Encodings Invariance}\label{app:phaseProofs}
We recall the lemmas from Sec.~\ref{sec:phaseencoding} and provide their proofs below:

\begin{lemmanum}{\ref{coverLemmaLabel}}
    \coverLemmaStatement
\end{lemmanum}
\begin{proof}
    If $I=(\C_1,K_0,H)$ is a yes-instance of $F_1$-coverability, then $K_0\rightarrow^\ast K_n \geq H$, for some $n\in \mathbb{N}$ and $K_n\in C_1$. Since $\xi$ is an embedding, $\xi(K_n)\geq \xi(H)$. Moreover, since $\xi$ is a phase-encoding, for each step $K_i\rightarrow K_{i+1}$ in $K_0\rightarrow^\ast K_n$ there is a corresponding phase $P_i=\xi(K_i)\rightarrow^\ast \xi(K_{i+1})$ in $\C_2$. These phases are consistent. Thus, we have a run $\xi(K_0)\rightarrow^* \xi(K_n)$, made of several phases. Hence $\xi(I)$ is a yes instance of $F_2$-coverability.

    Vice-versa, if $\xi(I)=(\C_2,\xi(K_0),\xi(H))$ is a yes-instance of $F_2$-coverability, then there is a path $\xi(K_0)\rightarrow^\ast H' \geq \xi(H)$. Since $\xi$ is uniform, then there is some $K_n$ such that $H'=\xi(K_n)$. Since $\xi$ is an embedding, we have $K_n\geq H$. Moreover, the finite sequence $\xi(K_0)\rightarrow^\ast \xi(K_n)$ is either a single phase or a finite composition of consistent phases $P_1,\dots,P_n$. Since $\xi$ is injective, we have $K_0=\xi^{-1}(\start(P_1))\rightarrow \xi^{-1}(\start(P_2))\rightarrow\dots\rightarrow \xi^{-1}(\start(P_n))\rightarrow \xi^{-1}(\finish(P_n))=K_n$. Summarizing, $K_0\rightarrow^\ast K_n\geq H$, i.e., $I$ is a yes-instance of $F_1$-coverability.
\end{proof}

\begin{lemmanum}{\ref{terminationLemmaLabel}}
    \terminationLemmaStatement
\end{lemmanum}
\begin{proof}
    If $I=(\C_1,K_0)$ is a yes-instance of $F_1$-termination, then there is an infinite run $K_0\rightarrow K_1\rightarrow \dots$ in $\C_1$. Thus, there is an infinite run $\xi(K_0)\rightarrow^\ast \xi(K_1)\rightarrow^\ast \dots$, where, for each $i\in\mathbb{N}$, $\xi(K_i)\rightarrow^* \xi(K_{i+1})$ is a phase. Thus $\xi(I)$ is a yes-instance of $F_2$-termination.

    If $\xi(I)=(\C_2,\xi(K_0))$ is a yes-instance of $F_2$-termination, then there is an infinite run $\xi(K_0)=H_0\rightarrow H_1\rightarrow \dots$ in $\C_2$. Since $\xi$ is phase-finite, each phase is finite. However, the run is infinite. Thus, it contains infinitely many encoding configurations (otherwise it would be an infinite phase), corresponding to infinitely many finite phases. So we can partition the run into an infinite sequence of consistent phases $P_0, P_1, \dots$ and obtain the run $K_0=\xi^{-1}(\start(P_1))\rightarrow \xi^{-1}(\start(P_2))\rightarrow\dots$ in $\C_1$. This run is infinite and starts from $K_0$. Thus, $I$ is a yes-instance of $F_1$-termination.
\end{proof}

\begin{lemmanum}{\ref{boundedLemmaLabel}}
    \boundedLemmaStatement
\end{lemmanum}

\begin{proof}
        If $I=(\C_1,K_0)$ is a yes-instance of $F_1$-boundedness, then the set of encoding configurations reachable from $\xi(K_0)$ in $\C_2$ is finite. Thus, the norms of encoding reachable configurations in $\C_2$ from $\xi(K_0)$ is bounded by some fixed number $n_{K_0}$.
        Let $H$ be a non-encoding configuration of $\C_2$ reachable from $\xi(K_0)$, via some run $\sigma:\xi(K_0)\rightarrow^* H$. Let $K$ be the last encoding configuration in $\sigma$. 
        Thus, $\norm{K}\leq n_{K_0}$ and $H$ is reachable from $K$ without stepping on encoding configurations. 
        Since $\xi$ is $f$-bounded, $\norm{H}\leq f(\norm{K})\leq f(n_{K_0})$. 
        Thus, $H\in\bigcup_{i=0}^{f(n_{K_0})} \norm{\bullet}_2^{-1}(i)$, which is a finite union of finite sets, since $\C_2$ is finitary, i.e., it is finite. Thus, both the sets of encoding and of non-encoding configurations reachable from $\xi(K_0)$ are finite. 
        Summarizing, the set of configurations reachable from $\xi(K_0)$ is finite, i.e., $\xi(I)$ is a yes-instance of $F_2$-boundedness.

        Vice-versa, let $\xi(I)=(\C_2,\xi(K_0))$ be a yes-instance of $F_2$-boundedness. Then, the set $\R$ of reachable encoding configurations in $\C_2$ from $\xi(K_0)$ is finite. Since $\xi$ is injective, also $\xi^{-1}(\R)$ is finite. Moreover, for each configuration $K$ reachable from $K_0$ in $\C_1$, we have $K\in \xi^{-1}(\R)$. Thus, the set of configurations reachable from $K_0$ in $\C_1$ is a subset of the finite set $\xi^{-1}(\R)$. Thus, $I$ is a yes-instance of $F_1$-boundedness.
\end{proof}
 
\section{Phase-encoding Proofs}\label{app:phaseEncodingProofs}
This appendix contains the proofs of the theorems stating that the various encodings we exploited are actually phase-encodings.

\begin{lemmanum}{\ref{lemma:nuToCEOSphaseEncoding}}
    \nuToCEOSphaseEncodingStatement
\end{lemmanum}
\begin{proof}
First, we show that for each step in $\D$ there is a corresponding phase in $\os$.
Let $M\rightarrow^{t,e}M'$ be a step of $\D$, for some configurations $M,M'\in C_{\D}$, transition $t\in T$ and \nPN mode $e$.
We assume $\var(t)=\{x_1,\dots,x_n,\nu\}$; in case $\nu\notin\var(t)$, the argument is analogous. Note that, by \nPN semantics, $M=\fmset{m_1,\dots,m_n}+M''$ for some tuples $m_i$ and configuration $M''$, and that, for each $x_i\in\var(t)$, the tuple $e(x_i)$ enables $t_{x_i}$. Thus, we have 

\[\xi(M)=\sum_{i=1}^n\tup{sim,m_i}+\sum_{m\in M''}\tup{sim,m}+\tup{selectTran,\varepsilon}\]

Consequently, the system autonomous event $t^{select}_{x_1}$ is enabled under a cEOS mode that moves an object with internal marking $m_i=e(x_i)$ from $sim$ to $t^{selected}_{x_1}$ and $\blacksquare$ to $select^{t}_{x_2}$. Now, only the event $t^{select}_{x_2}$ is enabled. We can iterate this argument so as to fire, in sequence, all events $t^{select}_{x_i}$ for $x_i\in\var(t)$ and $t^{select}_\nu$, obtaining the configuration

\[\sum_{m\in M''}\tup{sim,m}+\sum_{i=1}^n(\tup{t^{selected}_{x_i},m_i}+\tup{t^{run}_{x_i},\blacksquare})+\tup{t^{run}_\nu,\blacksquare}\]

Now, because of the distribution of the $\blacksquare$ tokens and because each $M_i=e(x_i)$ enables $t_{x_i}$, each event $t^{fire}_{x_i}\tup{t_{x_i}}$ is enabled with a mode $(\lambda,\rho)$ where $\lambda=\tup{t^{selected}_{x_i},m_i}+\tup{t^{run}_{x_i},\varepsilon}$. These events fire concurrently; no other event can get enabled before the last $t^{fire}_{x_i}\tup{t_{x_i}}$ has fired because of the distribution of the $\blacksquare$ tokens. Overall this concurrent firing reaches the configuration 
\[\sum_{m\in M''}\tup{sim,m}+
\sum_{i=1}^n(\tup{sim,m_i - pre(t_{x_i})+post(t_{x_i})}
-\tup{selectTran,\blacksquare}+\tup{t^{done},(n+1)t^{report}}\]

 where only $t^{done}$ is enabled. In turn, the firing of the event $t^{done}$ reaches
\[ \sum_{m\in M''}\tup{sim,m}+\xi(M')=\xi (M')\]
Thus, we have a finite phase $\xi(M)\rightarrow^*\xi(M')$.

We now show that for each phase in $\os$ there is a corresponding step in $\D$. If there is a phase $\pi$ such that $\xi({M})\rightarrow\*\xi(M'')$, then we can organize $\pi$ into three blocks: 
\begin{inparaenum}[\itshape (1)]
\item A prefix run that amounts to the firing of the $t^\text{select}_x$ transitions, reaching a configuration 
\[\xi(M)-\sum_{i=1}^{n} \tup{sim, m_i} -\tup{selectTran,\varepsilon}+ \sum_{i=1}^{n} \tup{t^{selected}_{x_i}, m_i}+ \sum_{i=1}^{n} \tup{t^{run}_{x_i}, m_i} \]
where $n=|\var(t)|$.
\item An intermediate run $\sigma$ that amounts to the firing of the transitions $t^{\text{fire}}_{x}$. Since in this run each $t^{fire}_{x_i}$ fires, we have that each $m_i$ enables $t_i$. Assuming $\nu\in\var(t)$, it reaches a configuration

\begin{align*}
    \xi(M)-\sum_{i=1}^{n} \tup{sim, m_i} -\tup{selectTran,\varepsilon}+ 
    \sum_{i=1}^{n} \tup{sim, m_i-pre(t_{x_i})+post(t_{x_i})} + \\
    \tup{sim,post(t_{\nu})}+\tup{t^{report},(n+1)\varepsilon}
\end{align*}

If $\nu\not\in\var(t)$, the last addend has to be dropped. 
\item the firing of $t^{\text{report}}$, for $x\in\var(t)$.
\end{inparaenum}
Overall, taking into account the definition of $\xi$, we have
\begin{align*}
 \xi(M'')=& \xi(M)-\sum_{i=1}^{n} \tup{sim, m_i} + \sum_{i=1}^{n} \tup{sim, m_i-pre(t_{x_i})+post(t_{x_i})} + \tup{sim,post(t_{\nu})}\\
 =&\xi(M-\sum_{i=1}^{n} m_i+ m_i-pre(t_{x_i})+post(t_{x_i}))
\end{align*}
Thus, by injectivity of $\xi$, $M\rightarrow^{t,e}M''$ where $e$ is the \nPN mode such that $e(x_i)=m_i$.
\end{proof}

\begin{lemmanum}{\ref{thm:phase_encoding_eos_to_cnpn}}
\phaseencodingeostocnpnstatement
\end{lemmanum}    
\begin{proof}
Given an \(\os = (\hat{N},\N,d,\Theta)\), let the event $e=\tup{\tau,\theta}$ is enabled on the configuration $M=\fmset{\tup{\hat{p}_1,m_1}, \tup{\hat{p}_2,m_2}, \cdots, \tup{\hat{p}_{\abs{M}},m_{\abs{M}}}}$ with mode $(\lambda,\rho)$, and \(M'=\fmset{\tup{\hat{p}'_1,m'_1}, \tup{\hat{p}'_2,m'_2}, \cdots, \tup{\hat{p}'_{\abs{M'}},m'_{\abs{M'}}}}\) where $\lambda=\fmset{\tup{\hat{p}_{i_1},m_{i_1}}, \tup{\hat{p}_{i_2},m_{i_2}}, \cdots, \tup{\hat{p}_{i_{\abs{\lambda}}},m_{i_{\abs{\lambda}}}}}$ and $\rho=\fmset{\tup{\hat{p}_{j_1}',m'_{j_1}}, \tup{\hat{p}_{j_2}',m_{j_2}'}, \cdots, \tup{\hat{p}_{j_{\abs{\rho}}'},m_{j_{\abs{\rho}}}'}}$ where each \(i_k \in [\abs{M}]\) and \(j_{k'} \in [\abs{M'}]\). Recall that this proof is about the construction presented in Sec.~\ref{sec:fromEOS}, which in turn assumes that $\lambda$ and $\rho$ deal with a single system net type, i.e., $d(\hat{p}_{i_1}) = \cdots = d(\hat{p}_{i_{\abs{\lambda}}})= d(\hat{p}_{j_1})=\dots=d(\hat{p}'_{j_{\abs{\rho}}})=N$, for some $N\in\N$.

We define an injection \(H: [\abs{\lambda}] \rightarrow \Pi_1(\lambda) \times [\abs{\lambda}]\) that maps each pair of lambda to its system place and position in $\lambda$ when restricted only to that place, i.e., \(H(k)=(\hat{p}_{i_k},\ell)\) where \(\ell =|\{j \mid j\in\{1,\dots,k\}, \hat{p}_{i_j}=\hat{p}_{i_k}\}|\).

The transition $\tau_e^{merge}$ is enabled with mode $\varepsilon_{mer}$ such that :
\begin{itemize}
    \item $\varepsilon_{mer}(x_{cs})=\delta_{p^{init}}$. From now on, we will always associate the variable \(x_{cs}\) to the control sequence tuple.
    \item $\varepsilon_{mer}(x^{\hat{p}}_\ell)=\K(m_k)$ such that $H(k)=(\hat{p},\ell)$.
\end{itemize}
Note that we can write \(M = M'' + \lambda\) and \(M' = M'' + \rho\), for some \(M''\). 
After firing \(\tau_e^{merge}\) using the mode \(\varepsilon_{mer}\) on the configuration \(\xi(M) = \K(M'') + \K(\lambda) + \fmset{\delta_{p^{init}}}\), 
we get the resultant marking 
\[\hat{M}_{merged} = \K(M'') + \fmset{m_{merged}} + \fmset{\delta_{p^{merged}_e}}\]
where \(m_{merged} = \sum\limits_{\substack{<\hat{p},m> \in \lambda}} \sum\limits_{p \in P_{N}} m(p)\delta_{p^{mer}} + \delta_{\Id^m_N} = \K_{mer}(\Pi^2_{N}(\lambda))\). 
This is the configuration obtained after executing the module \textbf{$e$-merging}. 

Now, \(\delta_{p^{merged}_e}\) enables the execution of the \textbf{\(e\)-updating} module. After firing transitions \(\tau^{init}_e\), all the \(t_e\) transitions, and \(\tau^{fin}_e\) sequentially in the module \textbf{\(e\)-updating},  with a unique enabling mode (this mode exists since $(\lambda,\rho)$ is an enabling mode for \(e\)), we get the resultant marking 
\[\hat{M}_{updated} = \K(M'') + \fmset{m^0_{updated}} + \fmset{\delta_{p_e^{fin}}}\]
where 
\(m^0_{updated} = 
\sum\limits_{\substack{\tup{\hat{p}',m'} \in \rho}} \sum\limits_{p \in P_{N}}
m'(p)\delta_{p^{upd}} + \fmset{\delta_{\Id^u_{N}}}= \bar{\K}_{upd}(N,\Pi^2_{N}(\lambda) - \prefun_{N}(\theta(N)) + 
\postfun_{N}(\theta(N))) = 
\bar{\K}_{upd}(N,\Pi^2_{N}(\rho))\) by EOS semantics.

The token in $\delta_{p^{fin}_e}$ ensures that the control flows to the \textbf{$e$-distribution} module, consisting of sub-modules: \textbf{$e$-id-creation}, \textbf{$e$-move(i)} for each \(\tup{\hat{p}'_{i},n'_{i}} \in \rho\) and \textbf{$e$-transfer}, in that order). 

In the \textbf{$e$-id-creation} sub-module, on firing $t_e^{id}$,
a token is placed in $p^{new}_e$ for each new tuple generated, against each object net created by the event \(e\). Hence, we end up with the following resultant marking: 
\[\hat{M}_{created} = \K(M'') + \fmset{m^0_{updated}} + \fmset{c_1,c_2,\cdots,c_{\abs{\rho}}} + \fmset{\delta_{p^{move(1)}_e}}\]
where \(c_i = \delta_{p^{new}_e}\) for each \(i \in [\abs{\rho}]\).

Now, we can execute\textbf{ \(e\)-move($i$)} and \textbf{$e$-transfer} modules sequentially according to each \(r_i = \tup{\hat{p}'_{j_i},m'_{j_i}} \in \rho\). 
Specifically, for each $p\in P_N$, we fire $m'_{j_i}(p)$ times the transition of $e$-\textbf{move}-$(i)$ that consumes from $p^{upd}$.
After firing all the transitions in an \textbf{\(e\)-move($i+1$)} module for \(i \in \{0,\cdots,\abs{\rho}-2\}\), with the mode that instantiates \(x_{N}\) with \(m^i_{updated}\) and \(x_{new}\) with \(c_{i+1}\), we get,
\begin{align*}
    \hat{M}^{i+1}_{moved} =& \hat{M}^{i}_{moved} - \fmset{m^{i}_{updated}} + \fmset{m^{i+1}_{updated}} \\ 
    & - \fmset{c_{i+1}} + \K(r_{i+1})  - \fmset{\delta_{p^{move(i+1)}_e}} + \fmset{\delta_{p^{move(i+2)}_e}} 
\end{align*}
where $\hat{M}^{0}_{moved} = \hat{M}_{created}$, \(m^{i+1}_{updated} = \fmset{m^{i}_{updated} - \K_{upd}(r_{i+1})+\delta_{\Id^u_N}}\) 
and, with a slight abuse of notation, we define \(p^{move(\abs{\rho})}_e = p^{transfer}_e\).
Note that
\(\hat{M}_{moved}^i = \K(M'') + \fmset{m^i_{updated}} + \fmset{\K(r_1),\cdots,\K(r_i)} + \fmset{c_{i+1},\cdots,c_{\abs{\rho}}} + \fmset{\delta_{p^{move(i+1)}_e}}\)

Now, after executing module \textbf{\(e\)-move($\abs{\rho}-1$)} (or module $e$\textbf{-id-creation} if $\abs{\rho}=1$), \(\tau^{transfer}_e\) is enabled. After firing \(\tau^{transfer}_e\), we get the resultant marking:
\begin{align*}
    \hat{M}_{trans} =& \hat{M}^{\abs{\rho}-1}_{moved} - \fmset{m^{\abs{\rho}-1}_{updated}} + \K(r_{\abs{\rho}}) -\fmset{\delta_{p^{transfer}_e}} + \fmset{\delta_{p_{init}}} \\
=&\K(M'') + \fmset{\K(r_1),\cdots, \K(r_{\abs{\rho}})} + \fmset{\delta_{p^{init}}} \\
=&\K(M'') + \K(\rho) + \fmset{\delta_{p^{init}}} = \xi(M')
\end{align*}
where the first equation follows from $m^{\abs{\rho}-1}_{updated} = m^0_{updated} - \sum\limits_{i \in \abs{\rho}-1}(\K_{upd}(r_{i})) = \K_{upd}(N,\Pi^2_{N}(\rho)) - \sum\limits_{i \in \abs{\rho}-1}(\K_{upd}(r_{i})) = \K_{upd}(r_{\abs{\rho}})$. Thus, $\xi(M)\rightarrow^\ast \xi(M')$.

The other direction of the proof is analogous.
\end{proof}

\begin{lemmanum}{\ref{lemma:CGwCGNphaseencodingStatement}}
\lemmaCGwCGNphaseencodingStatement
\end{lemmanum}

\begin{proof}
Given a configuration \(H\) from \(CG_W\), if \(H \rightarrow^{t,e} K\) where \(H = M + \sum\limits_{x \in \X(t)} \fmset{m_{e(x)}}\) and \(t \in T_{std}\) then, 
\[K = M + \out + \sum\limits_{x \in \X(t)} \fmset{m'_{e(x)}}\] where for each \(x \in \X(t)\), \(m'_{e(x)} = m_{e(x)} - F_{x}(P,t) + F_{x}(t,P)\).
Recall that for \(t \in T_{std}\), we have \(t' \in T'\) such that for all \(p \in P\), \(F(t,p) = F'(t',p)\), \(F(p,t) = F'(p,t')\), \(F'(\pactive,t') = \X(t)\), \(F'(t',\pactive) = \X(t) \cup \Upsilon(t)\) and \(F'(p_{start},t') = F'(t',p_{start}) = 1\). 

Now, by \cref{def:cnupnstonupnsencoding}, we have \(\xi(H) = \xi(M) + \fmset{p_{start}} + \sum\limits_{x \in \X(t)} \xi(m_{e'(x)})\) where for \(x \in \X(t)\), the mode \(e'(x) = e(x)\) and \(e'(x_0) = \fmset{p_{start}}\). Now, it is easy to see that \(\xi(H) \rightarrow^{t',e'} K'\) where \[K' = \xi(M) + \xi(\out) + \fmset{p_{start}} + \sum\limits_{x \in \X(t)} \xi(m'_{e'(x)}) = \xi(K)\].

Symmetrically, observe that if \(\xi(H) \rightarrow^{t',e'} \xi(K)\), then, \(H \rightarrow^{t,e} K\). 

Now, we will tackle the case when \(t \in T_{spl}\) as depicted in \cref{fig:selectiveTransferPre}. Let \(H = M + \fmset{m_{e(x_1)},m_{e(x_2)},m_{e(x_3)}}\) be a configuration in \(CG_W\). Let \(H \rightarrow^{t,e} K\) where \[K = M +\fmset{m'_{e(x_1)},m'_{e(x_2)},m'_{e(x_3)}}\] such that 
\(m'_{e(x_1)} = \begin{cases}
    0 &\text{if }p=p_1\\
    m_{e(x_1)[p]} &\text{otherwise}
\end{cases}\),
\(m'_{e(x_2)} = \begin{cases}
    m_{e(x_2)[p]} + m_{e(x_1)[p]} &\text{if }p=p_2\\
    m_{e(x_2)[p]} &\text{otherwise}
\end{cases}\) 
and \(m_{e(x_3)}' = m_{e(x_3)} - F_{x_3}(P,t) + F_{x_3}(t,P)\). 
By \cref{def:cnupnstonupnsencoding}, we have \[\xi(H) = \xi(M) + \fmset{p_{start}} + \fmset{\xi(m_{e_{(x_1)}}),\xi(m_{e{(x_2)}}),\xi(m_{e_{(x_3)}})}\].
Then we show that \(\xi(H) \rightarrow^\ast \xi(K)\) where
\[\xi(K) = \xi(M) + \fmset{p_{start}} + \fmset{\xi(m'_{e(x_1)}),\xi(m'_{e(x_2)}),\xi(m'_{e(x_3)})}\].

We, in fact, show that there is a finite sequence of firings \(\sigma\): \(\xi(H) \rightarrow^{t_{mode},e_1} K_1 \rightarrow^{\sigma_1} K_2 \rightarrow^{t_{stop},e_2} K_3 \rightarrow^{\sigma_2} K_4 \rightarrow^{t_{reset},e_3} K_5\) below:

\begin{itemize}
    \item \(\xi(H) \rightarrow^{t_{mode},e_1} K_1\) where the transition \(t_{mode}\) is specified in \cref{fig:p1}. The mode \(e_1\) is defined as \(e_1(x_i) = \xi(m_{e(x_i)})\) for all \(i \in \{1,2,3\}\) and \(x_0\) selects \(\fmset{p_{start}}\). We get \[K_1 = \xi(M) + \fmset{p_t^{fire}} + \fmset{p_\nu} + \fmset{m_{e(x_1)}^1, m_{e(x_2)}^1, m_{e(x_3)}^1}\] where for all \(i \in \{1,2,3\}\), 
    
    \(m_{e(x_i)}^1[p] = \begin{cases}
        1 &{if } p= p_{x_i}\\
        m_{e(x_i)}^1[p]-1 &\text{if }p= p_2 \text{ and } i=3\\
        m_{e(x_i)}^1[p]+1 &\text{if }p= p_5 \text{ and } i=3\\
        m_{e(x_i)}^1[p] &\text{otherwise}
    \end{cases}\)
    \item \(K_1 \rightarrow^{\sigma_1} K_2 \rightarrow^{t_{stop},e_2} K_3\) where \(\sigma_1\) is a sequence of firings of the transition \(t_{fire}\) followed by the transition \(t_{stop}\) (see Fig.~\ref{fig:p2}). Note that \(t_{fire}\) can be fired at most \(m_{x_1}[p_1]\) times. The mode is fixed thanks to the places \(p_{x_i}\) for \(i \in \{1,2\}\), where \(e(x_i) = m^1_{e(x_i)}\) and \(x_0\) selects \(\fmset{p_{p^{fired}_t}}\). If fired exactly \(m_{x_1}[p_1]\) times, followed by the transition \(t_{stop}\) enforced with the same mode defined for \(t_{fire}\), we get the resultant configuration \[K_3 = \xi(M) + \fmset{p_t^{rename}} + \fmset{p_\nu} + \fmset{m_{e(x_1)}^2, m_{e(x_2)}^2, m_{e(x_3)}^2}\] where 
    \(m_{e(x_1)}^2[p] = \begin{cases}
        0 &\text{if }p=p_1\\
        m_{e(x_1)}^1[p] &\text{otherwise}
    \end{cases}\), 
    \(m_{e(x_2)}^2[p] = \begin{cases}
        m_{e(x_2)}^1[p]+m_{e(x_1)}^1[p] &\text{if }p=p_1\\
        m_{e(x_1)}^1[p] &\text{otherwise}
    \end{cases}\) and \(m_{e(x_3)}^2 = m_{e(x_3)}^1\).
    \item \(K_3 \rightarrow^{\sigma_2} K_4\) where \(\sigma_2\) is a sequence of firings of transitions \(t^p_{rename}\) for each \(p \in P\) (see Fig~\ref{fig:p3}). Note that the number of firings in \(\sigma_2\) is at most \(|m'_{e(x_1)}|\) times. The mode is again fixed by the places \(p_{x_1}\) that forces \(x_1\) to select \(m^2_{e(x_1)}\) and \(p_\nu\) that forces \(x_2\) to select the tuple \(\fmset{p_\nu}\). If fired exactly \(|m'_{e(x_1)}|\) times, we get the configuration \[K_4 = K_3 + \fmset{}\].
    \item \(K_4 \rightarrow^{t_{reset},e_3} K_5\) where \(t_{reset}\) is defined in Fig~\ref{fig:p4} and the mode is again fixed thanks to the places \(p_{x_i}\) that forces \(x_i\) to select the tuple \(m_{e(x_i)}^2\) and \(x_0\) selects \(\fmset{p_{t}^{rename}}\). Firing \(t_{reset}\) results in the configuration \[K_5 = \xi(M) + \fmset{p_{start}} + \fmset{} + \fmset{\xi(m'_{e(x_1)}),\xi(m'_{e(x_2)}),\xi(m'_{e(x_3)})} = \xi(K) + \fmset{}\] since \(\xi(m'_{e(x_i)}) = m^2_{e(x_i)}\).
\end{itemize}
The other direction is symmetric.

\end{proof}

\end{document}